\newtheorem{proposition}{Proposition}
\newtheorem{definition}{Definition}
\newtheorem{assumption}{Assumption}
\def\1{{{\mbox{$\mathds{1}$}}}}
\def\Z{\mathcal{Z}}
\def\U{\mathcal{U}}
\def\P{\text{pr}}
\def\Pt{\text{pr}_\theta}
\def\Ptp{\text{pr}_{\theta^{[s]}}}
\def\Pv {\tilde {\text{pr}}_{\tau,\theta}}
\def\Pvtau {\tilde {\text{pr}}_{\tau}}
\def\Ev{\tilde {E}_{\tau,\theta}}
\def\Evtau{\tilde {E}_{\tau}}
\newcommand{\KL}[2]{\textsc{KL}\{#1 \| #2\}}
\newcommand{\ICL}{\textsc{icl}}
\newcommand{\argmax}{\mathop{\text{\rm Argmax}}}
\newcommand{\argmin}{\mathop{\text{\rm Argmin}}}
\title{A  semiparametric extension of the  stochastic block model for longitudinal networks}
\author{Catherine Matias, Tabea Rebafka and Fanny Villers \\
Sorbonne Universités, Université Pierre  et Marie Curie, Université Paris
Diderot, Centre National de la Recherche Scientifique, 
Laboratoire de  Probabilités   et    Modèles   Aléatoires, 4 place Jussieu, 75252 PARIS Cedex 05, FRANCE. \\
\texttt{\{catherine.matias,tabea.rebafka,fanny.villers\}@upmc.fr}
}
\date{}
\begin{document}
\DeclareGraphicsExtensions{.pdf, .jpg, .jpeg, .png, .gif,.eps}

\thispagestyle{empty}
\maketitle

\begin{abstract}
To model  recurrent interaction events in continuous time, an extension of the stochastic block model is proposed where every individual belongs to a latent group and interactions between two individuals follow a conditional 
inhomogeneous Poisson process with intensity driven by the individuals' latent groups. 
The model is shown to be identifiable and its estimation is based on a semiparametric  variational expectation-maximization algorithm. Two
 versions of the method are developed, using either a nonparametric histogram approach (with an adaptive choice of the partition
size) or kernel intensity estimators.   The number of latent groups can be selected by   an integrated classification likelihood criterion. 
Finally, we demonstrate the performance of our procedure on 
synthetic experiments, analyse two datasets to illustrate the utility  of our approach and comment on competing methods.  
 \end{abstract}

\textbf{Keywords:}
dynamic interactions; 
expectation-maximization algorithm; 
integrated classification likelihood; 
link streams;
longitudinal network; 
semiparametric model; 
variational approximation;   
temporal network. 

\section{Introduction}

The past  few years  have seen  a large increase  in the  interest for
modelling dynamic interactions between individuals. 
Continuous-time information on interactions is often available,
as e.g. email exchanges between employees in a company
~\citep{Enron} or
face-to-face contacts between individuals measured by sensors~\citep{Stehle}, but most models use discrete time.  
Commonly, data are aggregated 
on predefined 
time intervals  to obtain a sequence of snapshots of random graphs. Besides the 
 loss of information induced by aggregation, 
 the specific choice of the time intervals has a direct impact on the results, most often overlooked.  
 Thus,  developing   continuous-time models   
 -- either called \emph{longitudinal
  networks, interaction  event data,  link streams}  or \emph{temporal
  networks} -- is an important research issue.  

Statistical methods for longitudinal networks form a huge corpus, especially in social sciences and we are not  exhaustive here,
see~\cite{Holme_review} for a more complete review. 
It is natural to model temporal event data by  stochastic point processes.  
An important line of research  involves
 continuous-time Markov processes with seminal works on dyad-independent
models \citep{Wasserman_80a,Wasserman_80b} 
or the so-called stochastic actor
oriented models~\citep[e.g.][]{Snijders_vDuijn,snijders2010}. 
In these works interactions  last during some time.
In contrast, here we focus on instantaneous interactions  identified with  time points. 
Furthermore, we are concerned with modelling  dependencies 
between interactions  of pairs of individuals.

The analysis of event data is an old area in statistics~\citep[see e.g.][]{Andersen_book}.
Generally,  the number of interactions of all pairs $(i,j)$ of individuals up to time $t$ are modelled by a multivariate counting process $N(t)=(N_{i,j}(t))_{(i,j)}$. 
\cite{Butts_08} considers time-stamped interactions 
marked by a label representing a behavioural event. 
His model is an  instance of Cox's  multiplicative hazard
model with time-dependent covariates and constant baseline function. 
In the same vein, \cite{Vu_etal11} propose a regression-based modelling of the intensity of  non recurrent interaction
events. They consider two different frameworks: Cox's multiplicative and Aalen's additive hazard rates. 
\cite{Perry_Wolfe_JRSSB} propose another variant of Cox's multiplicative intensity model for recurrent interaction events
where the baseline function is specific to each individual. 
In the above mentioned works a set of statistics is chosen by the user 
that potentially modulate the interactions.
As in any regression framework, the choice of these statistics raises some
issues: increasing their number 
may lead 
to a high-dimensional problem, and interpretation of the results might
be blurred by their possible correlations.

The approaches by~\citeauthor{Butts_08}, \citeauthor{Vu_etal11}, \citeauthor{Perry_Wolfe_JRSSB} and others are based on conditional Poisson
processes characterized  by random  intensities, also known  as doubly
stochastic Poisson processes or Cox processes. 
An instance of the conditional Poisson process is the Hawkes
process, which is a collection of point processes with some background
rate, where each event adds a  nonnegative impulse to the intensity of
all other processes.  
Introducing reciprocating Hawkes to parameterize  edges \cite{Blundel2012}  extend  the Infinite  Relational Model making  all events   co-dependent over time. 
\cite{Cho_etal14} develop  a model for spatial-temporal  networks with
missing information,  based on Hawkes processes  for temporal dynamics
combined with a Gaussian mixture for the spatial dynamics. 
Similarly,~\cite{Linderman14} combine temporal  Hawkes processes with latent distance models for implicit networks that are not directly  observed. 
We also mention the existence of  models associating point processes  with single nodes rather than 
  pairs, see e.g.~\cite{Fox2016} and the references therein.

Clustering individuals based on interaction data is a well-established technique to take into account the
intrinsic heterogeneity and summarize information.   
For discrete-time sequences of graphs, recent
approaches propose  generalizations of the  stochastic block model  to a dynamic 
context~\citep{Yang_etal_ML11,Xu_Hero_IEEE,Corneli,Matias_Miele}. Stochastic
block models posit that all individuals  belong  to 
one out of finitely many groups and given these groups all pairs of interactions are
  independent. We stress  that stochastic block models induce
  more     general     clusterings    than     community     detection
  algorithms. 
  Indeed, clusters are  not necessarily characterized by  intense  within-group   interaction    and  low interaction  frequency   towards other groups. 
Another attempt to use stochastic block models 
for interaction events 
appears in~\cite{DuBois_Aistat} generalizing 
the approach of~\cite{Butts_08} by adding discrete latent variables on the individuals.

We introduce a semiparametric stochastic block model for recurrent interaction events in continuous time, to which we refer as the Poisson process stochastic block model. 
Interactions are modelled by conditional inhomogeneous
Poisson processes,  whose intensities only depend on the latent groups of the interacting individuals. 
In contrast to many other works, we do not rely on a parametric model  where intensities are modulated by predefined network statistics, but intensities are modelled and estimated in a nonparametric way. 
The model parameters are shown to be identifiable.
Our estimation and clustering approach is a semiparametric version of the variational expectation-maximization  algorithm, where the maximization step is replaced  by nonparametric estimators of the intensities.
Semiparametric       generalizations       of      the       classical
expectation-maximization   algorithm  have   been  proposed   in  many
different contexts; see 
e.g. \cite{Bohning,Bordes,Robin_kerfdr} for semiparametric mixtures or \cite{Dannemann} for a semiparametric hidden Markov
model. However,  we are  not aware of  other attempts  to incorporate
nonparametric  estimates  in  a   variational  approximation algorithm. 
We propose two different estimations  of the nonparametric part of the
model: a histogram approach using \cite{patricia} where the partition size
is adaptively chosen and a kernel estimator based on~\cite{Ramlau}.
 With the histogram approach,  an integrated classification likelihood
 criterion  is  proposed  to  select   the  number  of  latent  groups.  
Synthetic experiments and the  analysis of two    datasets illustrate the strengths and    weaknesses of our approach.
The R code is available in the R package \texttt{ppsbm}.

\section{A semiparametric Poisson process stochastic block model}
\subsection{Model}
\label{sec:notation}
We consider the pairwise interactions of $n$ individuals during some time interval $[0,T]$. 
For  notational convenience  we  restrict  our attention  to
directed interactions  without self-interactions. 
The undirected case and  self-interactions are handled similarly. 
The set of all pairs of individuals (i.e. the set of all possible dyads in the graph) is denoted 
$$
\mathcal R=\{(i,j) : i,j=1,\dots,n;\ i\neq j\},
$$
whose cardinality is $r=n(n-1)$. 
We observe the interactions during the time interval $[0,T]$, that is
$$
\mathcal O=\left\{(t_m,i_m,j_m),m=1,\dots,M\right\},
$$ 
where 
$(t_m,i_m,j_m)\in[0,T]\times \mathcal R$ corresponds to the event that
 a (directed) interaction from the $i_m$th individual 
to the $j_m$th individual occurs at time $t_m$. The total (random) number of events 
is $M$.
We assume that  $0<t_1<\dots< t_M<T$, i.e. there is at most one event at a time.

Every individual is assumed to belong to one out of $Q$ groups and the relation between two individuals, that is
the  way  they  interact  with  another,  is  driven  by  their  group
membership.  Let   $Z_1,\dots,Z_n$  be  independent   and  identically
distributed (latent) random variables taking values in $\{1,\dots,Q\}$
with non zero probabilities $ \P(Z_1=q) =\pi_q\ (q=1,\dots,Q)$. 
For the moment,  $Q$ is considered to be fixed and known.  
When no confusion occurs, we also use the notation $Z_i=(Z^{i,1},\dots,Z^{i,Q})$ with $Z^{i,q}\in\{0,1\}$ such that $Z_i$ has multinomial distribution $\mathcal M(1,\pi)$ with $\pi=(\pi_1,\dots,\pi_Q)$.

For every pair $(i,j)\in\mathcal R$ the interactions of individual $i$ to
$j$ conditional on the latent groups  $Z_1,\dots,Z_n$ are modelled by a conditional inhomogeneous Poisson 
process $N_{i,j}(\cdot)$  on $[0,T]$ with intensity  depending only on $Z_i$ and $Z_j$. 
For   $ q, l=1,\dots, Q$ and $(i,j)\in\mathcal R$  the conditional intensity of  $N_{i,j}(\cdot)$ given that $Z_i=q$
and $Z_j=l$ is $\alpha^{(q,l)}(\cdot)$ with corresponding cumulative intensity   
$$
A^{(q,l)} (t)= \int_0^t \alpha^{(q,l)}(u)du, \quad (0\le t \le T). 
$$
The set of observations $\mathcal O$ is a realization of  the multivariate counting process
$\{N_{i,j}(\cdot)\}_{(i,j)\in \mathcal{R}} $ with conditional intensity process $\{\alpha^{(Z_i,Z_j)}(\cdot)\}_{(i,j)\in
  \mathcal{R}} $. 
The process $N_{i,j}$ is not a Poisson process, but a  counting process with intensity $\sum_{q=1}^Q\sum_{l=1}^Q\pi_q\pi_l
  \alpha^{(q,l)}$. 
We denote $\theta =(\pi,\alpha)$ the infinite-dimensional parameter of
our model and $\Pt$ the  Poisson process stochastic block model distribution of the multivariate
counting process $\{N_{i,j}(\cdot)\}_{(i,j)\in \mathcal{R}}$. An 
extension of this model that specifically accounts for sparse interactions
  processes  is given  in the  Supplementary  Material.

\subsection{Identifiability}
\label{sec:ident}

Identifiability of the parameter $\theta$ corresponds to
  injectivity of the mapping $\theta  \mapsto \Pt$ and may be obtained
  at best up to label switching, as defined below.
We denote $\mathfrak{S}_Q$ the set of permutations of $\{1,\dots, Q\}$. 

\begin{definition}[Identifiability up to label switching]
  The parameter $\theta=(\pi, \alpha)$ of a Poisson process stochastic block model is identifiable on $[0,T]$ up to label switching if 
  for   all   $\theta$   and   $\tilde\theta$   such   that   $\Pt   =
  \text{pr}_{\tilde \theta}$ there exists a permutation $\sigma \in \mathfrak{S}_Q$ such that 
\begin{equation*}
 \pi_q =\tilde \pi_{\sigma(q)} , \quad
\alpha^{(q,l)} =  \tilde \alpha^{(\sigma(q),\sigma(l))}  \text{ almost
  everywhere on } [0,T], \quad (q, l=1,\dots, Q).
\end{equation*}
\end{definition}

Identifiability (up to label switching) is ensured in the very general setting where the intensities $\alpha^{(q,l)}$
are not equal almost everywhere, that is, they may be identical on at most  subsets of $[0,T]$.

\begin{assumption}
  \label{hyp:ident}
In the directed setup (resp. undirected), the set of intensities 
$\{\alpha^{(q,l)}\}_{q, l=1,\dots, Q}$ contains exactly $Q^2$ (resp. $Q(Q+1)/2$) distinct functions. 
\end{assumption}

\begin{proposition}
\label{prop:ident}
  Under        Assumption~\ref{hyp:ident},        the        parameter
  $\theta=(\pi,\alpha)$  is  identifiable  on  $[0,T]$ up  to  label
  switching  from the Poisson process stochastic block model 
  distribution $\Pt$, as  soon as $n\ge 3$.
\end{proposition}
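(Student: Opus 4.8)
The plan is to reduce the infinite-dimensional problem to the identifiability of a finite stochastic block model, for which a three-node argument is available, and then to lift the conclusion back to the intensity functions. Throughout, fix two parameters $\theta=(\pi,\alpha)$ and $\tilde\theta=(\tilde\pi,\tilde\alpha)$ with $\Pt=\text{pr}_{\tilde\theta}$; the goal is to produce a single permutation $\sigma\in\mathfrak{S}_Q$ realising the required matching.

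\emph{Reduction to a finite model.} For a partition $0=s_0<s_1<\dots<s_D=T$, associate with each process $N_{i,j}$ the vector of increments $Y_{i,j}=(N_{i,j}(s_d)-N_{i,j}(s_{d-1}))_{d=1,\dots,D}$. Conditionally on $Z_i=q,Z_j=l$, the coordinates of $Y_{i,j}$ are independent Poisson variables with means $A^{(q,l)}(s_d)-A^{(q,l)}(s_{d-1})$, so the law of $Y_{i,j}$ is a product-Poisson distribution $\nu^{(q,l)}$ determined by the integrated intensities over the partition. Since two distinct $\mathbb L^1$ functions must differ in integral over some subinterval, Assumption~\ref{hyp:ident} guarantees that for a fine enough partition the $Q^2$ emission laws $\{\nu^{(q,l)}\}$ are pairwise distinct. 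The family $\{Y_{i,j}\}$ is then distributed exactly as a finite stochastic block model with group proportions $\pi$ and distinct connectivity distributions $\nu^{(q,l)}$, whose law is determined by $\Pt$. This reduces the claim to the identifiability, up to label switching, of a finite stochastic block model with pairwise distinct emission laws from $n\ge 3$ nodes.

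\emph{One-pair building block.} It is instructive to record what a single pair yields. The marginal law of $Y_{1,2}$ is the finite mixture $\sum_{q,l}\pi_q\pi_l\,\nu^{(q,l)}$. Writing its multivariate probability generating function and substituting $z_d=1-s_d$ turns it into a sum of exponentials $\sum_{q,l}\pi_q\pi_l\exp(-\langle s,\mu^{(q,l)}\rangle)$ in a dummy vector $s$, where $\mu^{(q,l)}$ is the mean vector of $\nu^{(q,l)}$. Distinct exponents yield linearly independent exponentials, so this analytic function recovers the unordered collection $\{(\pi_q\pi_l,\mu^{(q,l)})\}_{q,l}$. A single pair, however, only delivers the products $\pi_q\pi_l$ and an unlabelled list of the $Q^2$ cells; it can neither factor the weights into marginals nor restore the $Q\times Q$ grid. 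This is exactly the information the multi-node dependence must supply, and the reason $n=2$ is insufficient.

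\emph{Combinatorial core (the main obstacle).} The crux is to promote the one-pair information to the full grid using three nodes $1,2,3$, which is where $n\ge 3$ is essential. Conditionally on the triple $(Z_1,Z_2,Z_3)$ the six ordered-pair processes among these nodes are independent, and conditioning on a single label, say $Z_1=q$, makes the pairs attached to nodes $2$ and $3$ conditionally independent while sharing the same mixing law over the neighbour's label. Exploiting this triangle of conditional independences together with the pairwise distinctness of the emission laws, one shows by a Kruskal-type linear-independence argument that the joint law over the triangle forces a unique factorisation of the cell weights $\pi_q\pi_l$ into a common marginal $\pi$ and a unique assignment of emission laws to grid cells, up to one global permutation $\sigma\in\mathfrak{S}_Q$. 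This is precisely the identifiability statement for the finite stochastic block model; I expect verifying the relevant rank conditions and checking that three nodes (rather than four) already suffice to be the delicate part, and it may be carried out directly or by appealing to the known identifiability of finite stochastic block models with distinct connectivity distributions for $n\ge 3$.

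\emph{Lifting back to the intensities.} Finally, apply the previous step along a nested sequence of partitions. Each partition returns $\pi$ uniquely, hence fixes one and the same permutation $\sigma$ across all partitions, together with the integrated intensities $A^{(q,l)}(s_d)$ up to $\sigma$, that is $A^{(q,l)}(s_d)=\tilde A^{(\sigma(q),\sigma(l))}(s_d)$. Letting the mesh tend to zero identifies the cumulative intensities $A^{(q,l)}=\tilde A^{(\sigma(q),\sigma(l))}$ on a dense set of times, and by monotonicity on all of $[0,T]$, whence $\alpha^{(q,l)}=\tilde\alpha^{(\sigma(q),\sigma(l))}$ almost everywhere. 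Combined with $\pi_q=\tilde\pi_{\sigma(q)}$, this is identifiability up to label switching, valid as soon as $n\ge 3$.
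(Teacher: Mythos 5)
Your overall architecture (discretise time, reduce to a finite stochastic block model with product-Poisson edge laws, then lift along refining partitions) is a legitimate alternative to the paper's route, which never discretises: the paper instead invokes the bijection between laws of random measures and laws of the Cox processes they direct (Daley--Vere-Jones, Prop.~6.2.II), applied first to a single pair and then to the trivariate process $(N_{i,j},N_{i,k},N_{j,k})$, so that the whole problem becomes identification of a finite mixing measure over triplets of cumulative intensities. However, your proposal has a genuine gap precisely where the content of the proposition lies: the ``combinatorial core'' is asserted, not proved. You state that a ``Kruskal-type linear-independence argument'', or ``the known identifiability of finite stochastic block models with distinct connectivity distributions for $n\ge 3$'', forces the unique factorisation of the cell weights and the unique grid assignment up to one global permutation, and you yourself flag that checking that three nodes suffice is the delicate part. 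That step is exactly what the paper's proof carries out: from the trivariate mixing measure $\sum_{q,l,m}\pi_q\pi_l\pi_m\,\delta_{(A^{(q,l)},A^{(q,m)},A^{(l,m)})}$ it isolates the atoms with three identical components to identify $\{A^{(q,q)}\}_q$ up to a permutation, then the atoms with exactly two identical components to attach to each $A^{(q,q)}$ its row $\{A^{(q,l)}: l\neq q\}$ --- this is what glues the unordered pairwise information into a single $\sigma\in\mathfrak{S}_Q$ --- and finally reads $\pi_q^3$ off the diagonal weights. Without executing this argument, or citing a result that actually covers your discretised model (the relevant one is Theorem~12 of Allman, Matias and Rhodes (2011), whose ideas the paper adapts rather than applies, since vector-valued Poisson-count edge laws are not literally its setting), your proposal only reduces the proposition to an unproved statement of essentially the same difficulty.

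A secondary flaw sits in your lifting step: you claim that each partition ``returns $\pi$ uniquely, hence fixes one and the same permutation $\sigma$ across all partitions''. When $\pi$ has ties (e.g.\ uniform proportions, as in the paper's own experiments), matching $\pi$ does not pin down the permutation, so a priori different partitions could return different valid permutations and your limit would not be along a fixed $\sigma$. This is repairable: for nested partitions the set of valid permutations can only shrink under refinement (coarse increments are sums of fine ones), so along a refining sequence these sets form a decreasing chain of nonempty finite sets and therefore have a common element $\sigma$; with that fix, your density-plus-continuity argument giving $A^{(q,l)}=\tilde A^{(\sigma(q),\sigma(l))}$, hence $\alpha^{(q,l)}=\tilde\alpha^{(\sigma(q),\sigma(l))}$ almost everywhere, goes through.
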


Assumption~\ref{hyp:ident} is similar to the hypothesis from Theorem 12 in~\cite{AMR_JSPI} that to our knowledge is the only
 identifiability result for weighted stochastic block models.  
 The question whether the necessary condition that any two rows (or any two columns) of the parameter matrix
  $\alpha$ are distinct is a sufficient condition for identifiability is not yet answered 
even in the simpler binary case. 
  In the binary stochastic block model, the results in~\cite{AMR_AoS,AMR_JSPI}
  establish \emph{generic} identifiability, which
  means identifiability except on a subset of parameters with Lebesgue measure zero, without specifying this subset. For the directed and binary
  stochastic block model,~\cite{Celisse_etal} establish identifiability under the  assumption that the product vector $\alpha \pi$ (or
  $\pi^\intercal \alpha$) has distinct coordinates. This  condition is slightly stronger than requiring any two rows of
  the parameter matrix to be distinct. Another identifiability result appears in~\cite{bickel2011}
  for some specific block  models. These last two approaches are  dedicated to  the binary setup and cannot be generalized to the continuous case. 

Proposition   \ref{prop:ident}   does   not  cover   the undirected  affiliation
case,   where   only    two   intensities   $\alpha^{\text{in}}$   and
$\alpha^{\text{out}}$   are   considered   such   that    $\alpha^{(q,q)}=\alpha^\text{in}$                                   and
$\alpha^{(q,l)}=\alpha^\text{out}\ (q,l=1,\dots, Q;\ q\neq l)$.

\begin{proposition}
\label{prop:ident_affil}
  If  the     intensities    $\alpha^\text{in}$     and
  $\alpha^\text{out}$ are  distinct functions  on $[0,T]$, then both
  $\alpha^\text{in}$  and  $\alpha^\text{out}$   are  identifiable  on
  $[0,T]$ from the undirected affiliation Poisson process stochastic block model distribution $\Pt$ when  $n\ge 3$.  
  Moreover, for   $n\ge
  \max\{Q,  3\}$,  the  proportions   $\pi_1,\dots,  \pi_Q$  are  also
  identifiable up to a permutation.
\end{proposition}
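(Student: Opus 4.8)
The plan is to read off $\theta$ from the finite-dimensional laws of the multivariate counting process. Fix a subset $\{i_1,\dots,i_k\}$ of individuals; conditionally on $(Z_{i_1},\dots,Z_{i_k})$ the processes $(N_{i_a,i_b})_{a<b}$ are independent, each a Poisson process with cumulative intensity $A^{\text{in}}$ if $Z_{i_a}=Z_{i_b}$ and $A^{\text{out}}$ otherwise. Hence the observed law of $(N_{i_a,i_b})_{a<b}$ is a finite mixture of product Poisson-process measures indexed by the ``in/out labellings'' of the $\binom{k}{2}$ pairs; by transitivity of the relation $Z_{i_a}=Z_{i_b}$ these labellings are exactly the ones induced by set partitions of $\{i_1,\dots,i_k\}$, and the weight attached to a partition is its probability under the multinomial law of the $Z$'s. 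The single engine used throughout is that, since $\alpha^{\text{in}}\neq\alpha^{\text{out}}$ on a set of positive measure, the functions $A^{\text{in}}$ and $A^{\text{out}}$ differ, the two Poisson-process laws with these cumulative intensities are distinct, and consequently the product measures attached to distinct labellings are linearly independent; a finite mixture of them therefore determines both its component intensities and its weights.

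Applying this with $k=2$ yields a two-component mixture with weights $p=\sum_q\pi_q^2$ and $1-p$ whose components are the Poisson-process laws of cumulative intensities $A^{\text{in}}$ and $A^{\text{out}}$; from it I recover the unordered pair $\{A^{\text{in}},A^{\text{out}}\}$, hence the unordered pair $\{\alpha^{\text{in}},\alpha^{\text{out}}\}$ as their almost-everywhere derivatives. To decide which function is $\alpha^{\text{in}}$ I pass to $k=3$, which needs $n\ge 3$. Among three individuals a labelling with exactly two ``in'' pairs and one ``out'' pair would force $Z_{i_1}=Z_{i_2}$, $Z_{i_1}=Z_{i_3}$ and $Z_{i_2}\neq Z_{i_3}$, which is impossible. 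Thus, writing $\{f,g\}=\{A^{\text{in}},A^{\text{out}}\}$, the labelling assigning $f$ to two pairs and $g$ to the remaining pair has mixture weight $0$ precisely when $f=A^{\text{in}}$, whereas for $f=A^{\text{out}}$ it has the strictly positive weight $\sum_q\pi_q^2(1-\pi_q)$ (positive because $Q\ge 2$ forces each $\pi_q\in(0,1)$). Reading off which candidate function carries this vanishing weight orients the pair and identifies $\alpha^{\text{in}}$ and $\alpha^{\text{out}}$ individually.

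For the proportions I extract power sums. With $k\le n$ individuals, the unique labelling in which all $\binom{k}{2}$ pairs are ``in'' corresponds to all $k$ lying in one group, and its mixture weight is $p_k:=\sum_{q=1}^{Q}\pi_q^{k}$; the same linear-independence argument identifies this weight. Carrying this out for $k=2,\dots,Q$ requires $n\ge Q$ and, together with $p_1=\sum_q\pi_q=1$, supplies the power sums $p_1,\dots,p_Q$ of the multiset $\{\pi_1,\dots,\pi_Q\}$. Newton's identities then convert $(p_1,\dots,p_Q)$ into the elementary symmetric functions $e_1,\dots,e_Q$, so $\prod_q(x-\pi_q)=x^{Q}-e_1x^{Q-1}+\dots+(-1)^{Q}e_Q$ is determined; its roots are the $\pi_q$, recovering $(\pi_1,\dots,\pi_Q)$ up to a permutation. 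Imposing both constraints gives $n\ge\max\{Q,3\}$, as claimed.

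The delicate point is not the combinatorics of partitions but the mixture-deconvolution step asserted in the first paragraph, namely that distinct in/out labellings yield linearly independent product Poisson-process measures. I expect to justify it through Laplace functionals: evaluating $\mathbb{E}\exp(-\sum_{a<b}\int_0^T h_{ab}\,dN_{i_a,i_b})$ at step functions $h_{ab}$ turns each component into a product of exponentials of the $A^{\text{in}}$- and $A^{\text{out}}$-masses of subintervals, and the separation $A^{\text{in}}\neq A^{\text{out}}$ makes these functionals distinguishable across labellings, so the weights can be solved for. This is the only place where the hypothesis $\alpha^{\text{in}}\neq\alpha^{\text{out}}$ and the affiliation structure are genuinely essential.
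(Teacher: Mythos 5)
Your proposal is correct and follows essentially the same route as the paper's proof: view the observed law as a discrete mixture of product Poisson-process laws indexed by in/out labellings, use the transitivity-forbidden ``two in, one out'' configuration on a triple to distinguish $\alpha^{\text{in}}$ from $\alpha^{\text{out}}$, and recover the proportions from the all-``in'' weights $\sum_{q=1}^Q \pi_q^k$ for $k=1,\dots,Q$ via Newton's identities and the elementary symmetric polynomials. The only difference lies in the deconvolution engine you flag as the delicate point: where you sketch a Laplace-functional linear-independence argument, the paper simply invokes the bijection between laws of Cox processes and laws of their directing random measures (Proposition 6.2.II of Daley and Vere-Jones), which identifies the atoms and weights of the directing measure's distribution at once.
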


\subsection{Additional notation}
\label{sec:proc_notation}
We introduce some quantities that are relevant in the following. Denote
\begin{align}
  Y^{(q,l)}_{\mathcal Z} &= \sum_{(i,j)\in\mathcal R}Z^{i,q}Z^{j,l} , \quad (q, l=1,\dots, Q), \label{Yql}\\
N_{\Z}^{(q,l)} &=\sum_{(i,j)\in\mathcal R} Z^{i,q}Z^{j,l} N_{i,j}, \quad (q, l=1,\dots, Q), \label{eq:N_Z}\\
Z_{m}^{(q,l)} &= Z^{i_m,q}Z^{j_m,l}, \quad (q, l=1,\dots, Q;\ m=1,\dots,M). \label{eq:Zmql}
\end{align}
These are the (latent) number of dyads $(i,j)\in \mathcal{R}$ with
latent   groups  $(q,l)$,   the  (latent)   counting  process   of
interactions   between  individuals   in   groups   $(q,l)$  and   the
(latent) binary indicator of observation $(t_m,i_m,j_m)$ belonging
to groups $(q,l)$, respectively. 
As these  quantities are unobserved,  our work relies on  proxies based on approximations of the latent group variables $Z^{i,q}$. Denote 
\begin{equation}
  \label{eq:tau_class}
\mathcal T=\left\{\tau=(\tau^{i,q})_{i=1,\dots,n,q=1,\dots,Q} : 
 \tau^{i,q}\ge 0, \sum_{q=1}^Q\tau^{i,q}=1, \quad (i=1,\dots,n;\ q=1,\dots,Q) \right\}.
\end{equation}
While the   variables $Z^{i,q}$ are indicators, their counterparts
$\tau^{i,q}$ are weights  representing the probability that node $i$ belongs to group $q$. 
Now,  for every  $\tau\in\mathcal T$,  replacing all  latent variables
$Z^{i,q}$ in \eqref{Yql}--\eqref{eq:Zmql} by $\tau^{i,q}$, we define 
$Y^{(q,l)}$, $N^{(q,l)}$ and $\tau_{m}^{(q,l)}$ which are estimators of
$Y^{(q,l)}_{\mathcal Z}$, $N_{\Z}^{(q,l)}$ and $Z_{m}^{(q,l)}$, respectively.

\section{Semiparametric estimation procedure}
\label{sec:method}
\subsection{A variational semiparametric expectation-maximization algorithm}
\label{sec:SPVEM}
The complete-data likelihood of observations $\mathcal O$ and latent variables $\mathcal Z=(Z_1,\dots,Z_n)$ is
\[
\mathcal L (\mathcal O,\mathcal Z \mid \theta)
 =\exp\left\{-\sum_{(i,j)\in\mathcal
    R}     A^{(Z_i,Z_j)}(T)
    \right\}\prod_{m=1}^M\alpha^{(Z_{i_m},Z_{j_m})}(t_m) \prod_{i=1}^n\prod_{q=1}^Q\pi_q^{Z^{i,q}}.
\]
The likelihood of the observed data $\mathcal L(\mathcal O\mid\theta)$ is
obtained by summing the above over the set of all
possible configurations of the latent variables $\mathcal  Z$. This set is huge and thus the likelihood  $\mathcal L(\mathcal O\mid\theta)$ is intractable  for direct maximization.  Hence, an
expectation-maximization algorithm~\citep{DempsterLR} is
used, which is an iterative  procedure especially adapted to cope with
latent variables.  It consists of an \texttt{E}-step and 
an \texttt{M}-step   that are iterated until convergence. 
In our model  two   issues arise. First, as already observed for the standard stochastic block model~\citep{Daudin_etal08}, the \texttt{E}-step    
requires the computation of the conditional distribution of
$\mathcal Z$ given the observations $\mathcal O$, which is not tractable.
Therefore, we use  a variational approximation~\citep{Jordan_etal} of the latent variables' conditional distribution to perform  the \texttt{E}-step. 
We refer to~\cite{MR_review} for a presentation and a discussion on the variational approximation in stochastic block models.  
Second,  part of  our parameter  is infinite  dimensional so  that the
\texttt{M}-step is partly replaced by a nonparametric estimation procedure, giving rise to a semiparametric
algorithm.

\subsection{Variational \texttt{E}-step}
The standard \texttt{E}-step consists in computing the expectation of the complete-data log-likelihood given the observations at some current parameter value $\theta$. 
This  requires  the  knowledge  of the  conditional  latent  variables
distribution  $\P_\theta(\mathcal  Z\mid\mathcal  O)$,  which  is  not
tractable because the $Z_i$'s are not conditionally independent.
We thus  perform  a   variational  approximation   of  $\P_\theta(\mathcal
Z\mid\mathcal  O)$  by a  simpler  distribution.  Using the  class  of
parameters $\mathcal{T}$ defined by~\eqref{eq:tau_class} we consider 
the family of factorized distributions on $\mathcal Z$ given $\mathcal O$ defined as
\begin{align*}
\P_\tau\{\mathcal Z=(q_1,\dots,q_n)\mid\mathcal O\}
&=\prod_{i=1}^n\P_\tau(Z_i=q_i\mid\mathcal O)
=\prod_{i=1}^n \tau^{i,q_i} ,\quad (q_1,\dots,q_n) \in\{1,\dots,Q\}^n,
\end{align*}
with corresponding expectation $E_\tau(\cdot\mid\mathcal O)$. Denoting
$\KL{\cdot}{\cdot}$ the Kullback-Leibler divergence,  we search for
the  parameter   $\hat  \tau\in\mathcal   T$  that  yields   the  best
approximation         $\P_\tau(\cdot\mid\mathcal        O)$         of
$\P_\theta(\cdot\mid\mathcal O)$ through
\begin{align}\label{eq:hat_tau}
\hat\tau = \argmin_{\tau\in\mathcal T}\KL { \P_\tau(\cdot\mid
  \mathcal{O}) } { \P_{\theta}(\cdot\mid \mathcal{O}) }.
\end{align}
Let $\mathcal{H}(\cdot)$ be the entropy of a 
distribution. It can be seen  that  $\hat\tau $ maximizes with respect to
$\tau$ the quantity 
\begin{align*}
J(\theta,\tau)  & =  E_\tau\left\{ \log\mathcal  L(\mathcal O,\mathcal
                 Z\mid\theta)\mid\mathcal                    O\right\}
                  +\mathcal{H}\{\P_\tau(\cdot\mid
  \mathcal{O})\} \\
&= -\sum_{q=1}^Q\sum_{l=1}^Q Y^{(q,l)} A^{(q,l)}(T) +
\sum_{q=1}^Q\sum_{l=1}^Q                  \sum_{m=1}^M\tau_{m}^{(q,l)}
  \log\left\{ \alpha^{(q,l)}(t_m)\right\} +
\sum_{i=1}^n\sum_{q=1}^Q{\tau^{i,q}}\log
  \left(\frac{\pi_q}{\tau^{i,q}} \right).
\end{align*}
The solution $\hat \tau$ satisfies a fixed point equation, which in practice
 is found by  successively updating the
 variational parameters $\tau^{i,q}$ via the following Equation~\eqref{eq_vstep_pi_q_fixed_point}  until convergence. 

\begin{proposition}
\label{prop:E_step}
The  solution  $\hat  \tau$  to  the  minimization  problem~\eqref{eq:hat_tau} satisfies the fixed point equation 
\begin{equation}
\label{eq_vstep_pi_q_fixed_point}
\hat \tau^{i,q}\propto \pi_q \exp\{D_{iq}(\hat \tau, \alpha)\}, \quad
(i=1,\dots,n;\ q=1,\dots,Q), 
\end{equation}
where  $\propto$  means  `proportional  to',  $\1_A$  denotes the
indicator function of set $A$ and 
\begin{align*}
  D_{iq}(\tau,\alpha)& = -\sum_{l=1}^Q \sum_{\substack{j=1\\
  j\neq i}}^n \tau^{j,l}
\left\{A^{(q,l)}(T)+A^{(l,q)}(T)\right\}
  \\
&\quad+\sum_{l=1}^Q\sum_{m=1}^M    \left[   \1_{\{i_m=i\}}\tau^{j_m,l}
  \log\left\{\alpha^{(q,l)}(t_m)\right\}+\1_{\{j_m=i\}}\tau^{i_m,l}\log\left\{\alpha^{(l,q)}(t_m)\right\}
  \right] . 
\end{align*}
\end{proposition}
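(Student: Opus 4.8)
The plan is to solve the constrained maximization directly by Lagrange multipliers, using the fact recorded just before the statement that $\hat\tau$ maximizes $J(\theta,\tau)$ over $\mathcal T$ at the current parameter $\theta=(\pi,\alpha)$. First I would rewrite $J(\theta,\tau)$ purely as a function of the coordinates $\tau^{i,q}$, substituting the proxy definitions $Y^{(q,l)}=\sum_{(i,j)\in\mathcal R}\tau^{i,q}\tau^{j,l}$ and $\tau_m^{(q,l)}=\tau^{i_m,q}\tau^{j_m,l}$, and splitting the last summand as $\sum_{i,q}\tau^{i,q}\log\pi_q-\sum_{i,q}\tau^{i,q}\log\tau^{i,q}$. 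The feasible set $\mathcal T$ in \eqref{eq:tau_class} is cut out by the $n$ linear equalities $\sum_q\tau^{i,q}=1$ together with the inequalities $\tau^{i,q}\ge0$; I would attach one multiplier $\lambda_i$ to each equality and set the gradient of $J(\theta,\tau)-\sum_i\lambda_i\bigl(\sum_q\tau^{i,q}-1\bigr)$ to zero.

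The core computation is the partial derivative $\partial J/\partial\tau^{i,q}$, assembled from the three summands. In the quadratic term $-\sum_{q',l'}Y^{(q',l')}A^{(q',l')}(T)$ the coordinate $\tau^{i,q}$ enters twice: once as a source factor $\tau^{i,q}$ inside $Y^{(q,l)}$ and once as a target factor inside $Y^{(l,q)}$, with the diagonal pair $(i,i)$ excluded from $\mathcal R$ forcing the inner sums to run over $j\neq i$; differentiation then produces exactly $-\sum_l\sum_{j\neq i}\tau^{j,l}\{A^{(q,l)}(T)+A^{(l,q)}(T)\}$. The bilinear term $\sum_{q',l'}\sum_m\tau_m^{(q',l')}\log\{\alpha^{(q',l')}(t_m)\}$ contributes the two event-driven pieces indexed by $\1_{\{i_m=i\}}$ and $\1_{\{j_m=i\}}$, again because $\tau^{i,q}$ may play the role of either endpoint of an observed interaction. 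Together these reproduce $D_{iq}(\tau,\alpha)$, while the entropy-type term contributes $\log\pi_q-\log\tau^{i,q}-1$.

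Equating the derivative to $\lambda_i$ gives $\log\tau^{i,q}=\log\pi_q+D_{iq}(\tau,\alpha)-1-\lambda_i$, hence $\tau^{i,q}=\pi_q\exp\{D_{iq}(\tau,\alpha)\}\exp\{-1-\lambda_i\}$. Since the factor $\exp\{-1-\lambda_i\}$ depends on $i$ but not on $q$, it is precisely the per-node normalizing constant enforcing $\sum_q\tau^{i,q}=1$, and evaluating at the optimizer yields the claimed fixed point equation $\hat\tau^{i,q}\propto\pi_q\exp\{D_{iq}(\hat\tau,\alpha)\}$.

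I expect the main obstacle to be the careful bookkeeping in differentiating $Y^{(q,l)}$ and the products $\tau_m^{(q,l)}$: one must correctly track that each coordinate $\tau^{i,q}$ enters through both its source and target roles, which is what produces the symmetric pair $A^{(q,l)}+A^{(l,q)}$ and the paired indicator terms in $D_{iq}$. A secondary point deserving one sentence is that the nonnegativity constraints are inactive at the solution: the entropy contribution $-\tau^{i,q}\log\tau^{i,q}$ makes $\partial J/\partial\tau^{i,q}\to+\infty$ as $\tau^{i,q}\to0^+$, so the maximizer lies in the interior of $\mathcal T$ and the associated Karush--Kuhn--Tucker multipliers vanish, leaving only the equality multipliers $\lambda_i$ in the stationarity condition.
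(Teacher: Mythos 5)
Your proposal is correct and follows essentially the same route as the paper's own proof: maximize $J(\theta,\tau)$ over the equality constraints $\sum_q\tau^{i,q}=1$ via Lagrange multipliers, compute the partial derivatives term by term (yielding exactly $D_{iq}(\tau,\alpha)+\log(\pi_q/\tau^{i,q})-1$ plus the multiplier), and identify the $i$-dependent, $q$-independent factor as the normalizing constant. Your closing remark that the entropy term forces the maximizer into the interior of $\mathcal T$, so the nonnegativity constraints are inactive, is a small additional justification the paper leaves implicit, but it does not change the argument.
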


\subsection{Nonparametric \texttt{M}-step: general description} 
 In  a parametric context the \texttt{M}-step consists in the maximization of $J(\theta,\tau)$
with respect to $\theta=(\pi,\alpha)$.  
Considering only  the finite-dimensional part $\pi$ of the parameter, the maximizer $\hat
\pi$ is 
\begin{equation}\label{eq_mstep_pi_q}
\hat \pi_q=\frac{\sum_{i=1}^n\tau^{i,q}}{\sum_{q=1}^Q\sum_{i=1}^n\tau^{i,q}}
=\frac1n{\sum_{i=1}^n\tau^{i,q}}, \quad  (q=1,\dots,Q).
\end{equation}
Concerning  the infinite-dimensional parameter $\alpha$,  we replace the
maximization of  $J(\pi,\alpha,\tau)$ with respect to  $\alpha$ by a
nonparametric estimation step. 
If the process $N_{\Z}^{(q,l)}$ defined  by~\eqref{eq:N_Z}  was observed, the estimation of $\alpha^{(q,l)}$ would be straightforward.
Now the  criterion $J$ depends on $\alpha$ through the quantity
$E_\tau  (\log  \mathcal  L(\mathcal   O,\mathcal  Z  \mid  \theta)  \mid
\mathcal O)$, that corresponds to the log-likelihood in 
  a setup where we observe the weighted cumulative process 
$N^{(q,l)}$ (see Section~\ref{sec:proc_notation}).  Intensities may be easily estimated in this direct observation setup.
We develop two different approaches for updating $\alpha$: a histogram and a kernel method.

\subsection{Histogram-based \texttt{M}-step}
\label{sec:histo}

In this part  the intensities $\alpha^{(q,l)}$ are estimated by  piecewise constant functions and we propose  a
data-driven choice of the partition of the time interval $[0,T]$.  The
procedure is based on a least-squares penalized criterion following the work
of~\cite{patricia}. Here $(q,l)$  is fixed  and for reasons of computational efficiency we    focus on regular  dyadic partitions  that form
embedded sets of partitions. 

For some given $d_{\max}$, consider all regular dyadic partitions of $[0,T]$ into $2^d$ intervals 
\[
\mathcal{E}_d = \left\{ E_{k} =\Big[(k-1)\frac  T {2^d} ; k \frac T{2^d}
  \Big) : k =1,\dots, 2^d\right\}, \quad (d= 0,\dots,d_{\max}).
\]
For any interval $E$ included in $[0,T]$, the estimated mean number of observed interactions $(t_m,i_m,j_m)$ between individuals with latent groups $(q,l)$ and occurring in  $E$ is 
\begin{align*}
N^{(q,l)}(E)=              \int_E             dN^{(q,l)}(s)              =
  \sum_{m=1}^M\tau^{i_m,q}\tau^{j_m,l}\1_{E}(t_m) .
\end{align*}
The total number of dyads $r$ is an upper bound for $Y^{(q,l)}$, so we
define a least-squares contrast  (relatively to the
counting process $N^{(q,l)}$) for all $f \in
\mathbb{L}^2([0,T],dt)$ by
\begin{equation*}
\gamma^{(q,l)}(f)=-\frac{2}{r} \int_0^T f(t) dN^{(q,l)}(t) +\frac{Y^{(q,l)}}{r} \int_0^T f^2(t) dt .
\end{equation*}
The  projection estimator  of
$\alpha^{(q,l)}$  on the  set  $S_d$ of  piecewise constant  functions
on $\mathcal{E}_d$ is 
\begin{equation*}
\hat{\alpha}_d ^{(q,l)}=\argmin_{f \in S_d} \gamma^{(q,l)}(f)= 
\frac1{Y^{(q,l)}}
\sum_{E\in \mathcal{E}_d}\frac{N^{(q,l)}(E)}{| E| }\1_{E}(\cdot),
\end{equation*}
where $|E|$ is the length of interval $E$, here equal to $T2^{-d}$. 
Now, adaptive estimation consists in choosing the best estimator among the
collection $\{ \hat{\alpha}_d^{(q,l)} : d=0,\dots, d_{\max} \}$.  
So  we  introduce an  estimator  of  the  partition through  $\hat d^{(q,l)} $
minimizing a penalized least-squares criterion of the form 
\begin{equation*}
crit^{(q,l)}(d)=\gamma^{(q,l)}(\hat{\alpha}_d^{(q,l)})+pen^{(q,l)}(d),
\end{equation*}
for
some penalty function $pen^{(q,l)} : \{0,\dots ,d_{\max}\}\rightarrow
\mathbb{R}^{+}$ that penalizes large partitions. 
We choose the penalty function as 
\begin{equation*}
pen^{(q,l)}(d)=\frac{2^{d+1}}{r} C \qquad\text{
with }\quad
C=\frac{ 2^{d_{\max}}}{T  Y^{(q,l)}} \sup_{E\in \mathcal{E}_{d_{\max}}}
N^{(q,l)} (E) .
\end{equation*}
Finally, the penalized least-squares criterion simplifies  to
\begin{equation*}
\hat d^{(q,l)} = \argmin_{ d  =0,\dots, d_{\max}} 2^d\Big\{-
\sum_{E\in \mathcal{E}_d}N^{(q,l)}(E)^2
   +  2^{d_{\max}  +1}  
\sup_{E'\in \mathcal{E}_{d_{\max}}} N^{(q,l)} \left(E'\right) \Big \}. 
\end{equation*}
The selected partition size $\hat d=\hat d^{(q,l)} $ may be specific to the  groups $(q,l)$. 
 Finally,  the adaptive estimator  of  intensity
$\alpha^{(q,l)}$ is 
\begin{equation}
\hat {\alpha}^{(q,l)}_{\text{hist}} (t) = \hat \alpha_{\hat d}^{(q,l)}(t)
  =\frac{2^{\hat d }}{T Y^{(q,l)}}
  \sum_{E\in \mathcal{E}_{\hat d}}N^{(q,l)} (E) \1_{E }(t) ,
\label{eq:mstep_histo}
\end{equation}
where $0\le t\le T ; q,l=1,\dots,Q$.
\cite{patricia}  develops her  approach  in  the Aalen  multiplicative
intensity   model,    which   is    slightly   different    from   our
context.  Moreover, our  setup  does not  satisfy  the assumptions  of
Theorem 1 in that reference as the number of jumps of the processes $N_{i,j}$
is not bounded by a known value (we have recurrent events). 
Nevertheless, in our simulations this procedure correctly estimates  the intensities $\alpha^{(q,l)}$ (see Section~\ref{sec:simus}).
We refer to~\cite{Birge_Baraud} for a theoretical study of an adaptive nonparametric estimation of the intensity of a
Poisson process. 
\cite{patricia}  also studies other  penalized least-squares estimators (for e.g. using  Fourier
bases), which might be used here similarly.

\subsection{Kernel-based \texttt{M}-step}
Kernel methods are suited to estimate smooth functions.  
If the variational parameters $\tau^{i,q}$ are good approximations of  the latent variables $Z^{i,q}$, then the intensity of  process $N^{(q,l)}$ (Section~\ref{sec:proc_notation}) is approximately
$Y^{(q,l)}\alpha^{(q,l)}$, where $Y^{(q,l)}$ is the variational mean number of
dyads with latent groups $(q,l)$. 
Following~\cite{Ramlau} and  considering  a nonnegative kernel function $K$ with
support within $[-1,1]$ together with some  bandwidth $b>0$,  the intensity $\alpha^{(q,l)}$ is  estimated by  
\begin{align}\label{eq_mstep__kernel_lamb_ql}
\hat{\alpha}_{\text{ker}}^{(q,l)}(t)
&=\frac{1}{b Y^{(q,l)} } \int_0^T K\Big(\frac{t-u}{b}\Big)dN^{(q,l)}(u) 
=\frac{1}{b Y^{(q,l)}} \sum_{m=1}^M\tau_{m}^{(q,l)} K\Big(\frac{t-t_m}{b} \Big), 
\end{align} 
if $Y^{(q,l)}>0$ and $\hat{\alpha}_{\text{ker}}^{(q,l)}(t)=0$ otherwise,
where $\tau_{m}^{(q,l)}$ is defined in Section~\ref{sec:proc_notation}.  

The bandwidth $b$ could be  chosen adaptively from the data following the procedure proposed
  by~\cite{gregoire}.   Kernel methods are not always suited to infer a function on a
  bounded interval as boundary effects may deteriorate their quality. However,  it is out of the scope of this work to
  investigate refinements of this kind.

\subsection{Algorithm's full description}
\label{sec:vem}
The  implementation  of   the  algorithm raises two  issues:
convergence  and  initialization. As  our  algorithm  is an  iterative
procedure, one has  to test for convergence. A  stopping criterion can
be defined based on criterion $J(\theta,\tau)$.
Concerning initialization the algorithm is run several times with
different starting values, which are chosen by some
k-means method combined with perturbation techniques 
 (see the Supplementary Material for details).  
Algorithm~\ref{algo:vem} provides a full description of the procedure.

 \begin{algorithm}[!h]
 \caption{Semiparametric variational expectation-maximization  algorithm}
 \label{algo:vem}
 \vspace*{-12pt}
 \begin{tabbing}
  \enspace $s \leftarrow 0$  \\
  \enspace Initialize $\tau^{[0]}$ \\
  \enspace \lWhile{convergence is not attained}{ \\
 \qquad Update $\pi^{[s+1]}$ via Equation~\eqref{eq_mstep_pi_q} with $\tau=\tau^{[s]}$ \\
 \qquad Update $\alpha^{[s+1]}$ via either Equation~\eqref{eq:mstep_histo} (histogram method)  or~\eqref{eq_mstep__kernel_lamb_ql} (kernel method) with $\tau=\tau^{[s]}$ \\
  \qquad Update $\tau^{[s+1]}$ via the fixed point Equation~\eqref{eq_vstep_pi_q_fixed_point} using $(\pi,\alpha)=(\pi^{[s+1]},\alpha^{[s+1]})$\\
 \qquad Evaluate the stopping criterion\\
\qquad $s\leftarrow s+1$ }\\
 \enspace Output $(\tau^{[s]},\pi^{[s]},\alpha^{[s]})$
 \end{tabbing}
 \end{algorithm}

\subsection{Model selection with respect to $Q$}
\label{sec:ICL}
We propose an integrated classification likelihood criterion that performs data-driven model
selection for the number of groups $Q$.  Roughly, this criterion is based on the complete-data
variational log-likelihood penalized by the number of parameters. It has been introduced in the mixture context in~\cite{BCG00}
and adapted to the stochastic block model in~\cite{Daudin_etal08}. 
The issue  here is that  our model  contains a nonparametric  part, so
that the  parameter is infinite  dimensional. However in the  case of
histogram estimators, once the partition  is selected, there is only a
finite number of parameters to estimate. This number can be used to build our integrated classification likelihood criterion.

For any $Q$ let $\hat \theta(Q)$ be the estimated parameter and $\hat
\tau(Q)$ the corresponding group probabilities obtained by our algorithm run with $Q$ groups. 
The parameter $\hat
\theta(Q)$ has two components: the first one  $\hat \pi(Q)$ is a vector of dimension
$Q-1$, while the second  has dimension
$\sum_{q, l=1,\dots, Q} \exp( \hat d^{(q,l)} \log 2)$. 
In the adaptation of the integrated classification likelihood criterion to the stochastic block model these
 components are  treated differently: the first one, that concerns the $n$ individuals is penalized by a $\log(n)/2$ term, while the second one
concerning the dyads is penalized by a  $\log(r)/2$ term. We refer to~\cite{Daudin_etal08} for more details. 
In our case the integrated classification likelihood criterion is
\begin{equation*}
\ICL (Q) = \log \mathbb{P}_{\hat \theta(Q)}\{ \mathcal{O},\hat
\tau(Q)\} -\frac 1 2 (Q-1)\log n - \frac 1 2 \log r 
\sum_{q=1}^Q\sum_{l=1}^Q2^{ \hat d^{(q,l)}}.
\end{equation*}
After fixing an upper bound $Q_{\max}$ we  select the number of groups 
\begin{equation}
  \label{eq:ICLmax}
\hat Q = \argmax_{ Q=1,\dots, Q_{\max}} \ICL(Q).
\end{equation}

\section{Synthetic experiments}
\label{sec:simus}

We generate data using the undirected Poisson process stochastic block
model in the following two scenarios.
\begin{enumerate}
\item  To evaluate  the  classification performance,  we consider  the
  affiliation model with $Q=2$ groups, equal probabilities $\pi_q=1/2$ and a  number of individuals $n$  in $\{10,30\}$. 
The intensities are sinusoids 
 with     varying    shifting     parameter    $\varphi$     set    to
 $\alpha^{\text{in}}(\cdot)=10 \{1+\sin(2\pi \cdot )\}$ and $\alpha^{\text{out}}(\cdot )=
  10 [1+\sin\{2\pi (\cdot +\varphi) ]$ with $\varphi \in \{0.01, 0.05, 0.1, 0.2, 0.5\}$. 
Clustering is more difficult for small values of $\varphi$. 

\item To  evaluate the  intensity estimators,  we choose  $Q=3$ groups
  with  equal probabilities  $\pi_q=1/3$ and  six intensity  functions
  that have different shapes and amplitudes (see continuous curves in Figure~\ref{estimated_intensities_Q3_n50}). 
The number of individuals $n$ varies in  $\{20,50\}$.   
\end{enumerate}
 
For   every parameter setting,  our algorithm is applied to 
 $1000$ simulated datasets. The  histogram estimator
is used with  regular dyadic partitions and $d_{\max}=3$, while the kernel estimator uses the Epanechnikov kernel.

\begin{figure}[t]
\centering
\includegraphics[width=\textwidth]{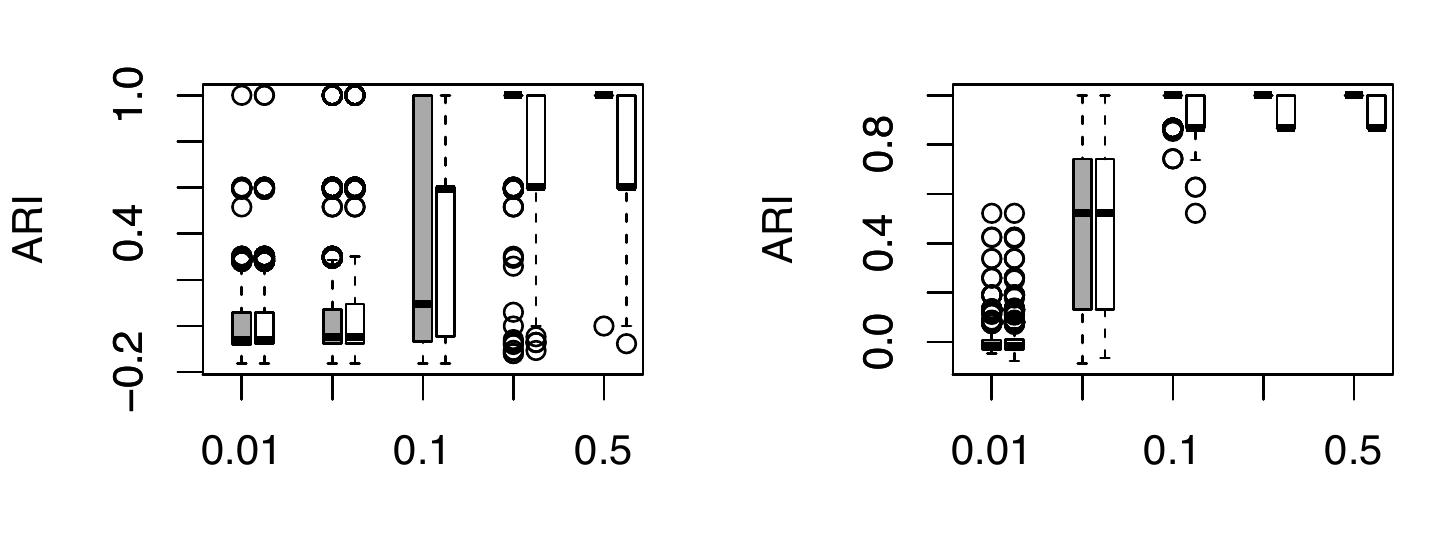}
  \caption{Boxplots   of  the   adjusted  rand   index  in   synthetic
    experiments from Scenario 1 for histogram (gray)
    and  kernel  (white)  estimators. The $x$-axis shows different  values  of
    $\varphi$ in $ \{0.01, 0.05, 0.1, 0.2, 0.5\}$.  Left panel: $n= 10$, right panel: $n=30$.}
  \label{boxplot_affiliation}
\end{figure}

To assess the clustering performance, we use the adjusted rand index~\citep{HA1985} that evaluates the
agreement between the estimated and the true latent structure. 
For two classifications that are  identical (up to label switching), this index equals $1$, otherwise the adjusted rand index is smaller than $1$ and  negative values are possible.
Figure~\ref{boxplot_affiliation}  shows the boxplots of the adjusted rand index obtained
with the histogram and the kernel versions of our method in Scenario 1. 
For  small values of the shifting parameter ($\varphi \in  \{0.01,0.05\}$),   the
intensities are so close that   classification is very difficult,
especially when $n=10$ is small. The classification improves  when  
the shift between the   intensities and/or the number of observations increase, achieving  (almost) perfect
classification for larger values of $\varphi$ and/or $n$.

Concerning the recovery of the   intensities   in Scenario 2,  we define the  risk  measuring the distance between the
true intensity  $\alpha^{(q,l)}$ and its estimate $\hat{\alpha}^{(q,l)}$ as 
\begin{equation*}
\textsc{Risk}(q,l)=\| \hat{\alpha}^{(q,l)} - \alpha^{(q,l)} \|_2= \Big[\int_0^T \{
    \hat{\alpha}^{(q,l)}(t) - \alpha^{(q,l)}(t) \}^2 dt \Big]^{1/2}.
\end{equation*}
Table~\ref{risk_Q3_n50} reports mean values and standard deviations of the risk when $n=50$ (The results
for $n=20$ are given in the Supplementary Material). 
Our histogram and kernel estimators are compared to their `oracle' equivalents obtained using the knowledge of the true
groups.  The table also reports the mean  number  of events with latent groups  $(q,l)$. 
As expected, when the true intensity is piecewise-constant,  the histogram version of
our method   outperforms the
kernel estimator. Conversely,  when the true intensity is  smooth, the kernel 
estimator is more appropriate to recover the shape of the intensity.  
Both estimators exhibit good performances with respect to the oracles versions. 

\begin{table}
\centering
\def~{\hphantom{0}}
\caption{Mean number of events and  risks with standard deviations (sd) in Scenario 2 with $n=50$.
Histogram (Hist) and kernel (Ker) estimators are compared with their oracle counterparts (Or.Hist, Or.Ker). 
All values associated with the risks are multiplied by 100.
}
{
\begin{tabular}{lccccc}
 Groups $(q,l)$ & Nb.events & Hist (sd)  & Or.Hist (sd)  & Ker (sd) & Or.Ker (sd) \\ 
\\
 $(1,1)$& 543 &  31 (32)& 20 (18) &  81 (51) & 63 (12)  \\ 
  $(1,2)$ & 949 & 44 (17)  & 81 (4) & 172 (57) & 156 (7)  \\ 
  $(1,3)$ & 545 & 46 (16) & 38 (6) & 53 (88) & 21 (6)  \\ 
  $(2,2)$ & 212 & 69 (10) & 70 (9) & 50 (56) & 35 (11)\\ 
  $(2,3)$ & 844 & 187 (6)  & 185  (2) & 125 (56) & 106 (11) \\ 
  $(3,3)$ & 298 & 83 (13) & 81 (13) & 64 (53) & 43 (12) 
\end{tabular}
}
\label{risk_Q3_n50} 
\end{table}

\begin{figure}[t]
\centering
 \includegraphics{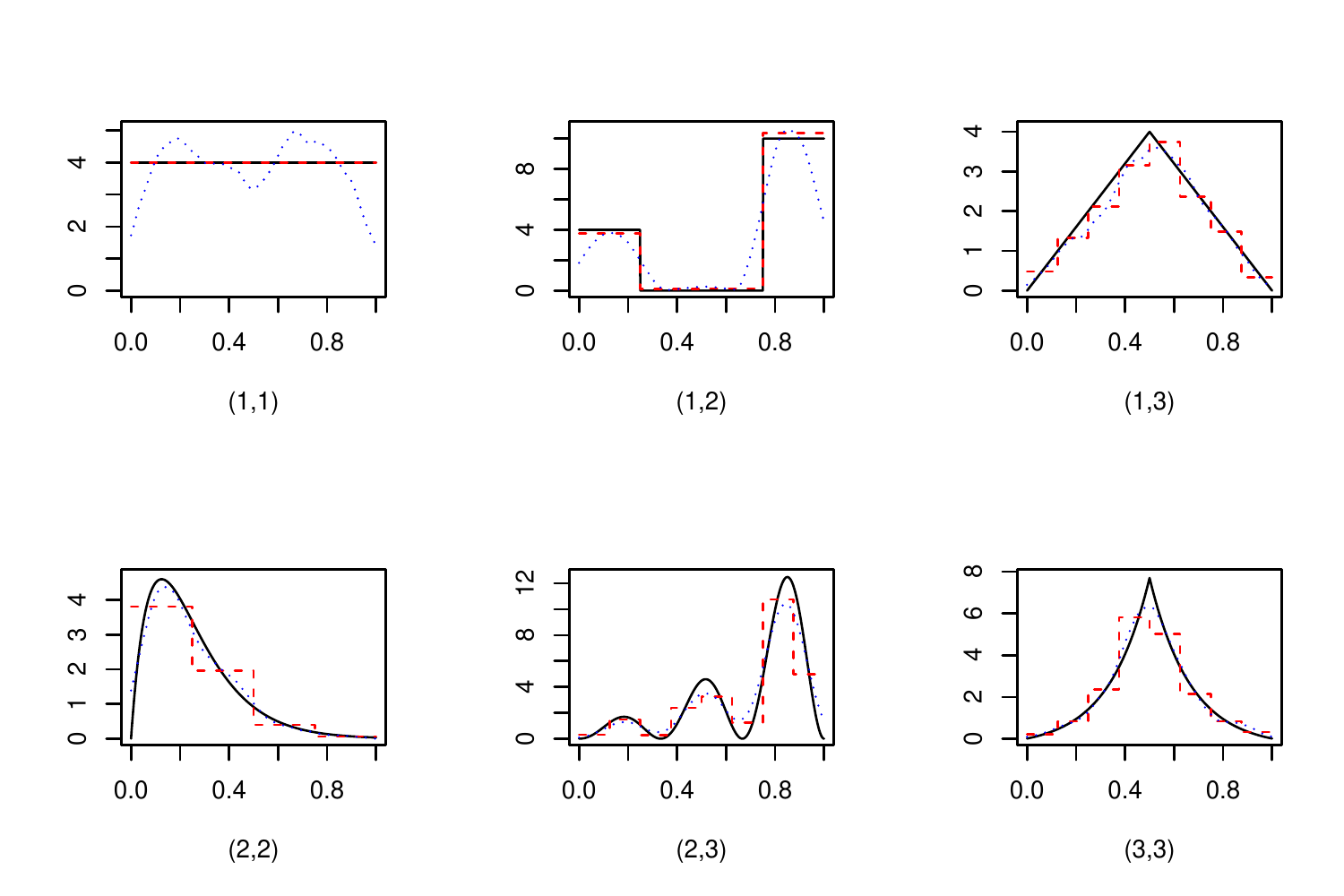}
\caption{Intensities in  synthetic experiments from Scenario  2 with
    $n=  50$.  Each panel  displays for a  pair  of groups  $(q,l)$  (given in the $x$-label) with
    $q,l=1,\dots, 3; q\le l  $  the  true intensity
    (continuous),  histogram  (dashed)  and  kernel  (dotted) estimates for one simulated dataset picked at random.}
\label{estimated_intensities_Q3_n50}  
\end{figure}

Finally, we use Scenario 2 to illustrate the performance of the integrated classification likelihood criterion to select the number $Q$ of latent groups from the data. For each  of the 1000 simulated datasets,
the  maximizer $\hat  Q$ of  the integrated  classification likelihood
criterion defined in~\eqref{eq:ICLmax} with $Q_{\max}=10$ is computed.
For $n=20$ the correct number of groups is
recovered in $74\%$ of the cases (remaining cases select values in $\{2,4\}$).  Moreover, for datasets where the criterion does not select the 
correct number $Q$, the adjusted rand index of the classification
obtained with 3 groups is rather low 
indicating that the algorithm has failed in the classification task and probably only a local maximum of the criterion $J$ has been found.

For $n=50$ our procedure selects the correct number of groups in $99.9\%$ of the cases.

\section{Datasets}
\label{sec:real_data}

\subsection{London cycles dataset}
We use the cycle hire usage data from the bike sharing system
  of the city of London from 2012 to 2015~\citep{bikes}. 
We  focus  on two randomly chosen weekdays, 1st (day1) and 2nd (day2) February 2012. 
Data consist in pairs of stations associated with a single
hiring/journey (departure station, ending station) and corresponding time stamp (hire time with second precision). 
The datasets have been pre-processed to remove journeys that either  correspond to loops,  last less than 1 minute or more than 3 hours or do not have an  ending station (lost or stolen bikes). 
The datasets contain $n_1=415$ and $n_2=417$ stations on day 1 and day 2 with $M_1=17, 631$ and $M_2= 16, 333$ hire
events respectively. With more than $\num{17e4}$ oriented pairs of stations the number of processes $N_{i,j}$ is huge, but only a
very small fraction -- around 7\% -- of these processes are non null (i.e. contain at least one hiring event between
these stations). 
Indeed bike sharing systems are mostly used for short trips and stations far from another are unlikely to be connected. 
Data correspond to origin/destination flows and are analysed in a  directed setup with the histogram version  of our
algorithm on a regular dyadic partitions with maximum size 32 ($d_{\max} =5$).  

The integrated classification likelihood criterion achieves its maximum with $\hat Q=6$ latent groups for both
datasets. Geographic locations of the bike stations and the clusters are represented on a city map (thanks to the
OpenStreetMap project), see Figure~\ref{fig:cycles_day1} for day 1. Clusters for day 2 are similar thus we focus on day
1. Our procedure globally recovers geographic clusters, as interacting stations
are expected to be  geographically close. A closer look at the clusters then
reveals more information.
While all but one clusters contain between 17 and 202 stations, cluster number 4 (dark blue crosses in
Figure~\ref{fig:cycles_day1}) consists of  only two bike stations.
They are located at Kings Cross railway  and  Waterloo railway stations and  are among the stations  with the highest
activities (for both departures and arrivals) in comparison to all other stations.
These stations appear to be `outgoing' stations in the morning with much more departures than arrivals around 8a.m. and
`incoming' stations at the   end of the day, with  more arrivals than  departures between 5p.m and   7p.m.  The bike
stations close to the two other main railway stations in London (Victoria and  Liverpool Street stations) do not exhibit
the same  pattern  and are clustered differently despite having a large number of hiring events. 
This cluster is thus created from the similarity of the temporal profiles of these two stations rather than their total
amount of interactions.   
Moreover it mostly interacts with cluster number 5 shown in light blue diamond in Figure~\ref{fig:cycles_day1} which roughly
  corresponds to the City of London neighbourhood.  It  is thus  characterized 
by stations used by people living in the suburbs and working in the city center. 

\begin{figure}[h]
  \centering
 \includegraphics[width=\textwidth]{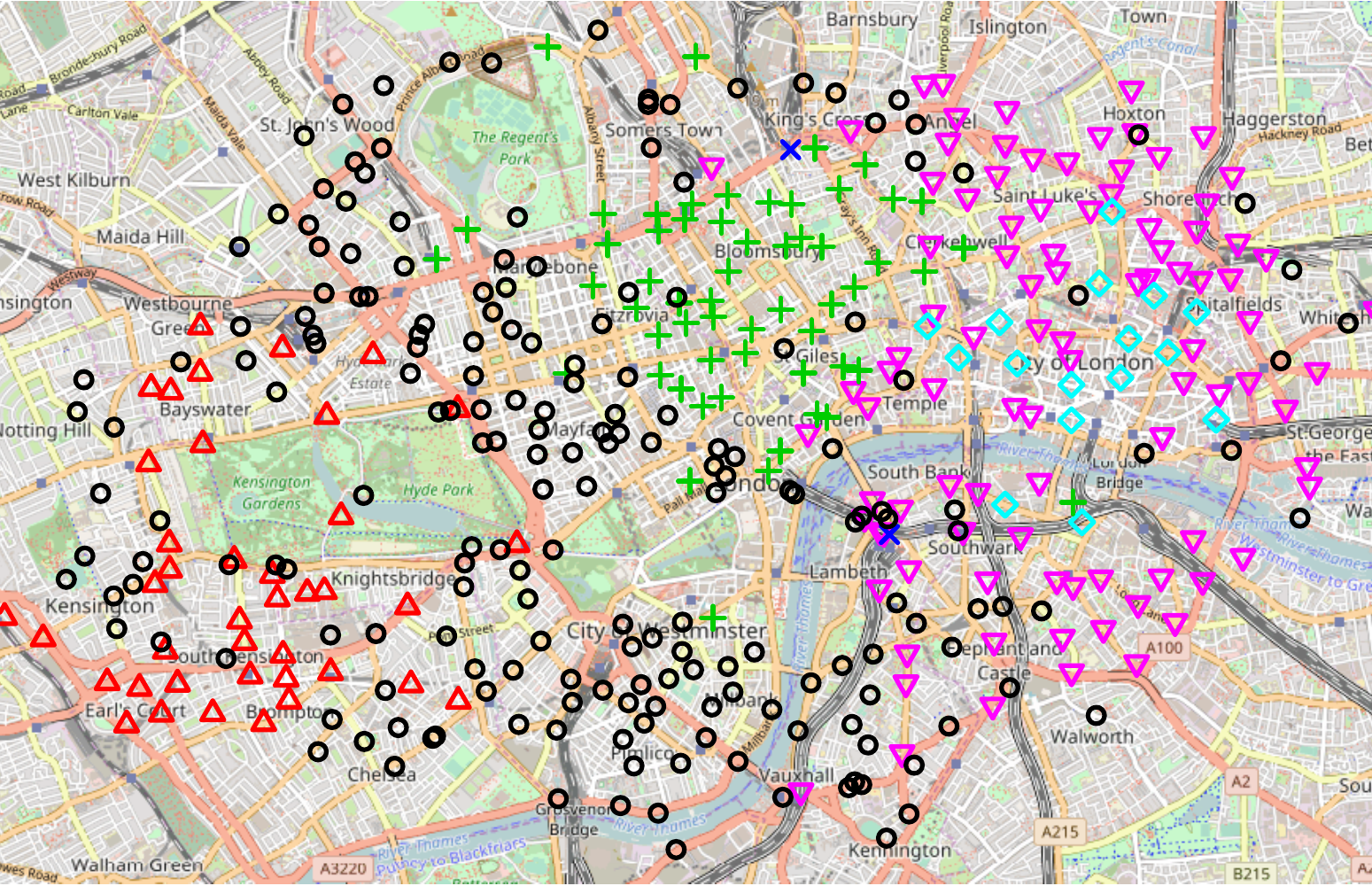}
  \caption{London bike sharing system: Geographic positions of the stations   
  and  clustering  into  six clusters (represented by  different  colors and symbols) for day 1. }
  \label{fig:cycles_day1}
\end{figure}

We compare our method with the discrete-time approach developed in \cite{Matias_Miele} where individuals are allowed to change groups during time. We applied  the corresponding \texttt{dynsbm} R package 
on the London bikes dataset aggregated into $T=24$ snapshots of one hour long, but no similar result came out of this:
their model selection criterion chooses 4 groups. 
The clusters with $Q=4$ groups in fact drop to only 1 nonempty group
between midnight to 3am, 2 groups between 3am and 7am and 3 groups  between 10pm and midnight. Globally there is one
very large group (containing from 168 to all stations, with mean number 285), one medium size group (from 0 to 148
stations, mean value 93) and 2 small groups (from 0 to 62 and 64 stations, mean values 14 and 22). Our small peculiar
cluster is not detected by the \texttt{dynsbm} method.

To conclude this section we mention that the same dataset is analysed in~\cite{Guigoures}
 with a different perspective.  \cite{velib} use a completely different approach, relying on Poisson mixture
models on a similar dataset of origin/destination flows (bike sharing system in Paris). Their approach does not take into account the network structure of the
data (where   e.g. two flows from the same station are related). As a consequence, clusters are obtained on pairs of
stations from which interpretation is completely different and in a way less natural. 



 \subsection{Enron dataset}
\label{sec:Enron}

\begin{figure}[t]
  \centering
 \includegraphics[width=0.8\textwidth]{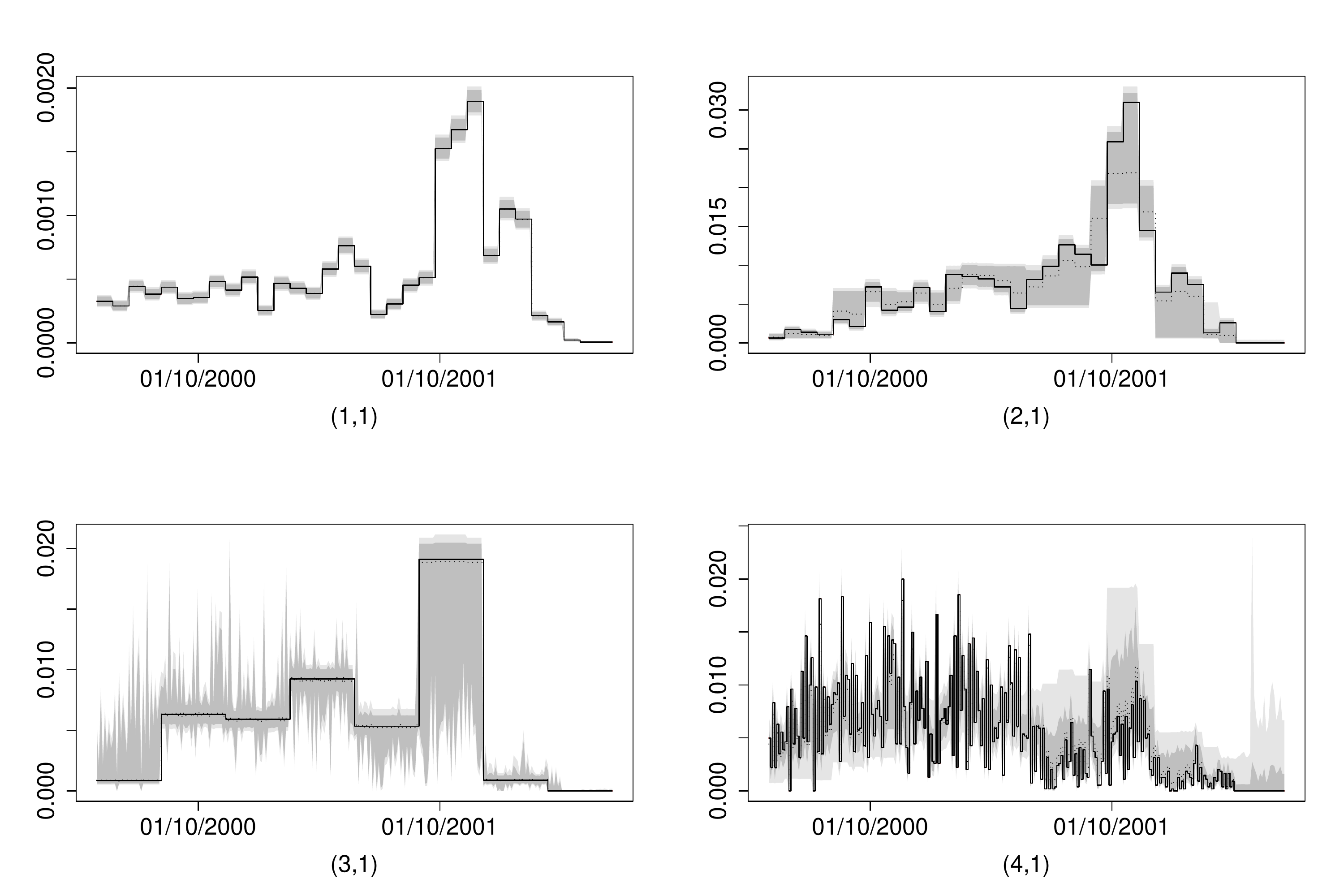}
  \caption{Enron: intensity estimates for   pair of groups $(q,1)$ for $q\in\{1,\dots,4\}$ and bootstrap confidence interval with confidence level 90\% (lightgrey) and 80\% (drakgrey).}
  \label{fig:enron_4intens}
\end{figure}
The Enron corpus contains emails exchanges
 among people working at Enron, mostly in the senior management, covering the period of  the affair that led to the bankruptcy of  the company~\citep{Enron}. 
 From the data provided by  the \cite{enron_data} we extracted  $21, 267$ emails exchanged among   147 persons between 27 April, 2000  and  14 June, 2002,  for  which  the sender,  the recipient and the time when the email was sent are known.  For most  persons their position in the company is known: one out of four are employees,  all other positions involve responsibility and we summarize  them as {\it managers}. 

We applied our algorithm using the directed model and the histogram approach with regular dyadic partitions of maximal  size $256$ ($d_{\max}=8$). For every fixed number of groups $Q\in\{2,\dots,20\}$ the algorithm  identifies one large group  always containing the same individuals. One may consider this group to have the standard behaviour at the company. For varying values of $Q$ the differences in the clustering only concern the remaining individuals with non standard behaviour. The integrated classification likelihood criterion does not provide a reasonably small enough number of groups that could be used for   interpretation. We thus choose to analyse the data with $Q=4$ clusters. 
The group with the standard behaviour  (group 1) contains 127 people both employees and managers. The second largest group (group 4) contains almost exclusively  managers, and group 2  is composed of employees.  
The standard behaviour (group 1) consists in little sending activity, but receiving mails from all other groups. The  members of group 3 have the most intense email exchange. The specific manager and employee groups (groups 2 and 4) are quite  similar 
and indeed at $Q=3$ both groups are merged together. 
For a number of pairs of groups communication is relatively constant over time, but for others the activity evolves a lot over time, as for instance for emails sent to members of group~1 (Figure~\ref{fig:enron_4intens}). 
The intra-group intensity of group 1 is increasing with a  peak in October 2001, which coincides with the beginning of the investigations related to the scandal. In contrast, the managers of group  4 communicate more intensely  with group 1 during the first half of the observation time than during the last half. More details are provided in the Supplementary Material.

The bootstrap confidence intervals in Figure \ref{fig:enron_4intens} are obtained via parametric bootstrap. In general, confidence intervals for large groups are very satisfactory, but this does not hold for    small groups. Indeed, here  some of the estimated probabilities $\hat \pi_k$ of group membership are very small (two are lower than 3\%), so  bootstrap samples tend to have empty groups. Evidently, this has a devastating effect on the  associated bootstrap intervals.

\cite{rastelli} analyse the same dataset with  a discrete-time model where individuals are  allowed to change
groups over time. They also obtain that most individuals are gathered in one group, which are rather inactive but receive
 some emails, and observe a specific behaviour of some groups of employees and managers. 
However, there is a difference in interpretation of results in the two models. In Rastelli's  model groups may be identified with specific tasks like sending newsletters for example and individuals may execute this task during a while and then change   activity. In contrast, our model identifies groups of individuals  with a similar behaviour over the entire observation period. Our approach thus is  natural  for the analysis of the temporal evolution of the activity of a fixed group of persons.

Finally,  we carried out a comparison with a classical stochastic block model. Indeed, taking $d_{\max}=0$ in our approach amounts to forget the timestamps of the emails, as the algorithm then only considers email counts over the whole observation period and our method boils down to  a classical stochastic block model with Poisson emission distribution and mean parameter $A^{(q,l)} (T)$~\cite[see for instance][]{Mariadassou_10}.
Comparing the classification at $Q=4$ to the one obtained with our continuous-time approach, the adjusted rand index
equals $0.798$ indicating that the clusterings are different. Indeed, both models identify roughly the same large group with standard behaviour, but  huge differences appear in the classification of the non standard individuals.
This illustrates that our approach does effectively take into account the temporal distribution of the events  to cluster the individuals. And the higher value of the complete log-likelihood in the Poisson process stochastic block model indicates that the solution is indeed an improvement to the classical model.


\section*{Acknowledgment}
{ We would  like to thank Agathe  Guilloux for pointing
  out valuable references, Nathalie Eisenbaum for her help on doubly stochastic counting processes and Pierre Latouche
  for sharing information on datasets.  The computations were partly performed at the institute for computing and data
  sciences at University Pierre and Marie Curie.}

\section*{Supplementary Material}
Supplementary  material available online at Biometrika   includes a complementary version of the model that specifically accounts
for sparse  datasets, the
proofs of  all theoretical results,  technical details on  the methods
and algorithm, some supporting results for the analysis of the two datasets explored here, as well as the study of a third dataset. 
A zip file containing  the computer code in R and the analysis of  the datasets is available at \url{http://cmatias.perso.math.cnrs.fr/Docs/ppsbm_files.tgz}



\newpage

\renewcommand{\thesection}{S.\arabic{section}}
\setcounter{section}{0}
\renewcommand{\theequation}{S.\arabic{equation}}
\setcounter{equation}{0}
\renewcommand{\thefigure}{S.\arabic{figure}}
\setcounter{figure}{0}

\begin{center}
{\Large Supplementary material for:  A semiparametric extension of the
  stochastic block model for longitudinal networks}\\

\end{center}

All  the references  are from  the main  manuscript, except  for
  those appearing as S-xx that are within this document.

\section{Identifiability proofs}
\begin{proof}[Proof of Proposition~\ref{prop:ident}]
The proof follows ideas similar to those of Theorem 12 in~\cite{AMR_JSPI}.   
For notational convenience this proof is presented in the undirected
setup where the set of intensities is $\alpha=\{\alpha^{(q,l)}: q,l=1,\dots, Q;\ q \leq l\}$. 
The directed case is treated in the same way.

To explain the general idea of the proof we  start by  considering  the distribution  of  one marginal  process
$N_{i,j}$. This is a Cox process directed by the random measure 
\[
A_{i,j} \sim \sum_{q=1}^Q\sum_{l=1}^Q \pi_q\pi_l \delta_{A^{(q,l)}}. 
\]
Here for  any $q\leq  l $,  we use the  notation $A^{(q,l)}$  for the
measure on $[0,T]$ defined by $ A^{(q,l)}(I) =\int_I \alpha^{(q,l)}(u)
du$ for all measurable $I\subset [0,T]$. We also  recall that $\delta_u$ is the Dirac mass at point $u$.
It is known that the mapping of probability laws of random measures into laws of Cox processes directed by them is a
bijection~\citep[see for example Proposition 6.2.II in][]{Daley_Vere}. In other words the distribution of $N_{i,j}$
uniquely determines the finite measure (on the set of measures on $[0,T]$) 
\[
 \sum_{q=1}^Q\sum_{l=1}^Q\pi_q\pi_l \delta_{A^{(q,l)}}.
\]
According to Assumption~1  the intensities $\alpha^{(q,l)}$
 are distinct. Hence, the corresponding measures
$A^{(q,l)}$ are all different and we may recover from the distribution of our counting process $N_{i,j}$ the set of
values 
$\{(\pi_q^2, A^{(q,q)}) : q=1,\dots, Q\}\cup\{(2\pi_q\pi_l, A^{(q,l)})
:q,l=1,\dots, Q;\ q<l\}$
or equivalently the set
$\{(\pi_q^2,\alpha^{(q,q)})   :   q=1,\dots, Q\}\cup\{(2\pi_q\pi_l,
\alpha^{(q,l)});\ q,l=1,\dots, Q;\ q<l\}$. In particular we recover the functions $\alpha^{(q,l)}$ almost everywhere on $[0, T ]$ up to a permutation 
on the pairs of groups $(q,l)$.
However for the recovery up to a permutation in $\mathfrak{S}_Q$ it is necessary to consider higher-order marginals.

We now fix three distinct integers $i,j,k$ in $\{1,\dots,n\}$ and consider the trivariate counting process $(N_{i,j},N_{i,k},N_{j,k})$. In the same
way, these are Cox processes directed by the triplet of random measures  $(A_{i,j},A_{i,k},A_{j,k})$ such that 
\begin{equation*}
(A_{i,j},A_{i,k},A_{j,k}) \sim \sum_{q=1}^Q\sum_{l=1}^Q \sum_{m=1}^Q\pi_q\pi_l\pi_m \delta_{(A^{(q,l)}, A^{(q,m)}, A^{(l,m)}) }. 
\end{equation*}
We write this distribution in such a way that distinct components appear only once 
\begin{align}
  \label{eq:trivariate}
\sum_{q=1}^Q & \pi_q^3 \delta_{(A^{(q,q)}, A^{(q,q)}, A^{(q,q)}) } \nonumber \\
+\sum_{q=1}^Q\sum_{\substack {l=1 \\ l\neq q}}^Q & \pi_q^2\pi_l \Big\{\delta_{(A^{(q,q)}, A^{(q,l)}, A^{(q,l)}) } + \delta_{(A^{(q,l)}, A^{(q,q)}, A^{(q,l)}) } +\delta_{(A^{(q,l)}, A^{(q,l)}, A^{(q,q)}) }\Big\} \nonumber\\
+ \sum_{q=1}^Q\sum_{\substack {l=1 \\ l\neq q}}^Q \sum_{\substack {m=1 \\ m\neq q,l}}^Q &\pi_q\pi_l\pi_m  \delta_{(A^{(q,l)},  A^{(q,m)}, A^{(l,m)}) }. 
\end{align}
Using  the  same  reasoning  we   identify  the  triplets  of  values
$\{(A^{(q,l)} , A^{(q,m)}, A^{(l,m)} ) : q,l,m =1,\dots, Q\}$ up to a permutation on the triplets $(q,l,m)$.  Among these, the only values
with three identical components are   $\{(A^{(q,q)}  ; A^{(q,q)} ; A^{(q,q)}  ) :  q=1,\dots, Q\}$
and thus the measures $\{ A^{(q,q)} : q=1,\dots, Q\}$ are identifiable 
up to a permutation in $\mathfrak{S}_Q$. Going back to~\eqref{eq:trivariate} and looking for the Dirac 
terms  at points  that  have  two identical  components  (of the  form
$( A^{(q,q)} , A^{(q,l)}, A^{(q,l)})$ and two other similar terms 
with permuted components),  we can now identify the set of measures 
\[
\{ (A^{(q,q)}, \{A^{(q,l)} : l=1,\dots, Q ;\ l \neq q \}) : q=1,\dots, Q \}.
\]
This  is   equivalent  to  saying   that  we  identify   the  measures
$\{A^{(q,l)} : q,l=1,\dots, Q ;\ q\leq l\}$ up to a permutation
in $\mathfrak{S}_Q$. Obviously this also identifies the corresponding
intensities $\{\alpha^{(q,l)} : q,l=1,\dots, Q;\ q\leq l \}$ almost everywhere on $[0,T]$ up to a permutation in $\mathfrak{S}_Q$.
To finish the proof we need to identify the proportions $\pi_q$. Note that as we identified the
components    $\{A^{(q,q)}    :    q=1,\dots,   Q\}$,    we    recover
from~\eqref{eq:trivariate} the set of values
$\{\pi_q^3 : q=1,\dots, Q\}$ 
up to the same permutation as on the $A^{(q,q)} $'s. This concludes the proof. 
\end{proof}


\begin{proof}[Proof of Proposition~\ref{prop:ident_affil}]
Note that the setup considered here is undirected. 
We follow some of the arguments already appearing in the proof of Proposition 1. Let $A^{\text{in}}$
(resp. $A^{\text{out}}$) denote the measure whose intensity is $\alpha^{\text{in}}$ (resp. $\alpha^{\text{out}}$). 
The univariate process $N_{i,j}$ is a Cox process directed by the random measure  $A_{i,j}$ that is now distributed as 
\[
A_{i,j}\sim (\sum_{q=1}^Q \pi_q^2) \delta_{A^{\text{in}}} + 
(\sum_{q=1}^Q\sum_{\substack {l=1 \\ l\neq q}}^Q \pi_q\pi_l)
\delta_{A^{\text{out}}} . 
\]
Thus the measures $A^{\text{in}}$ and $A^{\text{out}}$ are identifiable from the distribution of $N_{i,j}$, but only up to a
permutation. Similarly to the previous proof we rather consider the trivariate Cox processes $(N_{i,j},N_{i,k},N_{j,k})$ directed by the
random measures $(A_{i,j},A_{i,k},A_{j,k})$  whose distribution in the affiliation case has now five atoms
\begin{multline*}
 \Big(\sum_{q=1} ^Q \pi_q^3 \Big) \delta_{(A^{\text{in}} ,A^{\text{in}} , A^{\text{in}} )}    
 + \Big(\sum_{q=1}^Q\sum_{\substack {l=1 \\ l\neq q}}^Q \pi_q^2\pi_l \Big) \delta_{(A^{\text{in}} , A^{\text{out}}, A^{\text{out}})}   
 + \Big(\sum_{q=1}^Q\sum_{\substack {l=1 \\ l\neq q}}^Q \pi_q^2\pi_l \Big) \delta_{(A^{\text{out}}, A^{\text{in}}, A^{\text{out}} )} \\ 
 + \Big(\sum_{q=1}^Q\sum_{\substack {l=1 \\ l\neq q}}^Q  \pi_q^2\pi_l \Big) \delta_{(A^{\text{out}}, A^{\text{out}}, A^{\text{in}} )} 
+ \Big(\sum_{q=1}^Q\sum_{\substack {l=1 \\ l\neq q}}^Q \sum_{\substack {m=1 \\ m\neq q,l}}^Q  \pi_q\pi_l\pi_m\Big) \delta_{(A^{\text{out}} , A^{\text{out}} ,
  A^{\text{out}} )}  .
\end{multline*}
As previously, these five components are identifiable up to a permutation on $\mathfrak{S}_5$. Now it is easy to identify the three
components for which two marginals have same parameters and the third one has a different parameter. Thus we recover
exactly the measures  $A^{\text{in}}$ and $A^{\text{out}}$. 
This also identifies the corresponding intensities $\alpha^{\text{in}}$ and $\alpha^{\text{out}}$ almost everywhere on $[0,T]$. 

Now identification of the proportions $\{\pi_q : q=1,\dots, Q\}$ follows an argument already used in the proof of Theorem 13
in~\cite{AMR_JSPI} that we recall here for completeness.  From the trivariate distribution of
$(N_{i,j},N_{i,k},N_{j,k})$ and the already recovered values $A^{\text{in}}$ and $A^{\text{out}}$,  we
identify the proportion 
$\sum_{q=1}^Q \pi_q^3$.
Similarly for any $n\ge 1$, by considering the multivariate distribution of
$(N_{i,j})_{(i,j)\in \mathcal{R}}$, we can  identify the Dirac mass at
point  $(A^{\text{in}}, \dots,A^{\text{in}}  )$ and  thus its  weight 
which is equal to $\sum_{q=1}^Q \pi_q^n$.  
By the Newton identities the values $\{ \sum_{q=1}^Q \pi_q^n : 
n=1,\dots, Q\}$ determine the values of elementary symmetric
polynomials $\{\sigma_n(\pi_1,\dots,\pi_Q) : n=1,\dots, Q\}$. These, in turn, are (up to sign) the coefficients of the
monic polynomial whose roots (with multiplicities) are precisely $\{\pi_q :  q=1,\dots, Q\}$. Thus  the proportion 
parameters are recovered up to a permutation. 
\end{proof}


\section{Variational \texttt{E}-step: Proof of Proposition~\ref{prop:E_step}}
\label{sec:other_proof}
For the Kullback-Leibler divergence we compute
\begin{align*}
&  \KL  {\P_\tau(\cdot  \mid \mathcal{O})  }  {\P_{\theta}(\cdot  \mid
  \mathcal{O}) } 
=  E_\tau\left\{  \log\frac{   \P_\tau(\mathcal  Z  \mid  \mathcal{O})}{\P_{\theta}(\mathcal Z \mid \mathcal{O})}  \mid \mathcal{O}  \right\}
= E_\tau\left\{ \log\frac{ \P_\tau(\mathcal Z\mid \mathcal{O})
\P_{\theta}(  \mathcal{O})  }{\mathcal  L(\mathcal  O,\mathcal  Z\mid
\theta)} \mid \mathcal{O}\right\}\\
&\qquad=\sum_{i=1}^nE_\tau\left(\log\tau^{i,Z_i}     \mid     \mathcal
  O\right) + \log\P_{\theta}( \mathcal{O})-E_\tau\left\{ \log\mathcal
  L(\mathcal O,\mathcal Z \mid \theta) \mid \mathcal{O}\right\}.
\end{align*}
The      complete-data
log-likelihood $\log \mathcal L(\mathcal O,\mathcal Z\mid \theta)$ is
\begin{equation*}
-\sum_{q=1}^Q\sum_{l=1}^Q Y^{(q,l)}_{\mathcal Z} A^{(q,l)}(T)+
\sum_{q=1}^Q\sum_{l=1}^Q \sum_{m=1}^M Z_{m}^{(q,l)} \log\left\{ \alpha^{(q,l)}(t_m)\right\}+
\sum_{i=1}^n\sum_{q=1}^Q{Z^{i,q}}\log \pi_q , 
\end{equation*}
where $Y^{(q,l)}_{\mathcal Z}$ and $Z_{m}^{(q,l)}$ have been introduced in Equations~\eqref{Yql} and~\eqref{eq:Zmql}, respectively. 
Now note that $E_\tau(Z^{i,q}\mid \mathcal O) 
=\P_\tau(Z^{i,q}=1\mid \mathcal O)
= \P_\tau(Z_i=q\mid \mathcal
O)=\tau^{i,q}$. 
 Moreover by the factorised form of $\P_\tau$, for every
$i\neq j$ we have 
$$
E_\tau(Z^{i,q} Z^{j,l}\mid \mathcal O)=E_\tau(Z^{i,q}\mid \mathcal O )
E_\tau(Z^{j,l}\mid \mathcal O)=\tau^{i,q}\tau^{j,l} .
$$
The quantity $Y^{(q,l)}$ 
is thus equal to $E_\tau(Y^{(q,l)}_{\mathcal Z}\mid \mathcal{O})$, namely the
variational  approximation  of the  mean  number  of dyads  with latent groups
$(q,l)$. Similarly $\tau_{m}^{(q,l)}$ equals $E_{\tau}(Z_{m}^{(q,l)}\mid \mathcal O)$, 
 the variational approximation of the probability that observation $(t_m,i_m,j_m)$ corresponds to a dyad with latent groups 
$(q,l)$.
It follows that 
\begin{align*}
\hat\tau = \argmin_{\tau\in\mathcal T}\KL{\P_\tau(\cdot\mid
  \mathcal{O}) } { \P_{\theta}(\cdot\mid \mathcal{O}) } 
=\argmax_{\tau\in\mathcal T}J(\theta,\tau),
\end{align*}
where $J(\theta,\tau)$ is 
\begin{align}
\label{eq:Jtau}
-\sum_{q=1}^Q\sum_{l=1}^Q Y^{(q,l)} A^{(q,l)}(T) +
\sum_{q=1}^Q\sum_{l=1}^Q                  \sum_{m=1}^M\tau_{m}^{(q,l)}
  \log\left\{ \alpha^{(q,l)}(t_m)\right\} +
\sum_{i=1}^n\sum_{q=1}^Q{\tau^{i,q}}\log
  \left(\frac{\pi_q}{\tau^{i,q}} \right).
\end{align}
The variational \texttt{E}-step consists in maximizing $J$ with respect to the $\tau^{i,q}$'s which are constrained 
to satisfy $\sum_{q=1}^Q \tau^{i,q}=1$ for all $i$. In other words  we maximize 
\begin{equation*}
M(\tau,\gamma)=J(\theta, \tau)+\sum_{i=1}^n\gamma_i\left(\sum_{q=1}^Q\tau^{i,q}-1\right),
\end{equation*}
with Lagrange multipliers $\gamma_i$. 
The partial derivatives are
\begin{align*}
\frac\partial{\partial \tau^{i,q}}M(\tau,\gamma)
&=-\sum_{l=1}^Q \sum_{j\neq i} \tau^{j,l}\left\{ A^{(q,l)}(T) +A^{(l,q)}(T) \right\}+
\sum_{l=1}^Q\sum_{m=1}^M          \1_{\{i_m=i\}}          \tau^{j_m,l}
  \log\left\{ \alpha^{(q,l)}(t_m)\right\} \\
&\quad+\sum_{l=1}^Q\sum_{m=1}^M       \1_{\{j_m=i\}}      \tau^{i_m,l}
  \log\left\{ \alpha^{(l,q)}(t_m)\right\}
+ \log \left( \frac{\pi_q}{\tau^{i,q}} \right) -1+\gamma_i, \\ 
\frac\partial{\partial \gamma_{i}}M(\tau,\gamma)
&=\sum_{q=1}^Q\tau^{i,q}-1.
\end{align*}
The partial derivatives are null if and only if $\sum_{q=1}^Q\tau^{i,q}=1$ and the 
$\tau^{i,q}$'s satisfy the fixed point equations~\eqref{eq_vstep_pi_q_fixed_point}, with
$\exp(\gamma_i-1)$ being the normalizing constant.

\section{Details on the Algorithm}
\label{sec:init}
Initialisation is a crucial point for any clustering method. Our  variational expectation-maximization algorithm starts with a classification of the nodes and
iterates an \texttt{M}-step followed by a variational \texttt{E}-step. 
We apply the algorithm on  multiple initial classifications of the nodes based on
various aggregations of the data: on the whole time interval and on sub-intervals. Sub-intervals are obtained through 
regular dyadic partitions of $[0,T]$ (parameter \texttt{l.part} in the R code) 
and a $k$-means algorithm applied on  the
rows of the adjacency matrices  of each of  the aggregated datasets provides   starting points for our algorithm.
To obtain further starting values we   use perturbations of 
the different $k$-means classifications: a given percentage of the total number of individuals is picked at random (parameter \texttt{perc.perturb}) and their group
memberships are  shuffled. The perturbation procedure can be applied several  times (parameter \texttt{n.perturb}). 
The algorithm returns as final result the run that achieves the largest value of criterion $J$.

There is no theoretical result on the existence of a solution to the fixed point
equation~\eqref{eq_vstep_pi_q_fixed_point}. 
The iterations for the fixed point equation 	are initialized with the value of $\tau$ obtained at the
previous variational \texttt{E}-step. In practice,  convergence is fast and we stop the fixed-point iterations either when convergence is achieved ($|\tau^{[s]}-\tau^{[s-1]}|< \varepsilon = 10^{-6}$)  or when the maximal number of iterations is attained (\texttt{fix.iter}=$10$).

As the variational expectation-maximization algorithm aims at maximizing $J$ defined in \eqref{eq:Jtau}, the algorithm
is stopped when the increase of $J$ is less than a given threshold ($\varepsilon= 10^{-6}$), that is when
$$
\left|\frac{J(\theta^{[s+1]},\tau^{[s+1]})-J(\theta^{[s]},\tau^{[s]})}{J(\theta^{[s]},\tau^{[s]})}\right|<\varepsilon , 
$$
or when the maximal number of iterations has been attained (\texttt{nb.iter}=$50$).

\section{Additional tables and figures}
Figure~\ref{fig_intensities_Q2} shows the intensities used in Scenario
1 from the synthetic experiments to assess the clustering performances of our method. 

Table~\ref{risk_Q3_n20} gives the risks $\textsc{Risk}(q,l)$ and  standard
deviations of the  histogram and the kernel versions of  our method as
well as oracle quantities (obtained with known  group labels) in Scenario 2 when
$n=20$ (this is the analogue of Table~\ref{risk_Q3_n50} in the main manuscript where $n=50$).

Figure~\ref{boxplot_Q3}  shows boxplots  of  the  adjusted rand  index
obtained from  the synthetic experiments  from Scenario 2.  They are 
computed over 1000 datasets with  different  numbers $n$
of individuals and for the two estimation methods (histogram and kernel).

Figure~\ref{Qbest-20} shows the model  selection results on the number
of groups based on the integrated classification likelihood in Scenario 2
with $n=20$: the left panel  shows the frequency  of the  selected values
$\hat Q$  over the 1000  datasets; the right  panel shows boxplots  of the
adjusted rand index  between the estimated classification with 3  groups and the
true latent structure as a function of the number of groups  selected by the  integrated classification likelihood criterion. 
On this right panel, one can see that when the criterion does not select the 
correct number $Q$ the adjusted rand index of the classification
with three groups is rather low
indicating that the algorithm has failed in the classification task and probably only a local maximum of the criterion $J$ has been found.
 \\

Turning to the London bike sharing system dataset, Figure~\ref{fig:cycles_temp} shows the temporal profiles of the 2 stations in the smallest cluster for day 1. One can see that these are `outgoing' stations around 8am and  `incoming' stations between 5 and
7pm. 
Figure~\ref{fig:cycles_intens} shows the highest intensities estimated by our model between these 6
clusters, all other intensities are almost null. The most important interactions occur from cluster 4 (the smallest cluster) to cluster 5 (`City of London'
cluster) in the morning and conversely from cluster 5 to cluster 4 at the end of the day. \\

Concerning  the analysis of the Enron dataset  we  provide here  additional   tables and figures for $Q=4$ groups.  Table~\ref{tab_enron_groupcomposition} gives the size and  composition of the four groups. For a part of the persons in the dataset the position at Enron is not available. This is the reason why the total size of the group  sometimes exceeds the sum of the number of managers and employees in the group. 
 
 Figure~\ref{enron_fig_heatmap} gives the logarithm of the mean values of the estimated intensities $\alpha^{(q,l)}$. 
 The lack of symmetry of the matrix indicates that communication is far from being symmetric and that
 the use of the directed model is appropriate.
 
Finally, Figure~\ref{fig_enron_all_intensities} shows the estimated intensities and associated bootstrap confidence intervals. 
 The bootstrap intervals are obtained by parametric bootstrap. More precisely,  every bootstrap sample contains the same number of individuals (here $n=147$) and for every individual $i$ the group membership $Z_i^*$ is drawn from the multinomial distribution  $\mathcal M(1,\hat\pi)$ where $\hat\pi$ is the vector of  group probabilities estimated from the data.  Then for every pair of individuals  $(i,j)$ realizations from a Poisson process with intensity $\hat\alpha^{(Z_i^*,Z_j^*)}$ are simulated, where $\{\hat\alpha^{(q,l)}\}_{q,l=1,\dots,Q}$ denote the estimated intensities. 
 Finally,  bootstrap confidence intervals are obtained by the   percentile method.
 
Here the bootstrap intervals suffer from the fact that some of the group probabilities $\hat \pi_k$ are very low, implying that the probability that a bootstrap sample contains empty groups is relatively high ($0.15$). That is, about 15\%  of the bootstrap samples do not provide any information on some of the intensities $\alpha^{(q,l)}$, implying that the associated estimators are completely erroneous. The groups that are the most concerned by the problem are  group 2 and 3.


\section{Additional example: Primary school temporal network dataset}
To understand contacts between children at school and to quantify the
transmission opportunities of respiratory infections, data on
face-to-face interactions   in a French  primary school were collected. 
The dataset is presented in detail in~\cite{Stehle} and available
online~\citep{sociopatterns}.
Children are aged from 6 to 12 years and the school is composed of five grades, each of them comprising two classes, for a total
of 10 classes (denoted by $1A, 1B, \dots, 5A,5B$). Each class has an assigned
teacher and an assigned room. The school day runs from  8.30am to 4.30pm,
with a lunch break from 12pm to 2pm and two breaks of 20-25 min in the
morning and in the afternoon. 
Lunch is served  in a  common canteen  and a
shared playground is located outside the main building.  As the
playground and the canteen do not have enough capacity to host
all pupils at a time, only two or three classes have
breaks together, and lunch is served in two 
turns.
The dataset contains $125, 773$  face to face contacts among $n=242$
individuals ($232$ children and $10$ teachers) 
observed during  two days. We applied our procedure in the undirected setup with histograms based
on  a regular dyadic  partitions  with  maximum  size  256 ($d_{\max}  =8$).  

The integrated classification likelihood criterion achieves its maximum
with $\hat Q=17$ latent groups. 
Figure~\ref{primaryschool_tau} shows the clustering of the $n$
individuals into  the $17$ groups, where children from different
classes are represented with different colors.  
Some groups correspond exactly or almost exactly to classes 
(for example, group  9 consists almost perfectly of class 1A), whereas other classes are split into several groups (for example class 1B is splited into groups 1 and 16). Moreover, one group (group  6) corresponds to the entire class 4B with, in addition, pupils coming from almost all other classes. Teachers never form a
  particular group apart, but they are generally in the cluster of their assigned class.

The highest intensities are the
intra-group intensities. As groups mainly correspond to classes, this
highlights  that most  contacts  involve individuals of  the same  class
and that the dataset is structured into communities ({i.e.}  groups  of  highly  connected   individuals and with  few  inter-groups  interactions). Figure~\ref{primaryschool_intra_group} shows the estimated intra-group intensities for each group
with at least 3 individuals.
Peaks   of interactions are observed during the two breaks in the morning and in the afternoon.   At  lunch time
interactions between children vary from the first to the second day and are less important than during the breaks when
they play together. We also observe periods with no interaction at all. For example, the estimated intensity of group 9
(class 1A) is null between 3:30am and 4am suggesting that 
some particular school activity like sports takes place during which contacts were not observable for technical reasons.
The group number 6 composed with the entire class 4B and others pupils
clearly appears as the group with the lowest intra-group intensity. This
means that this cluster gathers the individuals having less interactions with 
others. Class 4B also appears as the class having the least intra-class interactions in~\cite{Stehle}.

Concerning   inter-group  connections most of the estimated intensities for groups  $(q,l)$ with  $q\neq l$ can be  considered as null, except  for some
that we discuss now. 
First, as  our procedure splits some  children of the same class into separate  groups,   the  inter-group  interactions associated with these clusters  correspond in
fact to intra-class interactions. For example,  class 1B is split into group 1 (with 18
 pupils) and group 16 (with 7 pupils). The estimated inter-group
 intensity shows that those two groups interact.  Our clustering has
 formed two separate groups because group 1 has more
intra-group interactions than the other (see Figure~\ref{primaryschool_intra_group}).

Second,  intensities  between groups made of children of the same grade
are significant,  suggesting that children mostly interact with children
of the same age (see e.g. Figure~\ref{primaryschool_samegrade} that shows the case $(q,l)=(5,13)$). Those interactions
are observed during the two breaks  in the morning and in the afternoon as well as at  lunch time.

Third, the estimated intensities suggest particular behaviour of some
pupils. For example, concerning class 2B
(except the two pupils assigned to group 6), which is separated into
group 12 (with 21 pupils), group 11 (with 2 pupils) and group 17 (with
only one pupil), the  estimated intensities (see
Figure~\ref{primaryschool_class_2B}) suggest first
that the two children in group 11 
have very strong interaction with the 
pupil in group 17 (notice the different $y$-scale used in the Figure),
and second that those interactions do not occur during the lunch time.

Similar results are obtained for classes 2A and 5B. This means that our procedure detects subgroups of pupils with a specific  behaviour.

\section{The sparse setup: theory}
We consider an extended  setup where some of the
processes $N_{i,j}$ may have a null intensity.  We thus introduce 
additional latent variables $U_{i,j} \in \{0,1\},\ ((i,j)\in \mathcal{R})$ that conditional
on the $Z_i$'s are independent Bernoulli with $\beta_{q,l}$ being the
parameter of the distribution of $U_{i,j}$ conditional on $Z^{i,q}Z^{j,l}=1$. 
We keep the global conditional independence
assumption  by imposing  that conditional  on $(Z_i,U_{i,j})_{(i,j)\in
  \mathcal{R}}$ the counting processes $(N_{i,j})_{(i,j)\in
  \mathcal{R}}$ are independent. 
  Then   conditional  on   $U_{i,j},
Z_i , Z_j$, the counting process $N_{i,j}$ is an inhomogeneous
Poisson process with intensity 
\[
U_{i,j}                                \alpha^{Z_i,Z_j}=\sum_{q,l=1}^Q
U_{ij}Z^{i,q}Z^{j,l}\alpha^{(q,l)}, \quad ( (i,j)\in \mathcal R).
\]
In this way the  additional  latent   variable $U_{i,j}$  accounts  for  sparsity   in  the
interaction processes; in each pair of groups $(q,l)$, there is now
a proportion $1-\beta_{q,l}$ of  dyads $(i,j)\in \mathcal R$ that do
not interact  (so that the  corresponding process $N_{i,j}$  is almost
surely  0). As  such these  non interacting  dyads will  not tend  to
decrease the estimate of the common intensity $\alpha^{(q,l)}$. Moreover clustering in this model should give different
groups, less driven by the absence of interactions. 

We let $\mathcal{U}=(U_{i,j})_{(i,j)\in \mathcal{R}}$ and the parameter value
is $\theta=(\pi, \beta,\alpha)$. \\

Identifiability may  be proved  under the same  assumptions, requiring
moreover that none of the intensities $\alpha^{(q,l)}$ is itself equal
to zero. We discuss this in the undirected case (similarly to the identifiability proof of the main model). 
 Indeed $N_{i,j}$ is now  a counting process directed  by the
random measure $A_{i,j}$ whose distribution is 
\[
A_{i,j}  \sim \sum_{q=1}^Q  \sum_{l=1}^Q  \pi_q  \pi_l \{  \beta_{q,l}
\delta_{A^{(q,l)}} +(1-\beta_{q,l}) \delta_{0} \}.
\]
Fixing  three   distinct  integers  $i,j,k$  in   $\{1,\dots,n\}$  and
considering        the        trivariate       counting        process
$(N_{i,j},N_{i,k},N_{j,k})$  we end  up with  the distribution  of the
triplet  of random  measures  $(A_{i,j},A_{i,k},A_{j,k})$. There  the
expressions become more cumbersome but the very same reasoning may be
applied to identify   the  measures
$\{A^{(q,l)} : q,l=1,\dots, Q ;\ q\leq l\}$ up to a permutation
in $\mathfrak{S}_Q$. 
Concerning identification of  $\pi$ and $\beta$, we obtain  the set of
weights $\{ \pi_q^3\beta_{q,q}^3 : q=1,\dots, Q\}$ that is attached to
the $Q$ components corresponding to Dirac masses at points of the form
$(A^{(q,q)},A^{(q,q)},A^{(q,q)})$. Moreover we also obtain the set of weights
$\{  \pi_q^3\beta_{q,q}^2(1-\beta_{q,q}) :  q=1,\dots, Q\}$  attached to
Dirac masses at points $(A^{(q,q)},A^{(q,q)},0)$.  As the
$A^{(q,q)}$'s are unique we can  match the value $\pi_q^3\beta_{q,q}^3 $
with $\pi_q^3\beta_{q,q}^2(1-\beta_{q,q})  $ and thus  obtain (through
a simple ratio) $\beta_{q,q}$ and also $\pi_q$.  In other words the sets $\{\beta_{q,q} :
q=1,\dots, Q\}_q$ and $\{\pi_q : q=1,\dots, Q\}_q$ are identifiable. 
 Finally we may look at the weights $\pi_q^2\pi_l\beta_{q,q}\beta_{q,l}^2$
 associated with Dirac masses at points of the
 form  $(A^{(q,q)},A^{(q,l)},A^{(q,l)})$   with  $q\neq  l$.   As  the
 cumulative intensities  $\{A^{(q,l)} :  q,l=1,\dots, Q ;\  q\leq l\}$
 have been identified up to a permutation
in $\mathfrak{S}_Q$, together with the pair of $(\pi_q, \beta_{q,q})$'s (for the
same permutation), we obtain the values  $\{\beta_{q,l} :  q\neq  l\}$.  \\

Let us turn to inference of this model. To fix the notation we use the directed setup but similar equations may be
derived in the undirected case. 
The complete-data likelihood is 
\begin{align*}
\mathcal L_{\textrm{sparse}} (\mathcal O,\mathcal Z, \mathcal{U} \mid \theta)
=&\mathcal   L(\mathcal   O\mid\mathcal  Z,\mathcal{U},\theta)   \times
\mathcal L(\mathcal{U} \mid \mathcal Z, \theta) 
 \times \mathcal L(\mathcal Z\mid\theta) \\
=&\exp\left\{-\sum_{(i,j)\in\mathcal
    R}                                         U_{i,j}A^{(Z_i,Z_j)}(T)
  \right\} \times \left[\prod_{m=1}^M\alpha^{(Z_{i_m},Z_{j_m})}(t_m) \right] \\
& \times \left[\prod_{(i,j)\in \mathcal{R}} \prod_{q=1}^Q 
\prod_{l=1}^Q \left\{ \beta_{q,l}^{U_{i,j}}(1-\beta_{q,l})^{1-U_{i,j}} \right\}^{Z^{i,q} Z^{j,l}} \right] \times
\left[\prod_{i=1}^n\prod_{q=1}^Q\pi_q^{Z^{i,q}} \right].
\end{align*}

The true conditional distribution of the latent variables $(\mathcal Z,
\mathcal U)$ given the observations writes 
\begin{align*}
\Pt(\mathcal Z,\mathcal U \mid \mathcal O) = \Pt(\mathcal Z \mid \mathcal O)
\Pt(\mathcal U \mid  \mathcal Z,\mathcal O) = \Pt(\mathcal Z \mid \mathcal O) \prod_{(i,j)\in \mathcal{R}} \Pt (U_{i,j} \mid Z_i,
  Z_j , N_{i,j}) .
\end{align*}
A  main  difference  with  the  previous  setting  is  that  now  this
conditional distribution has two parts: the one concerning $\mathcal{U}$
has  a factorised  form and  can thus  be computed  exactly, while  the
 part concerning $\mathcal Z$  still has an intricate dependence
structure and we rely on a variational approximation to deal with it. 
We thus introduce a new conditional factorised
distribution      $\Pv(\cdot \mid \O)$      on     the      variables
$\mathcal{Z},\mathcal{U}$ that depends on the observations $\O$ and is defined
as 
\begin{align*}
&\Pv            \left(\mathcal            Z=(q_1,\dots,q_n),
  \mathcal{U}=(u_{i,j})_{(i,j)\in \mathcal{R}} \mid\mathcal O\right) \\
&=\prod_{i=1}^n\P_{\tau} (Z_i=q_i\mid\mathcal  O) \prod_{(i,j)\in
  \mathcal{R}}\Pt  (U_{i,j}=u_{i,j}\mid   Z_i=q_i,  Z_j=q_j
  ,N_{i,j}) \\
&=\prod_{i=1}^n\tau^{i,q_i} \prod_{(i,j)\in
  \mathcal{R}}\Pt  (U_{i,j}=u_{i,j}\mid   Z_i=q_i,  Z_j=q_j
  ,N_{i,j}) , 
\end{align*}
for any $ (q_1,\dots,q_n) \in\{1,\dots,Q\}^n$ and $ (u_{i,j})_{(i,j)\in
  \mathcal{R}} \in \{0,1\}^r$. 
As it does not depend on $\theta$, we let $\Pvtau(\Z \mid \O)$ denote the marginal distribution on $\Z$ of the
distribution $\Pv(\cdot \mid \O)$
and $\Evtau(\cdot \mid \O)$ the corresponding expectation. 
Moreover,  the true conditional distribution of $U_{i,j}$ is given by 
\begin{align}\label{eq:rho}
   &\Pt   (U_{i,j}=1  \mid   Z_i=q,  Z_j=l   ,  N_{i,j})   =
  1\{N_{i,j}(T) >0\} + \rho_{\theta}(q,l)  1\{N_{i,j}(T) =0\} := \rho_\theta(i,j,q,l)
     ,\nonumber \\
&\text{where } 
\rho_{\theta}(q,l) = \frac{\beta_{q,l}  \exp\{-A^{(q,l)}(T)\} } {1-\beta_{q,l}
             +\beta_{q,l} \exp\{-A^{(q,l)}(T)\}} . 
\end{align}
Indeed,  whenever  we observe  an  interaction  event between  $(i,j)$
(namely    $N_{i,j}(T)>0$)   we    know    that   $U_{ij}=1$    almost
surely. Otherwise  ($N_{i,j}(T)=0$), we  either have a  null intensity
process or a non-null intensity process with zero observations. 
Note that the parameters $\rho_{\theta}(q,l)$ 
 (or equivalently the $\rho_\theta(i,j,q,l)$) 
are not additional variational parameters; these are just functions of the original
parameter $\theta$. Finally we have 
\[\Pv            \left(\Z,  \U \mid\mathcal
  O\right) = \Big\{\prod_{i=1}^n \Pvtau(Z_i \mid \O ) \Big\} \times \prod_{(i,j)\in
  \mathcal{R}} {\rho_\theta(i,j,Z_i,Z_j)}^{U_{i,j}} ({1-\rho_\theta(i,j,Z_i,Z_j)})^{1-U_{i,j}}. 
\]

Let us now derive our variational approximation. 
Denoting by $\Ev(\cdot \mid \O)$ the expectation under the
distribution $\Pv(\cdot \mid \O)$ on $(\Z,\mathcal{U})$ and by
$\theta^{[s]}$ the current parameter value, we write  as usual 
\begin{align*}
 & \log \Ptp(\O) \\
  &= \Ev \{ \log \Ptp(\O,\Z,\U) \mid \O\} - \Ev \{\log \Ptp(\Z,\U \mid \O) \mid \O \} \\
&= \Ev \{ \log \Ptp(\O,\Z,\U)\mid \O \} + \mathcal{H}\{\Pv(\Z,\U \mid \O) \} + \KL{\Pv(\Z,\U \mid \O)}{\Ptp(\Z,\U \mid \O)}  .
\end{align*}

As a consequence, we introduce  a new criterion $\tilde {J}(\tau, \theta; \theta^{[s]})$ that is a lower bound on the
log-likelihood $ \log \Ptp(\O)$ and defined as

\begin{align*}
 \tilde{J} (\tau, \theta;\theta^{[s]}) =& \Ev \{ \log \Ptp(\O,\Z,\U) \mid \O \}
                  + \mathcal{H}\{ \Pv (\Z, \U \mid
  \mathcal{O}) \} \\
= &-\sum_{(i,j)\in \mathcal{R}} \sum_{q=1}^Q\sum_{l=1}^Q \tau^{i,q}\tau^{j,l} 
 \rho_{\theta}(i,j,q,l)  \{A^{[s]}\}^{(q,l)}(T)  \\
 & + \sum_{q=1}^Q\sum_{l=1}^Q  \sum_{m=1}^M   \tau^{i_m,q}\tau^{j_m,l}
   \log\left [ \{\alpha^{[s]}\}^{(q,l)}(t_m)\right] \\
  & + \sum_{(i,j)\in     \mathcal{R}}\sum_{q=1}^Q\sum_{l=1}^Q 
                                   \tau^{i,q}\tau^{j,l}
   \left[ \rho_{\theta}(i,j,q,l)\log \beta^{[s]}_{q,l} + \{1-\rho_{\theta}(i,j,q,l) \} \log (1-\beta^{[s]}_{q,l})
    \right ] \\
 &+\sum_{i=1}^n\sum_{q=1}^Q \tau^{i,q} \log \left\{\frac{\pi^{[s]}_q}
        {\tau^{i,q}} \right\}
        -  \sum_{(i,j)\in \mathcal{R}} \sum_{q=1}^Q\sum_{l=1}^Q \tau^{i,q}\tau^{j,l} \psi (\rho_{\theta}(i,j,q,l)), 
\end{align*}
where $\psi(\rho)= \rho \log \rho +(1-\rho)\log(1-\rho)$ is the entropy of the Bernoulli distribution with parameter
$\rho$. 
Using the definition of $\rho_{\theta}(i,j,q,l)$ and $\psi(1)=0$, the last
  term in the right-hand side   simplifies to 
 \[
\sum_{(i,j)\in \mathcal{R}} \sum_{q=1}^Q\sum_{l=1}^Q \tau^{i,q}\tau^{j,l} \psi (\rho_{\theta}(i,j,q,l))=  \sum_{q=1}^Q\sum_{l=1}^Q \psi(\rho_{\theta}(q,l)) \sum_{(i,j)\in \mathcal{R}} \tau^{i,q}\tau^{j,l}1\{N_{i,j}(T) =0\} .    
  \]
The variational \texttt{E}-step consists in maximizing $ \tilde{J}(\tau,\theta; \theta^{[s]})$ with respect to $(\tau,\theta)$. This is equivalent to 
choosing the variational distribution $\Pv$ that minimises the Kullback-Leibler divergence $ \KL{\Pv(\Z,\U \mid
  \O)}{\text{pr}_{\theta^{[s]}}(\Z,\U \mid \O)} $. The solution in $\theta$ is naturally obtained for
$\theta=\theta^{[s]}$. 

We need to choose the variational parameter $\tau$ that maximizes $
\tilde{J} (\tau, \theta^{[s]};\theta^{[s]})$.
Similarly to the non
sparse setup we obtain that $\tau^{[s]} $ satisfies a fixed point equation in $\tau$,
\begin{equation}
  \label{eq:fixie}
 \tau^{i,q}\propto \pi_q^{[s]} \exp\{\tilde{D}_{iq}( \tau,\theta^{[s]})\}, \quad
(i=1,\dots,n;\ q=1,\dots,Q), 
\end{equation}
where  
\begin{align*}
  \tilde{D}_{iq}(\tau,\theta) = &-\sum_{l=1}^Q \sum_{\substack{j=1\\
  j\neq i}}^n \tau^{j,l}
\left\{ \rho_\theta(i,j,q,l) A^{(q,l)}(T) + \rho_\theta(j,i,l,q) A^{(l,q)}(T) 
  \right\} \\
& 
-\sum_{l=1}^Q \psi(\rho_\theta(q,l) ) \sum_{\substack{j=1\\
  j\neq i}}^n \tau^{j,l} 1\{N_{i,j}(T) =0\} - \sum_{l=1}^Q\psi(\rho_\theta(l,q) ) \sum_{\substack{j=1\\
  j\neq i}}^n \tau^{j,l}  1\{N_{j,i}(T) =0\} 
\\
&+\sum_{l=1}^Q\sum_{m=1}^M    \left[   \1_{\{i_m=i\}}\tau^{j_m,l}
  \log\left\{\alpha^{(q,l)}(t_m)\right\}+\1_{\{j_m=i\}}\tau^{i_m,l}\log\left\{\alpha^{(l,q)}(t_m)\right\}
  \right] \\
&   +  \sum_{l=1}^Q   \sum_{\substack{j=1\\  j\neq   i}}^n  \tau^{j,l}
  [  \rho_\theta(i,j,q,l)\log  \beta_{q,l}   +  \{1-\rho_\theta(i,j,q,l)  \}  \log
  (1-\beta_{q,l}) \\
& \qquad \qquad +\rho_\theta(j,i,l,q)\log  \beta_{l,q}   +  \{1-\rho_\theta(j,i,l,q)  \}  \log
  (1-\beta_{l,q}) ] .
\end{align*}

The  \texttt{M}-step consists in maximizing $\tilde{J}(\tau^{[s]},\theta^{[s]}; \theta)$ with respect to $\theta$. It 
is  again divided  into two  parts, treating  the
finite-dimensional parameter $(\pi,\beta)$ differently than the infinite
dimensional     one    $\alpha$.     We     thus    first     maximize
$\tilde{J} (\tau^{[s]},\theta^{[s]};\pi, \beta,\alpha)$ with respect to $(\pi,\beta)$ using the current parameter value
$\alpha=\alpha^{[s]}$. 
The solution with  respect to $\pi$ is  the same as in  the non sparse
case and given in~\eqref{eq_mstep_pi_q}.  
Now optimization with respect to $\beta$ leads to (denoting $
\rho^{[s]}= \rho_{\theta^{[s]}}$), 
\begin{equation}
  \label{eq:betaql}
  \beta_{q,l}^{[s+1]} = \frac{\sum_{(i,j)\in  \mathcal R} \{\tau^{[s]}\}^{i,q}\{\tau^{[s]}\}^{j,l}
    \rho^{[s]}(i,j,q,l)}{ \sum_{(i,j)\in  \mathcal R} \{\tau^{[s]}\}^{i,q}\{\tau^{[s]}\}^{j,l}}
  , \quad q,l=1,\dots, Q. 
\end{equation}

Then estimation of the intensities $\alpha^{(q,l)}$ is done exactly as
previously, except that we replace the variational process $N^{(q,l)}$
by      $\tilde{N}^{(q,l)}      =     \sum_{(i,j)\in      \mathcal{R}}
\rho^{[s]}(i,j,q,l)\{\tau^{[s]}\}^{i,q}\{\tau^{[s]}\}^{j,l} N_{i,j}$. \\

Finally we  start from  an initial value  of the  clusters $\Z$
  (see   Section~\ref{sec:init})  that   we  treat   as  probabilities
  $\{(\tau^{i,q})_{1\le q \le Q}; 1\le i \le n\}$.  Then we initialise the sparsity parameters
  $\beta_{q,l}$ and mean intensities $A^{(q,l)}(T)$ with 
\[
\beta_{q,l}   =\frac    {\sum_{(i,j)\in   \mathcal{R}}   Z^{i,q}
  Z^{j,l}1\{N_{i,j}(T)>0\} } { \sum_{(i,j)\in \mathcal{R}} Z^{i,q}
  Z^{j,l}} ,\qquad
A^{(q,l)} (T) = \frac{\sum_{(i,j)\in   \mathcal{R}}   Z^{i,q}
  Z^{j,l} N_{i,j}(T)} {\sum_{(i,j)\in   \mathcal{R}}   Z^{i,q}
  Z^{j,l}1\{N_{i,j}(T)>0\} }.
\]
This enables to initialise $\rho(i,j,q,l)$ with~\eqref{eq:rho}. 
 After these initialisations, we are ready to iterate the following steps. At iteration $s\ge 1$
we do 
\begin{itemize}
\item    \texttt{M}-step:    Update     $\pi^{[s+1]}$    via~\eqref{eq_mstep_pi_q} with  $\tau^{[s]}$;  Update $\beta^{[s+1]}$  via~\eqref{eq:betaql} with   $\tau^{[s]},\rho^{[s]}$; Update $\alpha^{[s+1]}$   either  via Equation~\eqref{eq:mstep_histo}
  (histogram   method)   or~\eqref{eq_mstep__kernel_lamb_ql}   (kernel
  method)  using  the  process $\tilde{N}^{(q,l)}  $  and  variational
  parameters $\rho^{[s]}$ and  $\tau^{[s]}$. 
\item \texttt{VE}-step: Update   the values   $\rho^{[s+1]}$ via~\eqref{eq:rho} with $\beta^{[s+1]}$ and
  $A^{[s+1]}(T)$ derived from $\alpha^{[s+1]}$; Update  $\tau^{[s+1]}$ as  the solution to the fixed
  point  equation~\eqref{eq:fixie}  relying   on  the  current  values
  $\pi^{[s+1]}, \rho^{[s+1]},\alpha^{[s+1]}$, $\beta^{[s+1]}$. 
\end{itemize}

The integrated classification likelihood criterion becomes
\begin{equation*}
\ICL_{\textrm{sparse}} (Q) = \log \mathbb{P}_{\hat \theta(Q)}\{ \mathcal{O},\hat
\tau(Q), \hat \rho(Q)\} -\frac 1 2 (Q-1)\log n - \frac 1 2 \log r \Big[ Q^2 +
\sum_{q=1}^Q\sum_{l=1}^Q 2^{\hat d^{(q,l)} } \Big].
\end{equation*}

\section{The sparse setup: examples}

We first discuss the results of the sparse analysis on the London bike sharing system dataset. 
First let us recall that in this dataset only $7\%$ of pairs of bike stations have at least one interaction. The main
model ignores that fact and this impacts the results in the sense that groups are mainly driven by these absences of
interactions. For instance the clusters obtained are mostly geographic, revealing absences of connections between
distant bikes stations (nonetheless we also discovered an interesting small cluster with this model). We thus decide to
explore whether one can decipher different structure with our sparse setup. In the following we focus on day 1 as similar results were obtained for day 2.

First our sparse integrated classification likelihood criterion selects only $\hat Q=2$ groups (compared to
$\hat Q=6$ in the non sparse case).  
 Geographic locations of the bike stations and the resulting clusters are represented on a city map (thanks to the
OpenStreetMap project), see Figure~\ref{fig:cycles_day1_sparse}. There is one
group containing a central part of the city while the remaining stations form a large peripheral cluster. Looking
at the estimated intensities (Figure~\ref{fig:cycles_day1_intens_sparse}) we see that the second group
(i.e. the central geographical group)  has large intra-group intensity with three modes: one in the morning
(around 8:30 am), one at lunch (around 1pm) and the last at the end of the day (at 5:50pm). Group 1 (the peripheral
one) mostly consists in `leaving' stations in the morning (see mode in the estimated intensity for $(q,l)=(1,2)$ around 8:20am)
and in `arriving' stations at the end of the day (mode in the estimated intensity for $(q,l)=(2,1)$ at 5:50pm).
Intra-group interactions in group 1 have a much lower intensity. On this dataset the sparse setup appears as a
complementary model that may shed some different light on the data. \\

We also analysed the Enron corpus with the sparse model as   $91$\% of the pairs of individuals  do not exchange any
email during the observation time. The sparse integrated classification likelihood criterion chooses $\hat Q=10$ as the optimal
number of groups, which is smaller than in the non sparse model where no optimum has been found in the  range of $Q$
from 1 to 20. As in the non sparse model the algorithm identifies one large group with 125 members, while the other nine
groups contain at most five individuals. {The adjusted rand index of the clustering in the sparse case with $Q=10$ (or $Q=4$)
  and the clustering in the non sparse model with $Q=4$ equals $0.51$ ($0.52$), which means that there are substantial
  differences between the  clusterings in the two models.} Figure \ref{enron_sparse_beta_10} shows the estimated values of the connectivity probabilities $\beta_{q,l}$. Most of these probabilities are significantly lower than $1$ justifying the application  of the sparse model to these data. 
A consequence of low connectivity probabilities $\beta_{q,l}$ is that the estimated intensities are more elevated than in the non sparse case which can be observed in Figure \ref{enron_sparse_alpha_Q10} in comparison to the intensity values obtained in the non sparse model (Figure \ref{enron_fig_heatmap}). We can also compare the estimated intensities in the sparse model with $Q=4$ (Figure~\ref{enron_sparse_intens_Q4}) with those in the non sparse case. Again we see that the intensities in the sparse setup are much more elevated. Moreover, the form of the intensities involving two small groups (i.e. $(q,l)\in\{2, 3 ,4\}^2$) are all quite different in the two models. 

We conclude that as in the London bikes example the results in the
sparse model differ much from those in the non sparse case. The sparse
model tends to select a smaller number of groups which makes
interpretation of results easier. As many real datasets are sparse in
the sense that only a small percentage of individuals effectively
interact with another the sparse model seems to be particularly adapted
to real data and provides the possibility of further insights on the
data. \\

We also analysed the primary school dataset with the sparse model. In this dataset $28\%$ of pairs of individuals have at least one interaction. 
Our sparse  integrated classification likelihood criterion selects $\hat Q=13$
groups, which is smaller than in the non sparse model. The clustering in
the non sparse model and in the sparse model are quite close, with some
groups being the same.  
The main difference between the two clusterings concerns the group
composed in the non sparse model of class 4B with additional  pupils
coming from almost all other classes. This group was characterized by
the lowest intra-group intensity. In the sparse model, class 4B is
separated into two groups: 6 pupils  are gathered with  class
4A to form one group (group 1), whereas the 17 remaining pupils are
gathered with class 1A and some pupils coming from almost all other
classes (group 3). Looking at the estimated intensities, we see that
group 3 has a low intra-group intensity during the lunch time contrary to
group 1 (see Figure~\ref{primaryschool_sparse_class_4B_1A}). Moreover the estimated intra-connectivity
probability for groups 1 and 3 are given by $\hat{\beta}_{1,1}=0.84$ and
$\hat{\beta}_{3,3}=0.29$.  Therefore group 3 is composed of individuals
which only a few proportion interacts, and characterized by very few
interactions during the lunch time.
On this dataset  with the non sparse and sparse models we mainly recover the
same clustering based on communities, but
the sparse model also exhibits particular temporal profile of some
individuals.

\newpage 

\begin{figure}
  \centering
  \includegraphics[width=\textwidth]{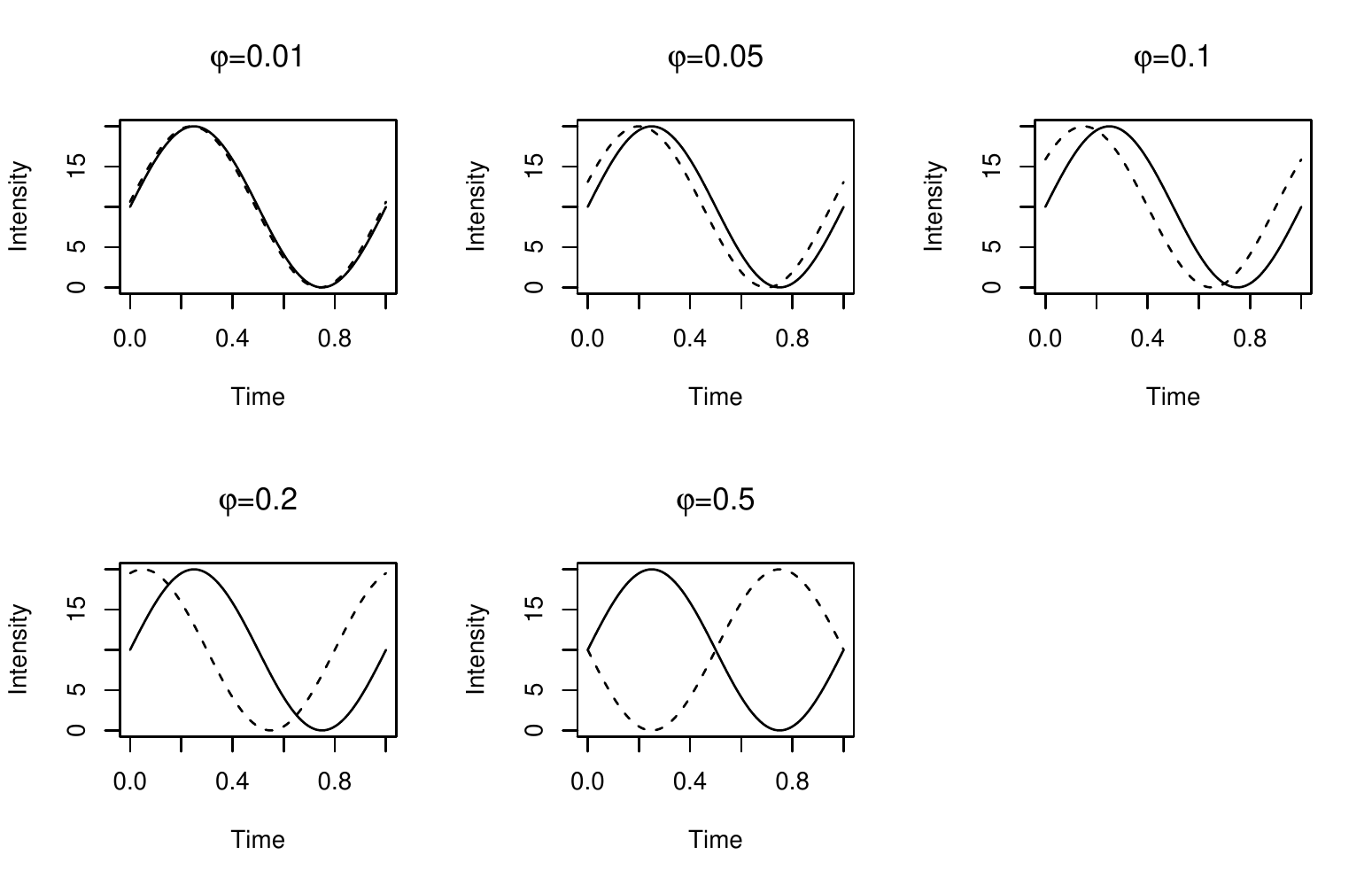}
  \caption{Intensities in synthetic experiments from Scenario 1. Each picture represents the intra-group intensity $\alpha^{\text{in}}$ (bold line) and the
    inter-group intensity  $\alpha^{\text{out}}$  (dotted line) with different
    shift parameter values $\varphi \in \{0.01,0.05,0.1,0.2,0.5\}$. }
  \label{fig_intensities_Q2}
\end{figure}

\begin{table}
\centering
\def~{\hphantom{0}}
\caption{Mean number of events and risks with standard deviations (sd) in scenario 2 with $n=20$.
Histogram (Hist) and kernel (Ker) estimators are compared with their oracle counterparts (Or.Hist, Or.Ker). 
All values associated with the risks are multiplied by 100.}
{
\begin{tabular}{lccccc}
 Groups $(q,l)$ & Nb.events & Hist (sd)  & Or.Hist (sd)  & Ker (sd) & Or.Ker (sd) \\ 
\\
 $(1,1)$ & 84 & 136 (92) & 50 (49) & 215 (83) & 113 (55) \\ 
   $(1,2)$ & 146 & 177 (146)  & 98 (27)  & 270 (107) & 194 (23) \\ 
     $(1,3)$  & 86 & 211 (160) & 78 (20) & 178 (143) & 43 (18) \\ 
    $(2,2)$  & 32 & 136 (72)& 108 (29)  & 139 (109) & 71 (41) \\ 
    $(2,3)$  & 130 &  265 (72) & 217 (28) & 238  (78) & 182  (22) \\ 
    $(3,3)$ & 48 & 173 (61) & 158 (47) & 171 (111) & 85 (43) 
\end{tabular}
}
\label{risk_Q3_n20}
\end{table}

\begin{figure}[h]
  \centering
  \includegraphics[width=\textwidth]{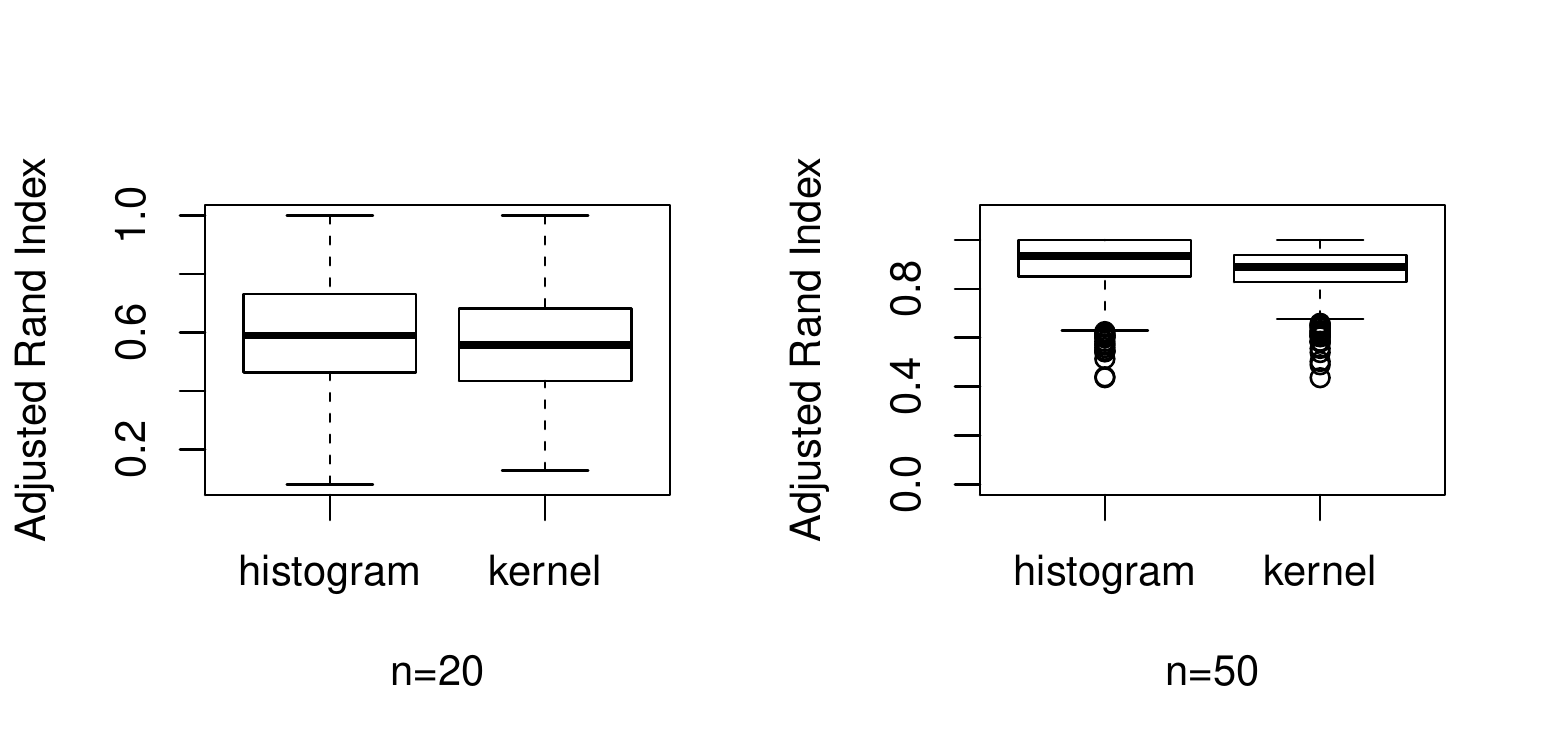}
  \caption{Boxplots   of  the   adjusted  rand   index  in   synthetic
    experiments from Scenario 2, for the histogram (left) and the kernel (right) estimators. Left panel: $n= 20$, right panel: $n=50$.}
\label{boxplot_Q3}
\end{figure}

\begin{figure}[h]
  \centering
 \includegraphics[width=\textwidth]{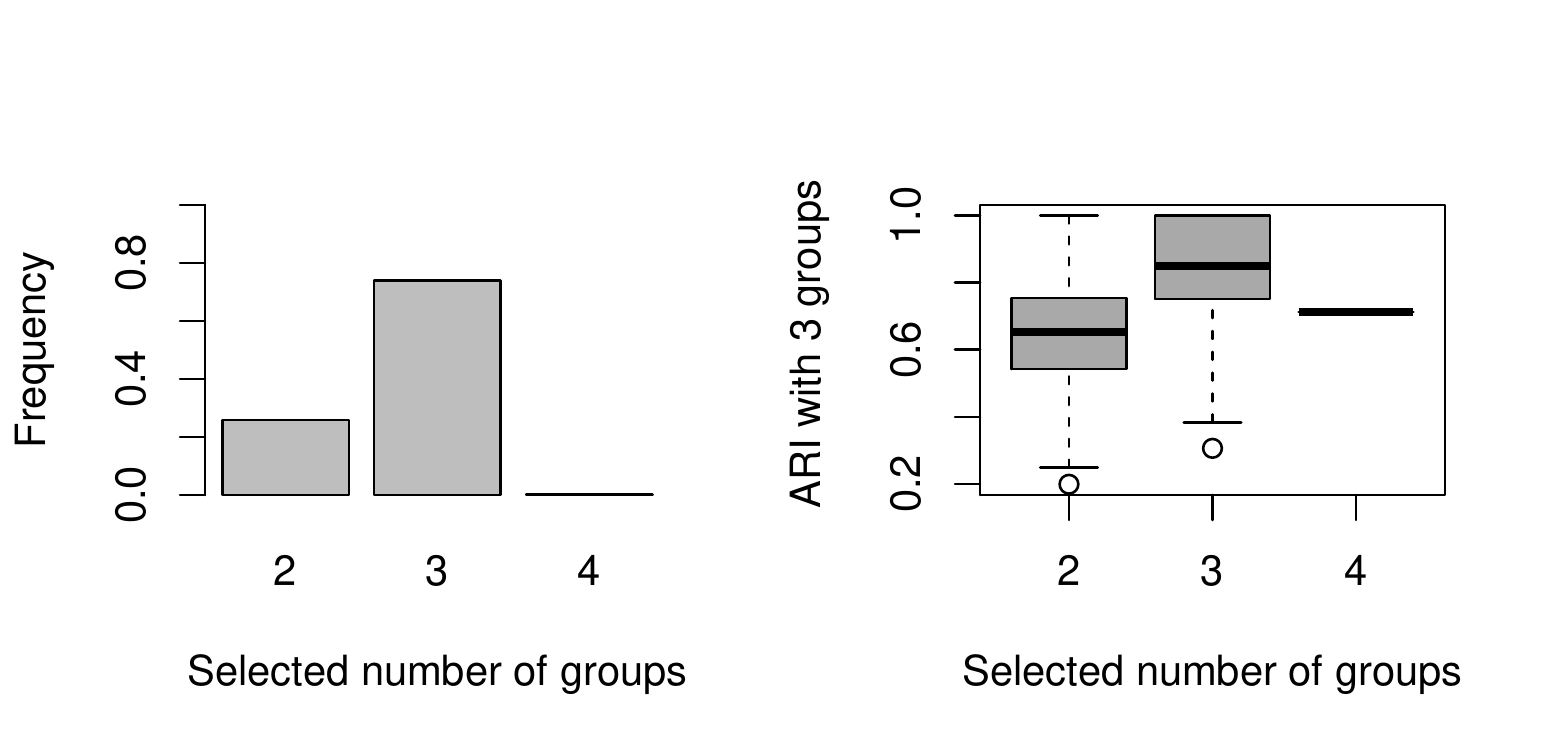}
  \caption{Selection of the number of latent groups
    via the integrated classification likelihood criterion in Scenario 2 with $n= 20$. Left panel:   frequencies of  selected
    number of groups. Right panel: adjusted rand index between the classification into 
    three groups and true classification as a function of the number of selected groups by \ICL (in    $\{2,3,4\}$).}
  \label{Qbest-20}
\end{figure}

\begin{figure}[h]
  \centering
 \includegraphics[width=\textwidth]{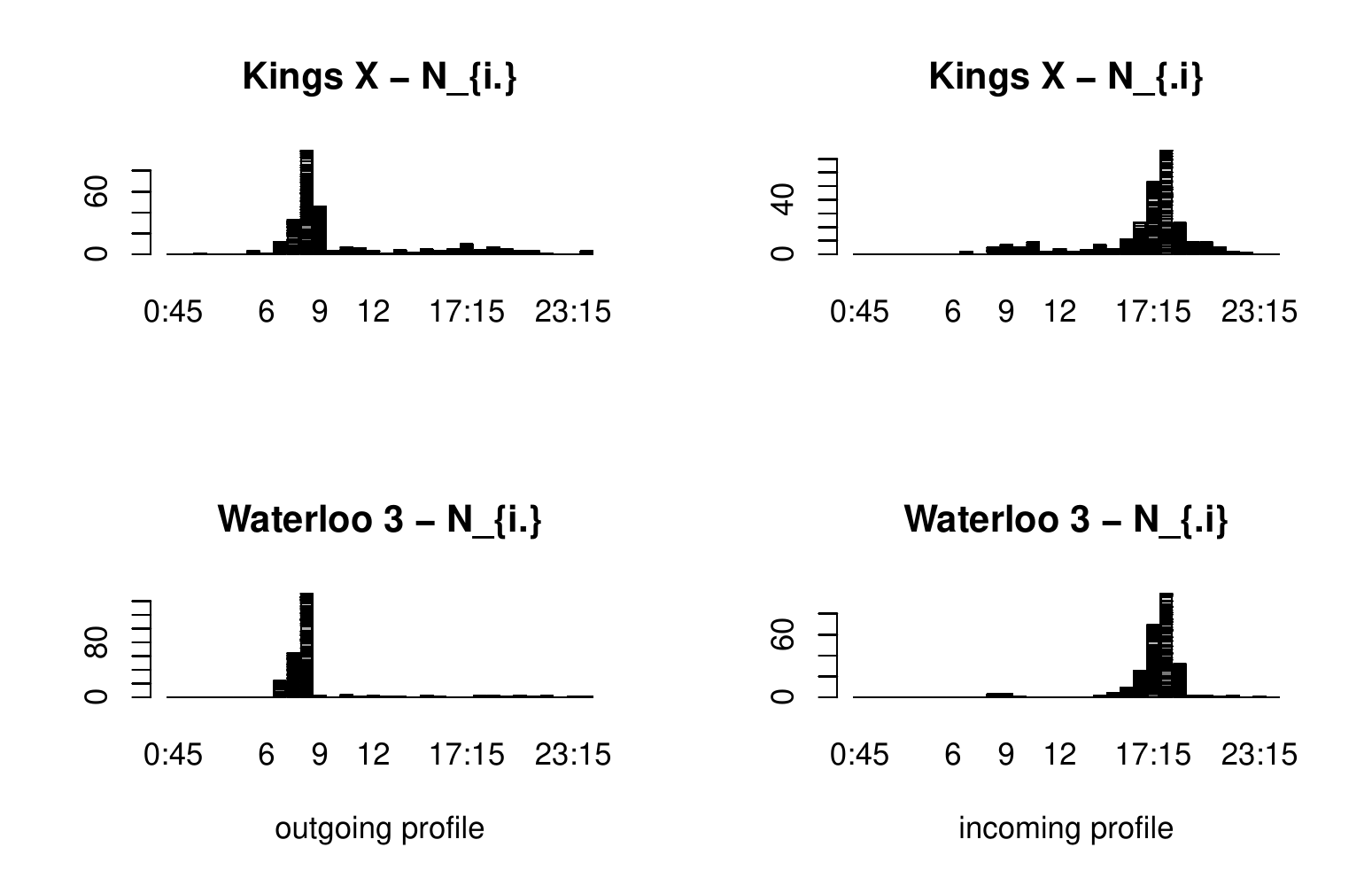}
  \caption{London bike sharing system: Barplots of outgoing ($N_{i\cdot}(\cdot)$ on the left) and incoming ($N_{\cdot
      i}(\cdot)$ on the right) processes from the 2 stations $i$ (top row and bottom row, respectively) in the small  cluster: representation of 
    volumes of connections to all other  stations during day 1 (time on the $x$-axis).}
  \label{fig:cycles_temp}
\end{figure}

\begin{figure}
  \centering
  \includegraphics[width=\textwidth]{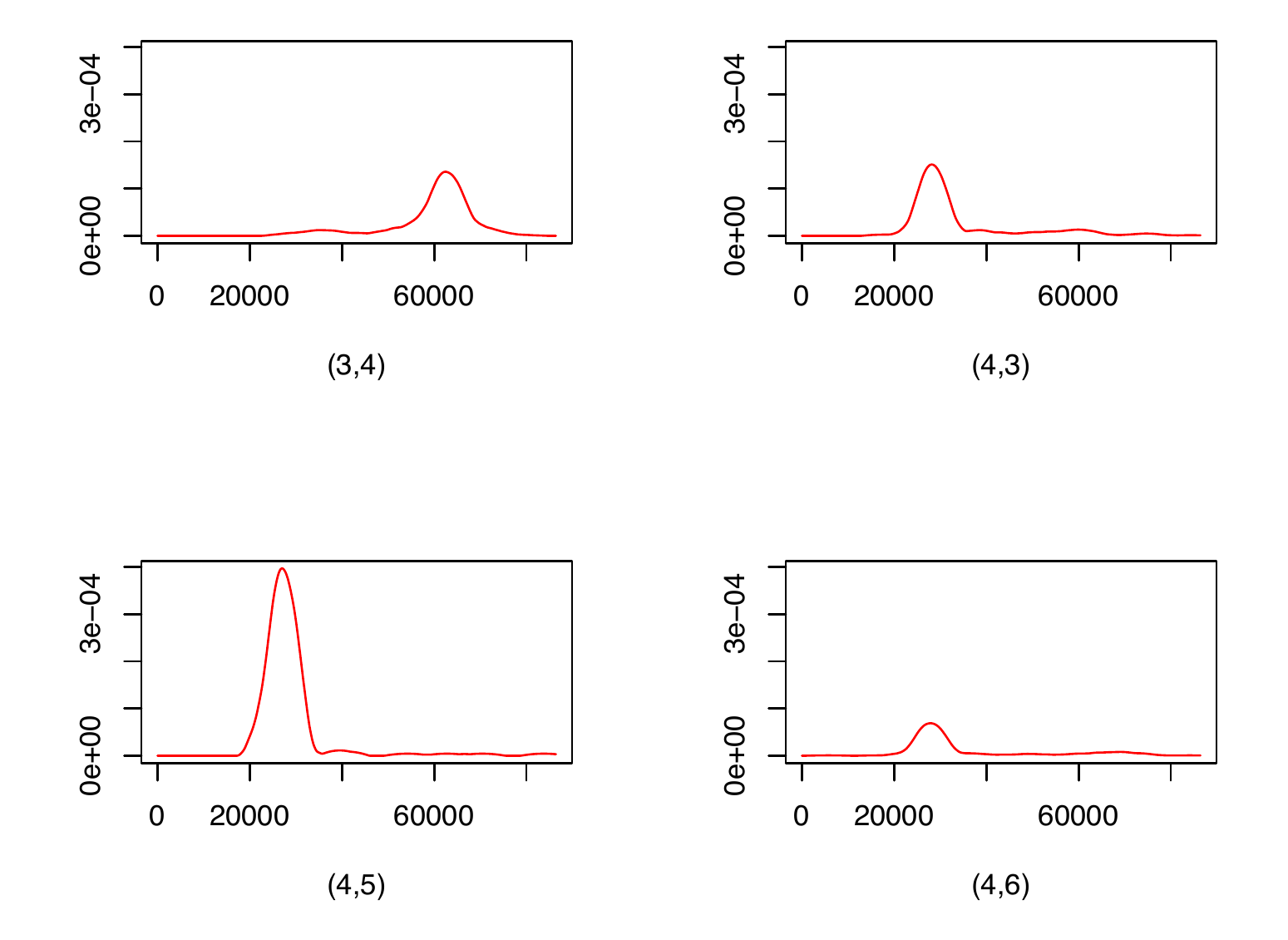}
  \includegraphics[width=\textwidth]{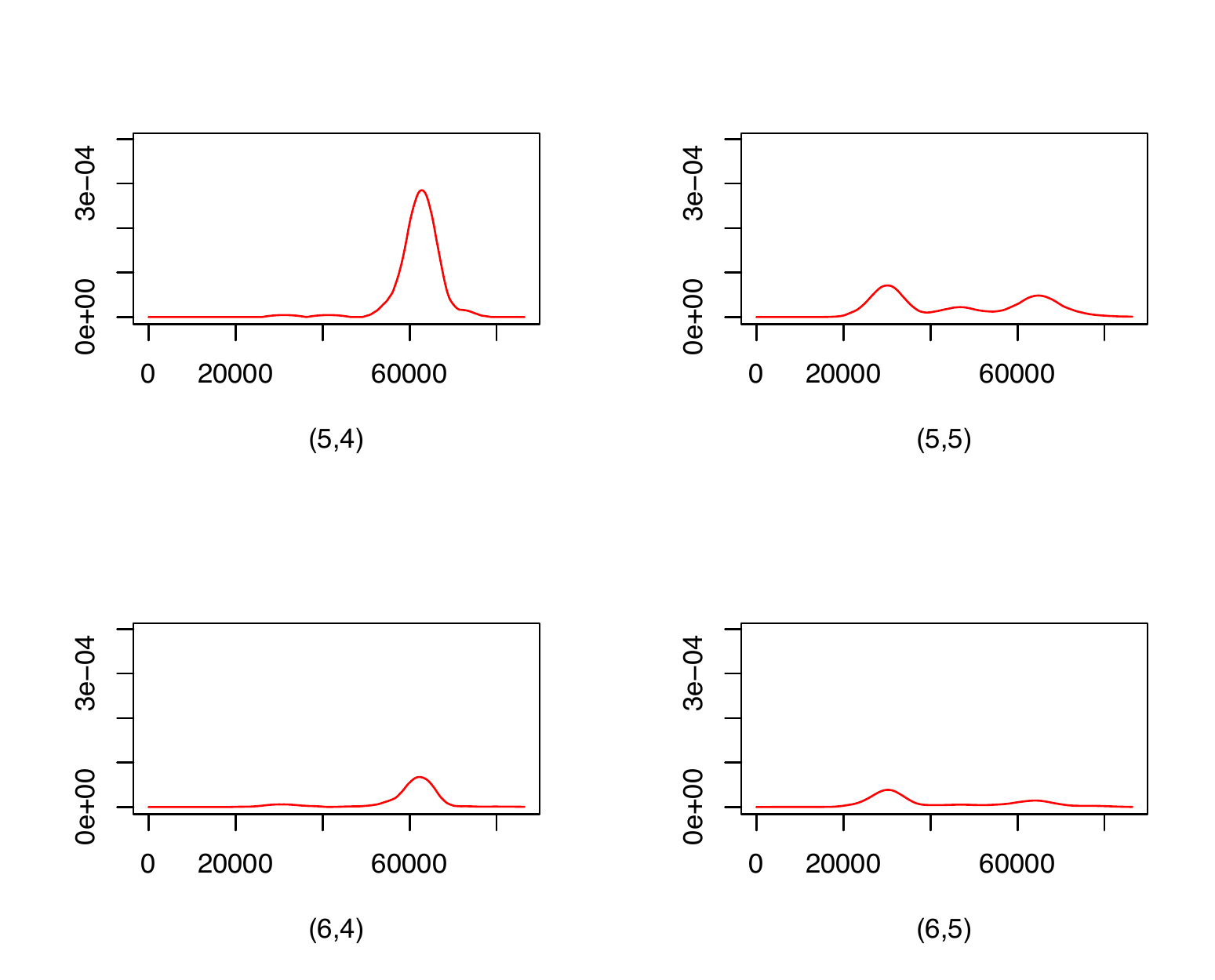}  
  \caption{London bike sharing system: estimated non almost null intensities for day 1 (time on the $x$-axis is in seconds).}
    \label{fig:cycles_intens}
\end{figure}

\begin{table}
\centering
\def~{\hphantom{0}}
\caption{Enron: Total size and group composition with $Q=4$ groups (some people's positions are unknown).}
{
\begin{tabular}{cccc}     
&total&managers&employees\\
\\
group 1&127&62&36\\
group 2&4&0&3\\
group 3&2&1&1\\
group 4&14&12&1
\end{tabular}
}
\label{tab_enron_groupcomposition}%
\end{table}

\begin{figure}[ht]
\centering
      \includegraphics[width=0.7\textwidth]{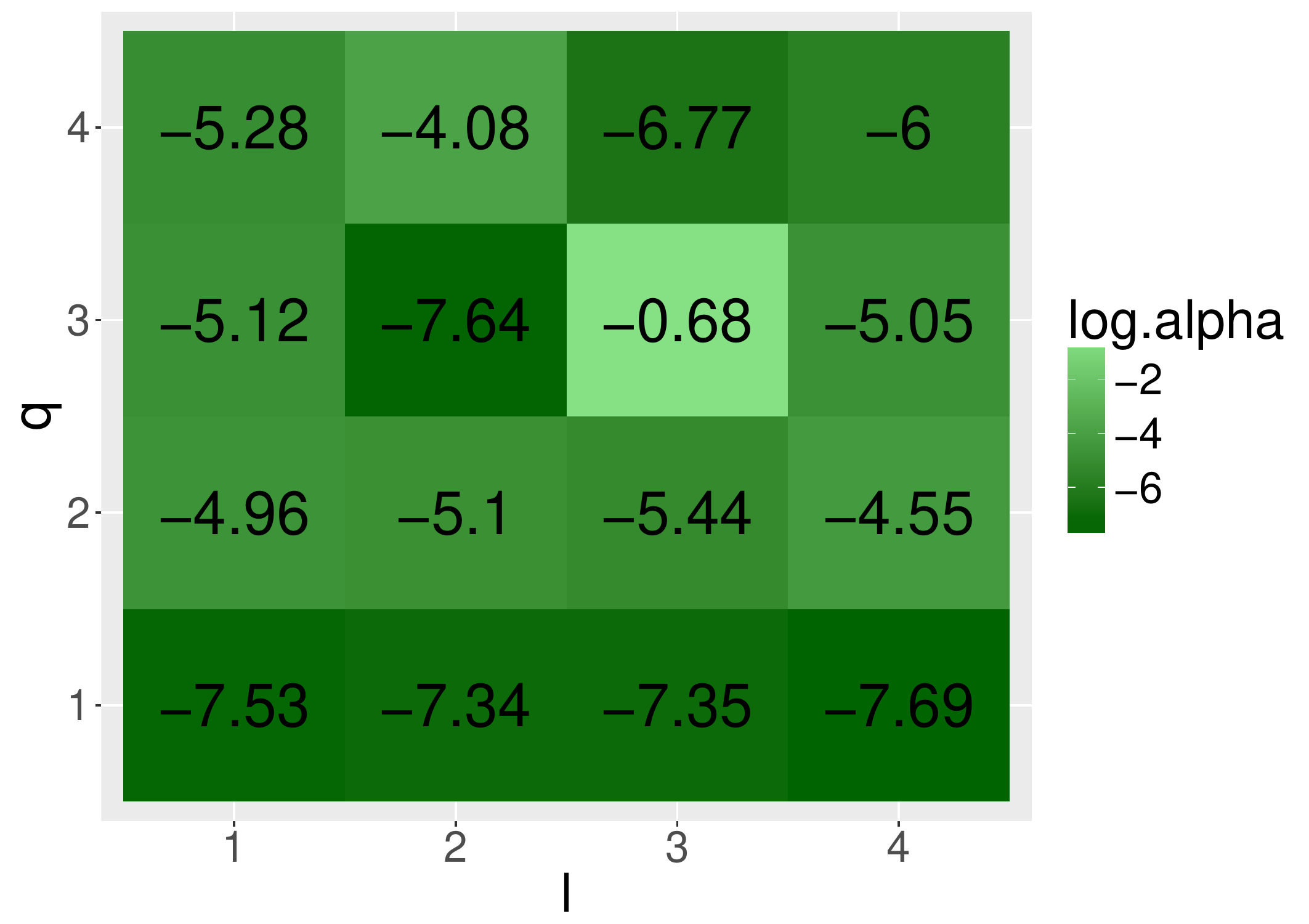}
  \caption{Enron: Logarithm of the mean values of the estimated intensities $\alpha^{(q,l)}$ with $Q=4$ groups.}
\label{enron_fig_heatmap}%
\end{figure}

\begin{landscape}
\begin{figure}
  \centering
    \includegraphics[width=\textwidth]{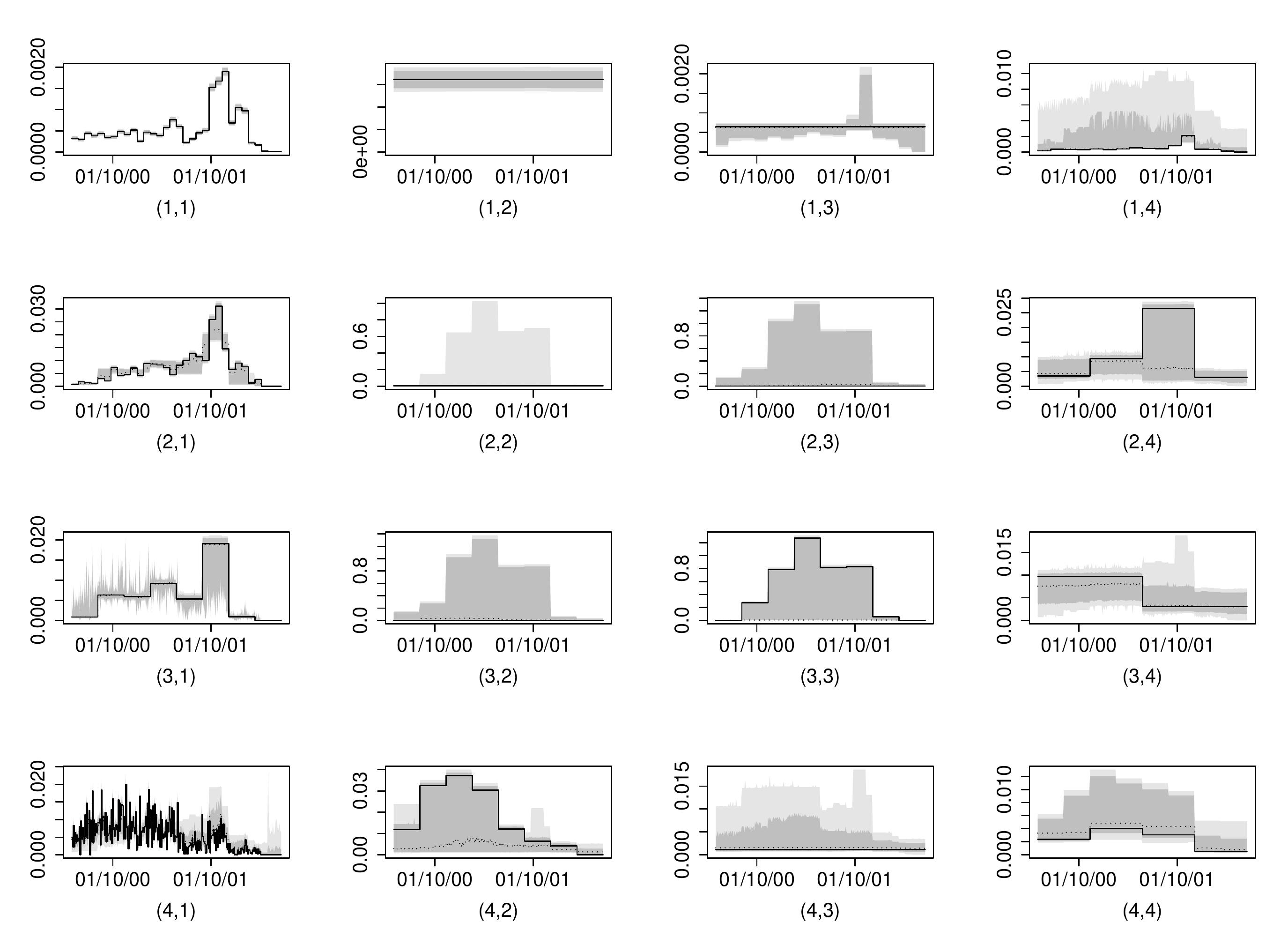}
  \caption{Enron: Estimated intensities  $\hat\alpha^{(q,l)}$ with $Q=4$ groups with bootstrap confidence intervals with confidence level 90\% (lightgrey) and 80\% (drakgrey) and the median bootstrap values (dotted lines).}
  \label{fig_enron_all_intensities}
\end{figure}
  \end{landscape}

\begin{figure}[h]
  \centering
  \includegraphics[width=\textwidth]{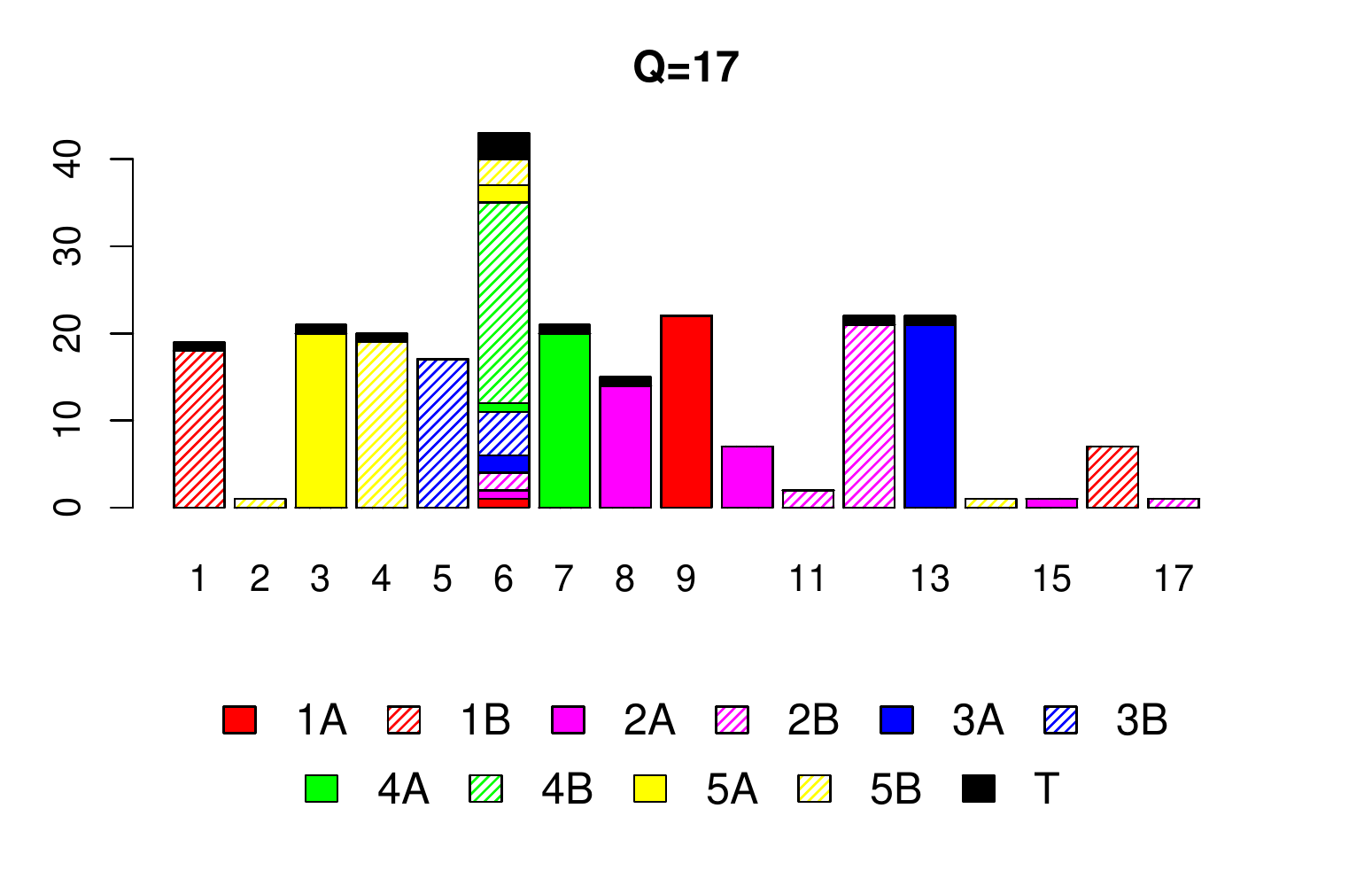}
   \caption{Primary school: clustering of the $242$ individuals  into $Q= 17$ groups.
    Vertical  bars    represent the   $Q$ clusters.
    Colours 
    indicate the   grades  and   the
    teachers, plain  and hatching  distinguish the  two classes  in the
    same  grade.}
  \label{primaryschool_tau}
\end{figure}

\begin{figure}[h]
  \centering
  \includegraphics[width=\textwidth]{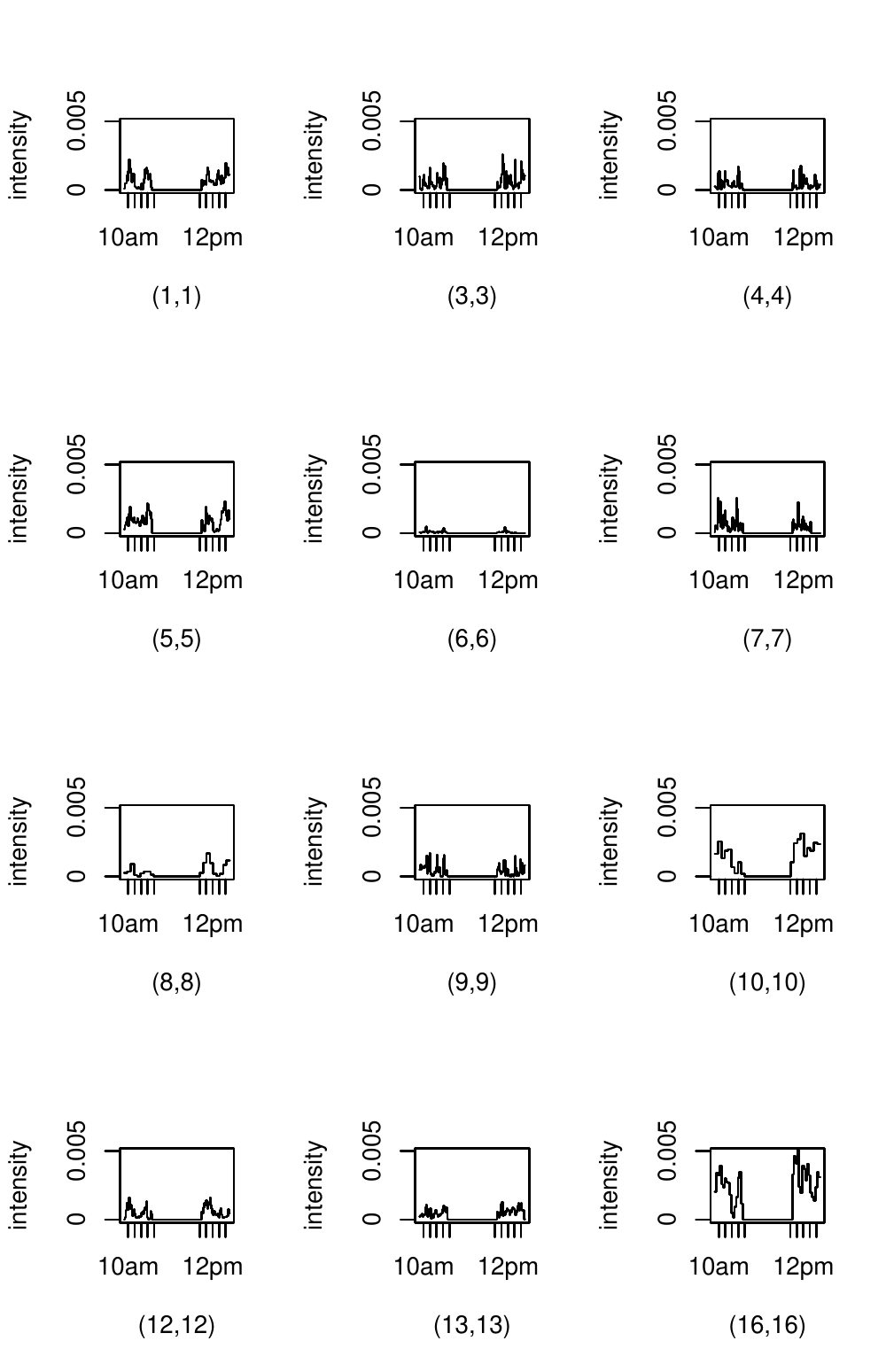}
   \caption{Primary school: Estimated intra-group intensities (plotted
     on the same $y$-scale).}
  \label{primaryschool_intra_group}
\end{figure}

\begin{figure}[h]
  \centering
  \includegraphics[width=\textwidth]{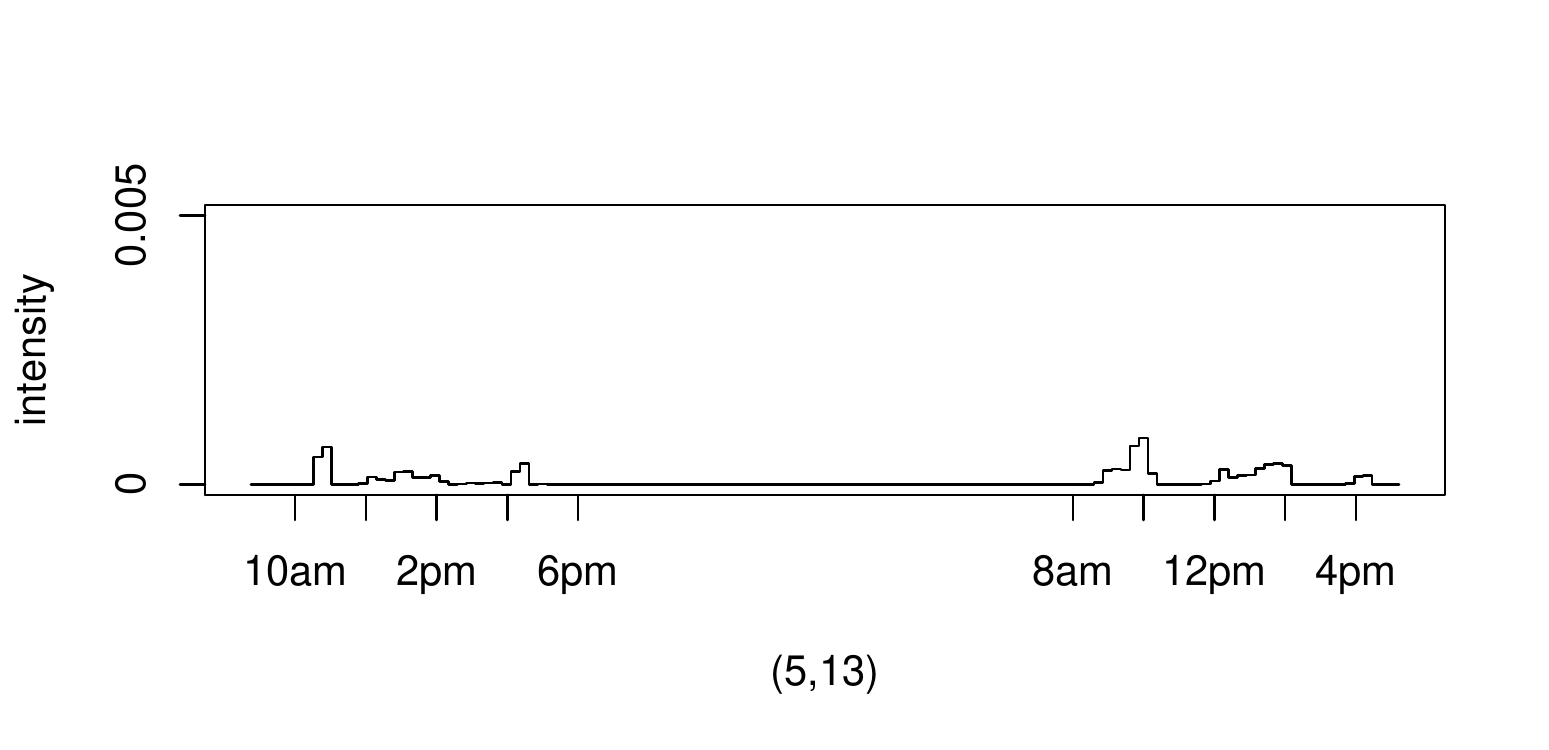}
   \caption{Primary school: Estimated inter-group intensity  between
      two classes of the same grade: classes $3A$ (group  13) and 
    $3B$ (group 5).}
  \label{primaryschool_samegrade}
\end{figure}

\begin{figure}[h]
  \centering
  \includegraphics[width=\textwidth]{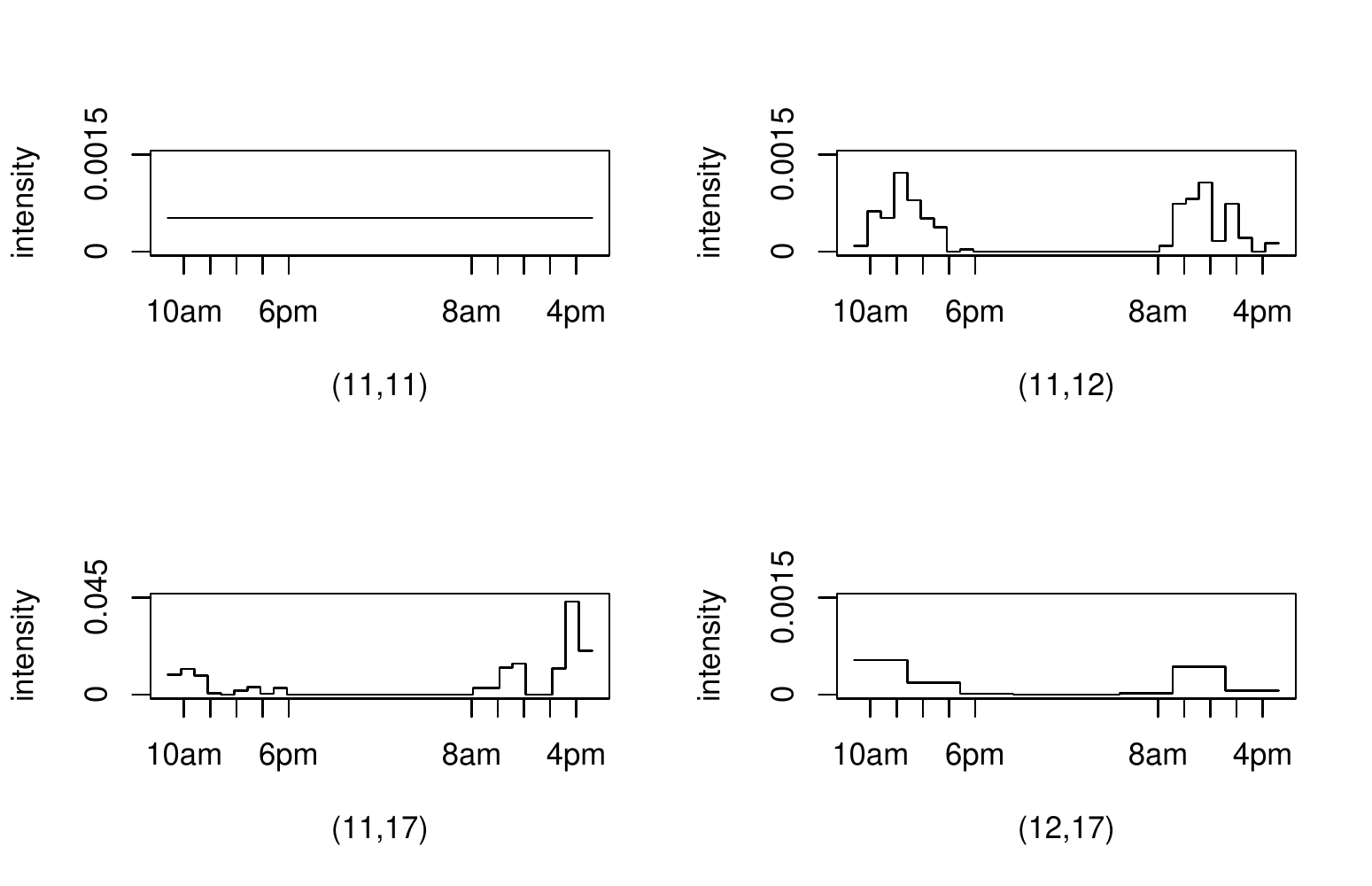}
  \caption{Primary school: Estimated intensities. 
    Example of class 2B splited into group 12 (with 21 pupils), group
    11 (with 2 pupils), and group 17 (with 
only one pupil).}
  \label{primaryschool_class_2B}
\end{figure}

\begin{figure}[h]
  \centering
 \includegraphics[width=\textwidth]{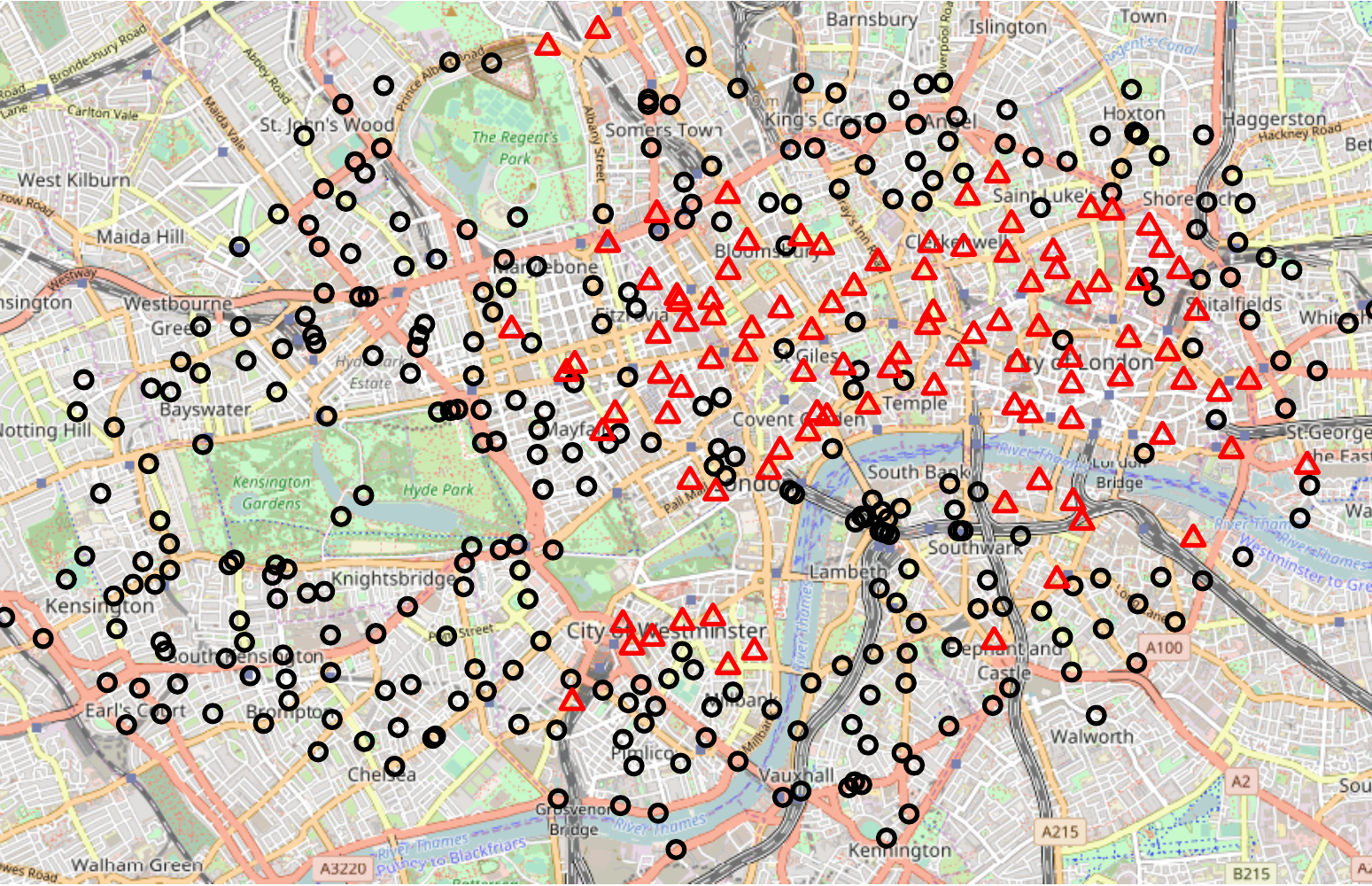}
  \caption{London bike sharing system: Geographic positions of the stations   
  and  clustering  into  two clusters (represented by  different  colors and symbols) obtained from the sparse model for day 1.}
  \label{fig:cycles_day1_sparse}
\end{figure}

\begin{figure}[h]
  \centering
  \includegraphics[width=\textwidth]{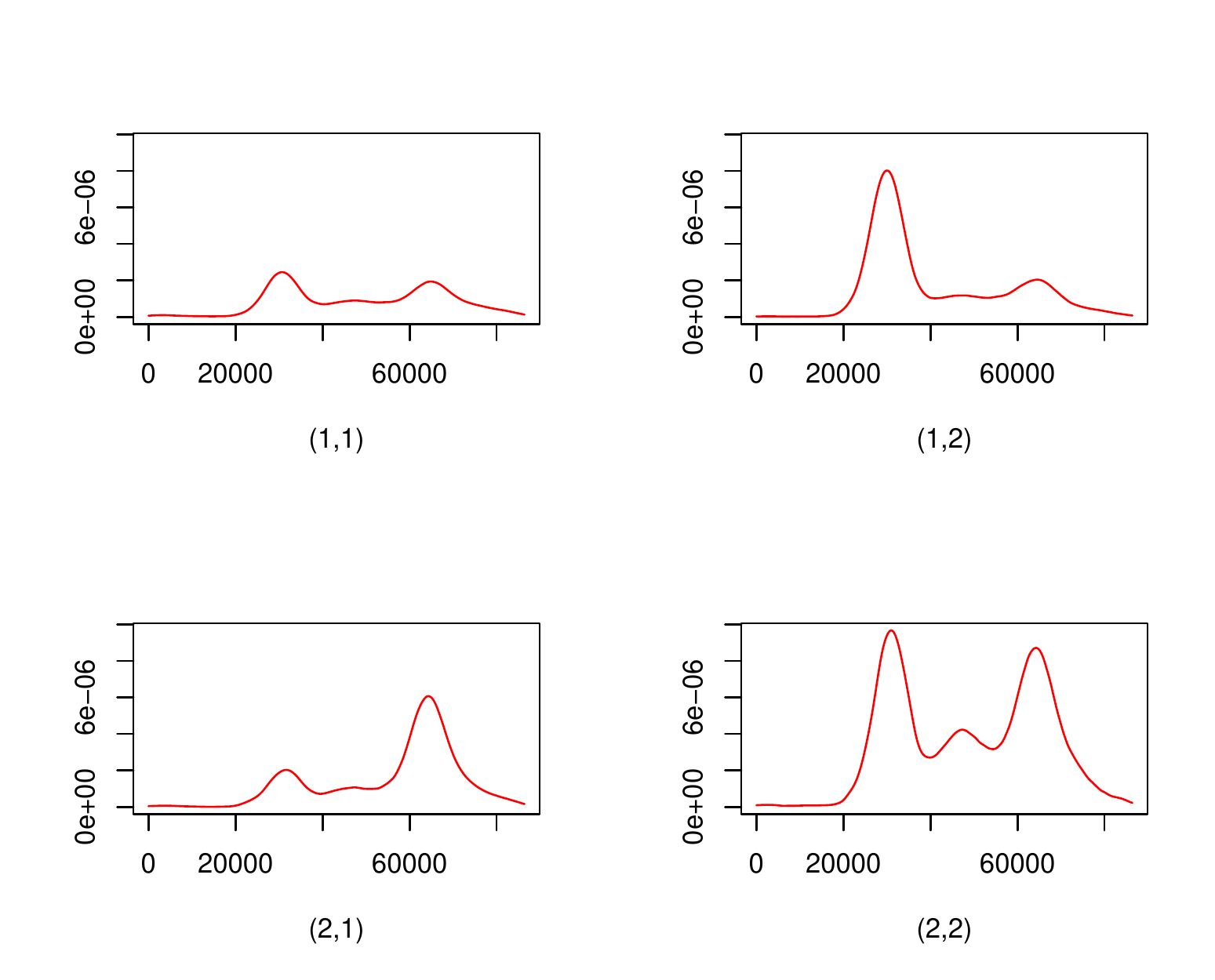}
  \caption{London bike sharing system: estimated intensities from the sparse model for day 1 (time on the $x$-axis is in seconds).}
    \label{fig:cycles_day1_intens_sparse}
\end{figure}

\begin{figure}[h]
  \centering
  \includegraphics[width=.7\textwidth]{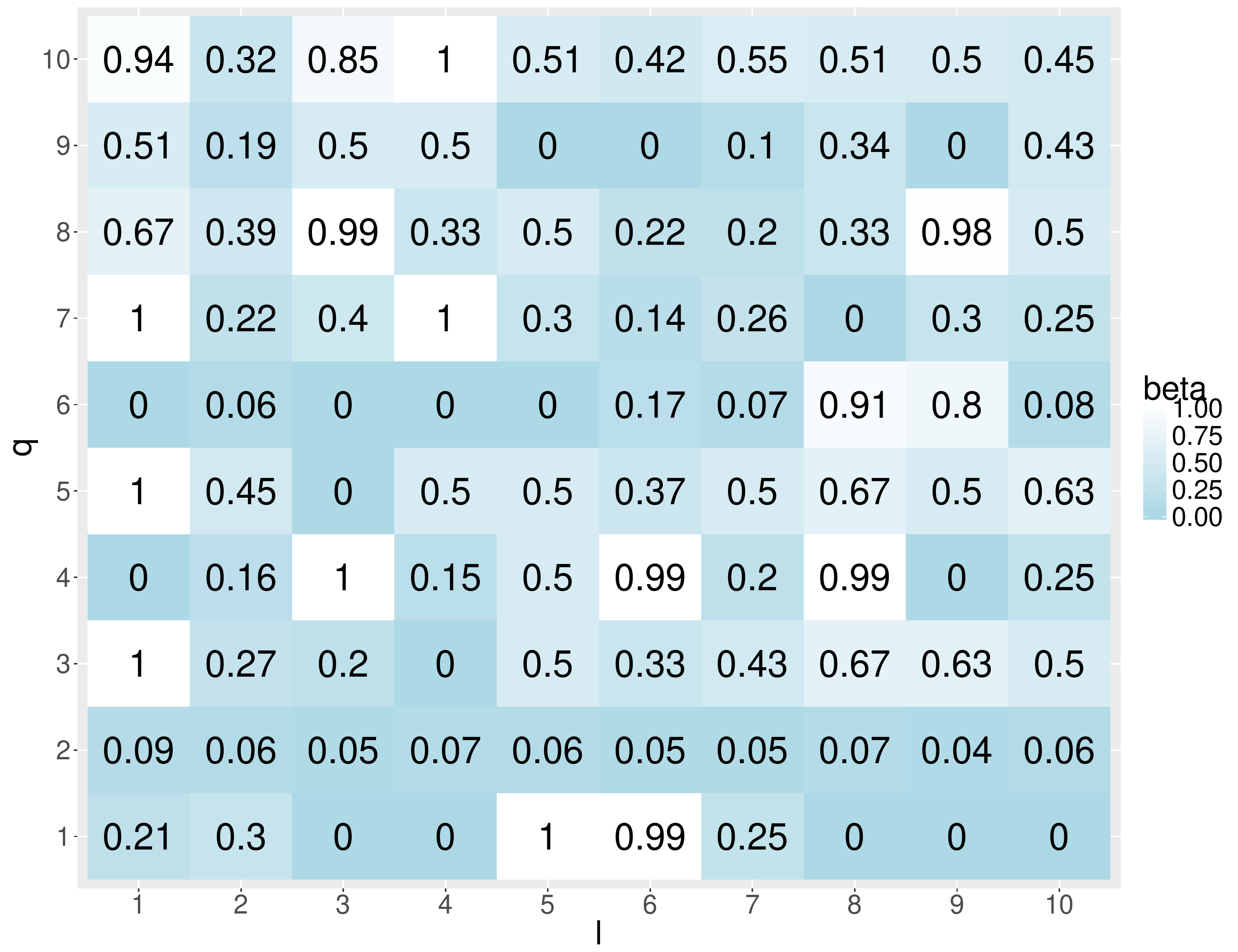}
  \caption{Enron: Estimated connectivity probabilities $\beta_{q,l}$ in the sparse model with $Q=10$ groups.}
    \label{enron_sparse_beta_10}
\end{figure}

\begin{figure}[h]
  \centering
  \includegraphics[width=.7\textwidth]{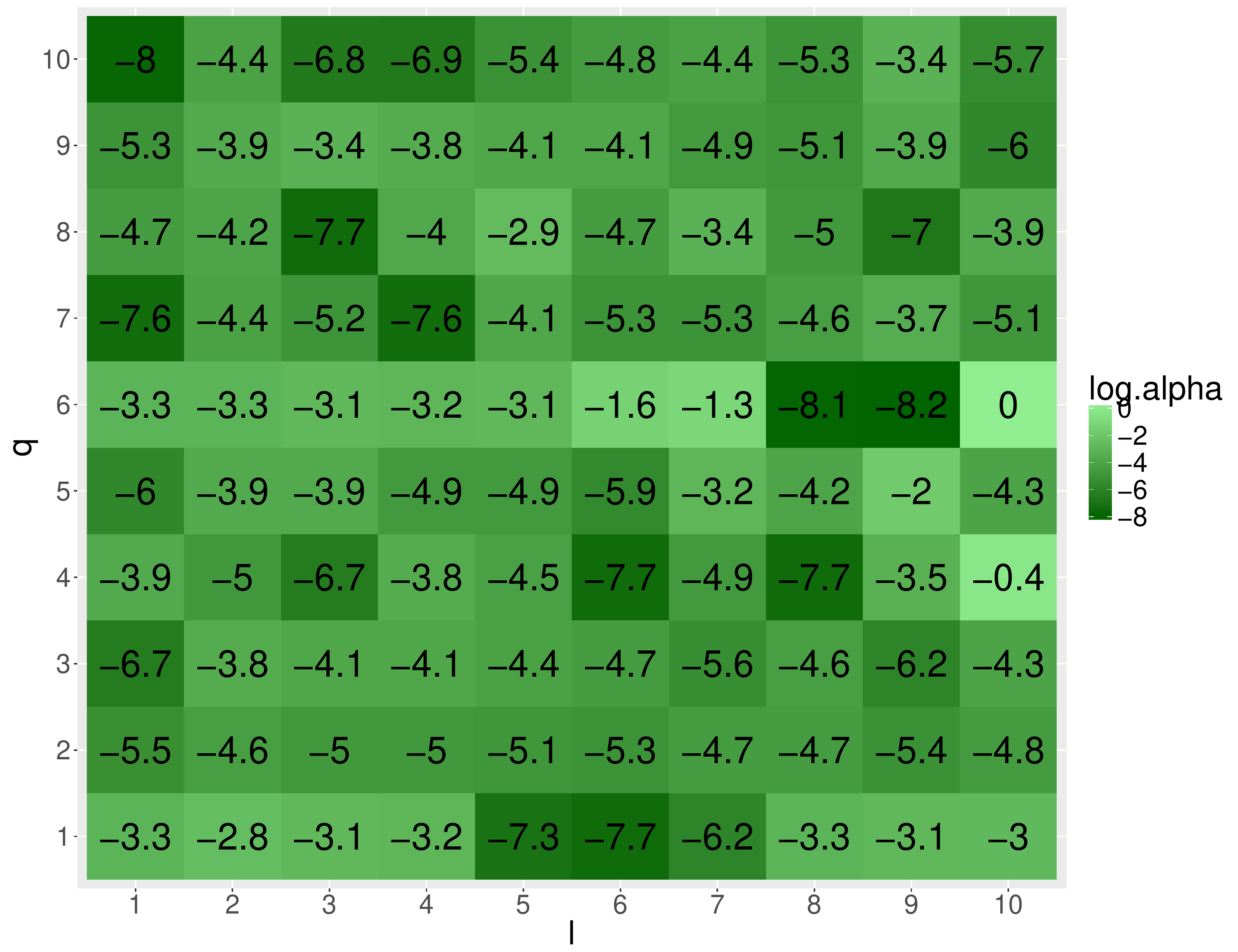}
  \caption{Enron: Logarithm of the mean values of the estimated intensities $\alpha^{(q,l)}$ in the sparse model with $Q=10$ groups.}
    \label{enron_sparse_alpha_Q10}
\end{figure}

\begin{landscape}
\begin{figure}
  \centering
  \includegraphics[width=\textwidth]{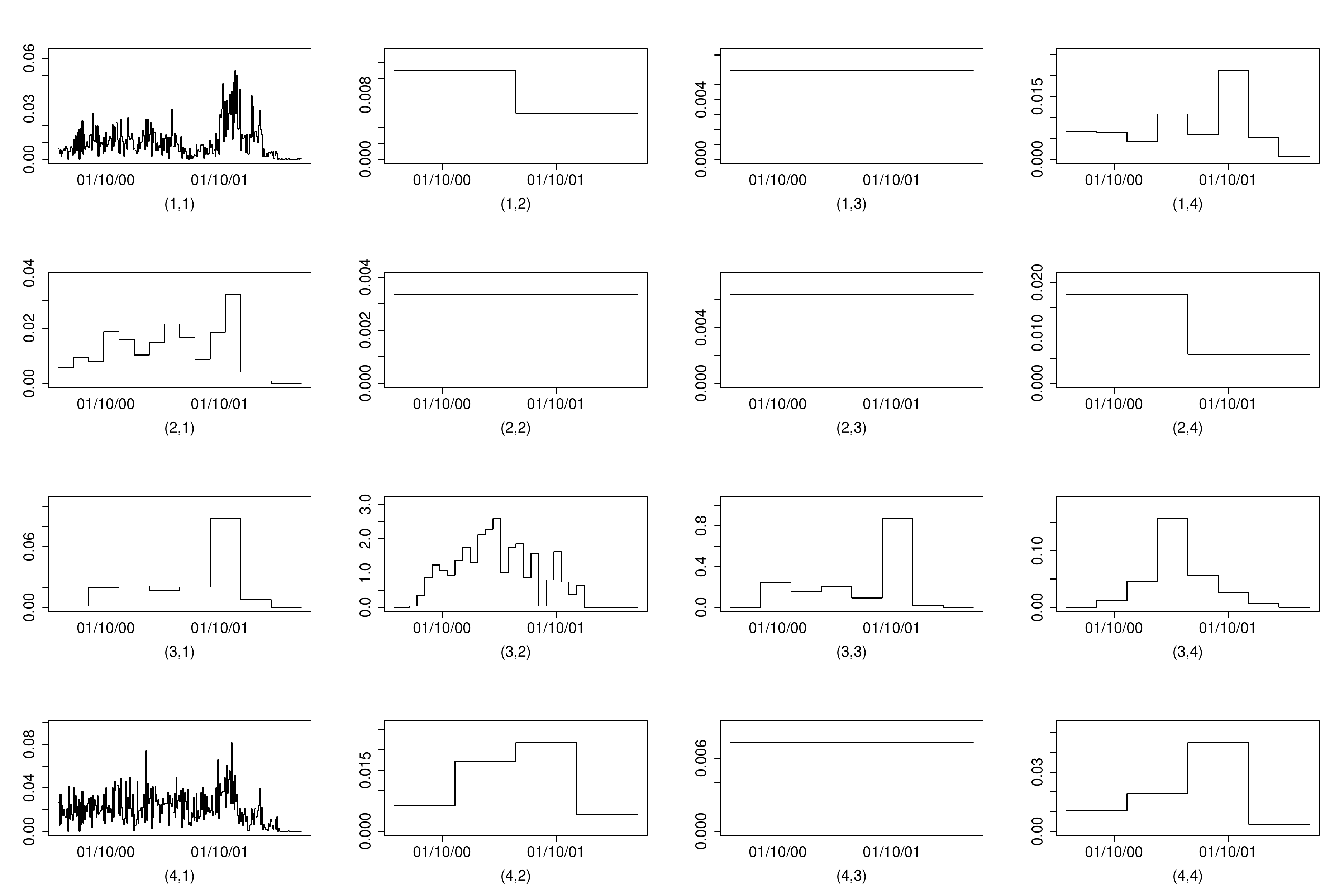}
  \caption{Enron: Estimated intensities  $\hat\alpha^{(q,l)}$ in the sparse model with $Q=4$ groups.}
    \label{enron_sparse_intens_Q4}
\end{figure}
  \end{landscape}

\begin{figure}[h]
  \centering
  \includegraphics[width=\textwidth]{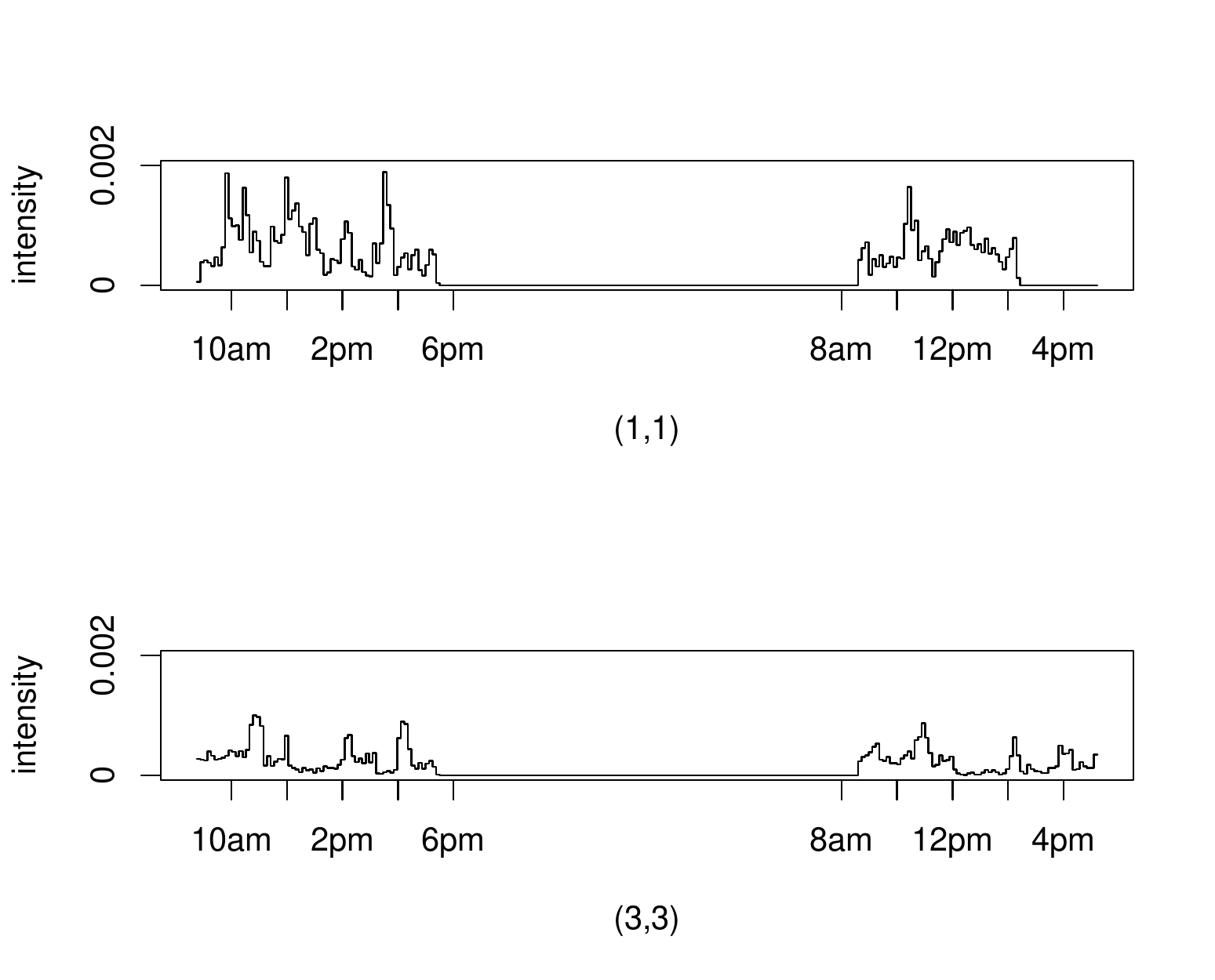}
   \caption{Primary school: Estimated  intensities in the sparse
     model. Example of group 1 (composed of class 4A and 6 pupils of
     class 4B) and group 3 (composed of the entire class 1A, with 17 pupils of class 4B
     and pupils from almost all other classes).}
  \label{primaryschool_sparse_class_4B_1A}
\end{figure}



\begin{thebibliography}{44}
\expandafter\ifx\csname natexlab\endcsname\relax\def\natexlab#1{#1}\fi

\bibitem[{Allman et~al.(2009)Allman, Matias \& Rhodes}]{AMR_AoS}
\textsc{Allman, E.}, \textsc{Matias, C.} \& \textsc{Rhodes, J.} (2009).
\newblock Identifiability of parameters in latent structure models with many
  observed variables.
\newblock \textit{Ann. Statist.} \textbf{37}, 3099--3132.

\bibitem[{Allman et~al.(2011)Allman, Matias \& Rhodes}]{AMR_JSPI}
\textsc{Allman, E.}, \textsc{Matias, C.} \& \textsc{Rhodes, J.} (2011).
\newblock Parameters identifiability in a class of random graph mixture models.
\newblock \textit{J. Stat. Plan. Inference} \textbf{141}, 1719--1736.

\bibitem[{Andersen et~al.(1993)Andersen, Borgan, Gill \&
  Keiding}]{Andersen_book}
\textsc{Andersen, P.~K.}, \textsc{Borgan, {\O}.}, \textsc{Gill, R.~D.} \&
  \textsc{Keiding, N.} (1993).
\newblock \textit{Statistical models based on counting processes}.
\newblock Springer Series in Statistics. Springer-Verlag, New York.

\bibitem[{Baraud \& Birg{\'e}(2009)}]{Birge_Baraud}
\textsc{Baraud, Y.} \& \textsc{Birg{\'e}, L.} (2009).
\newblock Estimating the intensity of a random measure by histogram type
  estimators.
\newblock \textit{Probab. Theory Related Fields} \textbf{143}, 239--284.

\bibitem[{Bickel et~al.(2011)Bickel, Chen \& Levina}]{bickel2011}
\textsc{Bickel, P.~J.}, \textsc{Chen, A.} \& \textsc{Levina, E.} (2011).
\newblock The method of moments and degree distributions for network models.
\newblock \textit{Ann. Statist.} \textbf{39}, 2280--2301.

\bibitem[{Biernacki et~al.(2000)Biernacki, Celeux \& Govaert}]{BCG00}
\textsc{Biernacki, C.}, \textsc{Celeux, G.} \& \textsc{Govaert, G.} (2000).
\newblock Assessing a mixture model for clustering with the integrated
  completed likelihood.
\newblock \textit{IEEE Trans. Pattern Anal. Machine Intel.} \textbf{22},
  719--725.

\bibitem[{Blundell et~al.(2012)Blundell, Beck \& Heller}]{Blundel2012}
\textsc{Blundell, C.}, \textsc{Beck, J.} \& \textsc{Heller, K.~A.} (2012).
\newblock Modelling reciprocating relationships with hawkes processes.
\newblock In \textit{Advances in Neural Information Processing Systems 25},
  F.~Pereira, C.~J.~C. Burges, L.~Bottou \& K.~Q. Weinberger, eds. Curran
  Associates, Inc., pp. 2600--2608.

\bibitem[{B{\"o}hning(1995)}]{Bohning}
\textsc{B{\"o}hning, D.} (1995).
\newblock A review of reliable maximum likelihood algorithms for semiparametric
  mixture models.
\newblock \textit{J. Stat. Plan. Inference} \textbf{47}, 5 -- 28.

\bibitem[{Bordes et~al.(2007)Bordes, Chauveau \& Vandekerkhove}]{Bordes}
\textsc{Bordes, L.}, \textsc{Chauveau, D.} \& \textsc{Vandekerkhove, P.}
  (2007).
\newblock A stochastic {EM} algorithm for a semiparametric mixture model.
\newblock \textit{Comput. Stat. Data Anal.} \textbf{51}, 5429 -- 5443.

\bibitem[{Butts(2008)}]{Butts_08}
\textsc{Butts, C.~T.} (2008).
\newblock A relational event framework for social action.
\newblock \textit{Sociol. Methodol.} \textbf{38}, 155--200.

\bibitem[{{CALO~Project}(2015)}]{enron_data}
\textsc{{CALO~Project}} (2015).
\newblock \url{https://www.cs.cmu.edu/$\sim$./enron/}.

\bibitem[{Celisse et~al.(2012)Celisse, Daudin \& Pierre}]{Celisse_etal}
\textsc{Celisse, A.}, \textsc{Daudin, J.-J.} \& \textsc{Pierre, L.} (2012).
\newblock Consistency of maximum-likelihood and variational estimators in the
  {S}tochastic {B}lock {M}odel.
\newblock \textit{Electron. J. Statist.} \textbf{6}, 1847--1899.

\bibitem[{Cho et~al.(2014)Cho, Galstyan, Brantingham \& Tita}]{Cho_etal14}
\textsc{Cho, Y.-S.}, \textsc{Galstyan, A.}, \textsc{Brantingham, P.~J.} \&
  \textsc{Tita, G.} (2014).
\newblock Latent self-exciting point process model for spatial-temporal
  networks.
\newblock \textit{Discrete Continuous Dyn. Syst. Ser. B} \textbf{19},
  1335--1354.

\bibitem[{Corneli et~al.(2016)Corneli, Latouche \& Rossi}]{Corneli}
\textsc{Corneli, M.}, \textsc{Latouche, P.} \& \textsc{Rossi, F.} (2016).
\newblock Exact {ICL} maximization in a non-stationary temporal extension of
  the stochastic block model for dynamic networks.
\newblock \textit{Neurocomputing} \textbf{192}, 81 -- 91.

\bibitem[{Dannemann(2012)}]{Dannemann}
\textsc{Dannemann, J.} (2012).
\newblock Semiparametric {H}idden {M}arkov models.
\newblock \textit{J. Comput. Graph. Statist.} \textbf{21}, 677--692.

\bibitem[{Daudin et~al.(2008)Daudin, Picard \& Robin}]{Daudin_etal08}
\textsc{Daudin, J.-J.}, \textsc{Picard, F.} \& \textsc{Robin, S.} (2008).
\newblock A mixture model for random graphs.
\newblock \textit{Statist. Comput.} \textbf{18}, 173--183.

\bibitem[{Dempster et~al.(1977)Dempster, Laird \& Rubin}]{DempsterLR}
\textsc{Dempster, A.~P.}, \textsc{Laird, N.~M.} \& \textsc{Rubin, D.~B.}
  (1977).
\newblock Maximum likelihood from incomplete data via the {EM} algorithm.
\newblock \textit{J. Roy. Statist. Soc. Ser. B} \textbf{39}, 1--38.

\bibitem[{DuBois et~al.(2013)DuBois, Butts \& Smyth}]{DuBois_Aistat}
\textsc{DuBois, C.}, \textsc{Butts, C.~T.} \& \textsc{Smyth, P.} (2013).
\newblock Stochastic blockmodeling of relational event dynamics.
\newblock In \textit{AISTATS}, vol.~31. JMLR Workshop and Conference
  Proceedings.

\bibitem[{Fox et~al.(2016)Fox, Short, Schoenberg, Coronges \&
  Bertozzi}]{Fox2016}
\textsc{Fox, E.~W.}, \textsc{Short, M.~B.}, \textsc{Schoenberg, F.~P.},
  \textsc{Coronges, K.~D.} \& \textsc{Bertozzi, A.~L.} (2016).
\newblock Modeling e-mail networks and inferring leadership using self-exciting
  point processes.
\newblock \textit{J Am Stat Assoc} \textbf{111}, 564--584.

\bibitem[{Gr{\'e}goire(1993)}]{gregoire}
\textsc{Gr{\'e}goire, G.} (1993).
\newblock Least squares cross-validation for counting process intensities.
\newblock \textit{Scand. J. Statist.} \textbf{20}, pp. 343--360.

\bibitem[{Guigour{\`e}s et~al.(2015)Guigour{\`e}s, Boull{\'e} \&
  Rossi}]{Guigoures}
\textsc{Guigour{\`e}s, R.}, \textsc{Boull{\'e}, M.} \& \textsc{Rossi, F.}
  (2015).
\newblock Discovering patterns in time-varying graphs: a triclustering
  approach.
\newblock \textit{Adv. Data Anal. Classif.} , 1--28.

\bibitem[{Holme(2015)}]{Holme_review}
\textsc{Holme, P.} (2015).
\newblock Modern temporal network theory: a colloquium.
\newblock \textit{Eur. Phys. J. B} \textbf{88}, 234.

\bibitem[{Hubert \& Arabie(1985)}]{HA1985}
\textsc{Hubert, L.} \& \textsc{Arabie, P.} (1985).
\newblock Comparing partitions.
\newblock \textit{J. Classif.} \textbf{2}, 193--218.

\bibitem[{Jordan et~al.(1999)Jordan, Ghahramani, Jaakkola \&
  Saul}]{Jordan_etal}
\textsc{Jordan, M.}, \textsc{Ghahramani, Z.}, \textsc{Jaakkola, T.} \&
  \textsc{Saul, L.} (1999).
\newblock An introduction to variational methods for graphical models.
\newblock \textit{Mach. Learn.} \textbf{37}, 183--233.

\bibitem[{Klimt \& Yang(2004)}]{Enron}
\textsc{Klimt, B.} \& \textsc{Yang, Y.} (2004).
\newblock The {E}nron {C}orpus: A new dataset for email classification
  research.
\newblock In \textit{Machine Learning: ECML 2004}, J.-F. Boulicaut,
  F.~Esposito, F.~Giannotti \& D.~Pedreschi, eds., vol. 3201 of \textit{Lecture
  Notes in Computer Science}. Springer Berlin Heidelberg.

\bibitem[{Linderman \& Adams(2014)}]{Linderman14}
\textsc{Linderman, S.} \& \textsc{Adams, R.} (2014).
\newblock Discovering latent network structure in point process data.
\newblock In \textit{Proceedings of the 31st International Conference on
  Machine Learning (ICML-14)}, E.~P. Xing \& T.~Jebara, eds., vol.~32.

\bibitem[{Mariadassou et~al.(2010)Mariadassou, Robin \&
  Vacher}]{Mariadassou_10}
\textsc{Mariadassou, M.}, \textsc{Robin, S.} \& \textsc{Vacher, C.} (2010).
\newblock Uncovering latent structure in valued graphs: a variational approach.
\newblock \textit{Ann. Appl. Stat.} \textbf{4}, 715--42.

\bibitem[{Matias \& Miele(to appear)}]{Matias_Miele}
\textsc{Matias, C.} \& \textsc{Miele, V.} (to appear).
\newblock Statistical clustering of temporal networks through a dynamic
  stochastic block model.
\newblock \textit{J. R. Stat. Soc. Ser. B. Stat. Methodol.} Available at
  ArXiv:1506.07464.

\bibitem[{Matias \& Robin(2014)}]{MR_review}
\textsc{Matias, C.} \& \textsc{Robin, S.} (2014).
\newblock Modeling heterogeneity in random graphs through latent space models:
  a selective review.
\newblock \textit{Esaim Proc. \& Surveys} \textbf{47}, 55--74.

\bibitem[{Perry \& Wolfe(2013)}]{Perry_Wolfe_JRSSB}
\textsc{Perry, P.~O.} \& \textsc{Wolfe, P.~J.} (2013).
\newblock Point process modelling for directed interaction networks.
\newblock \textit{J. R. Stat. Soc. Ser. B. Stat. Methodol.} \textbf{75},
  821--849.

\bibitem[{Ramlau-Hansen(1983)}]{Ramlau}
\textsc{Ramlau-Hansen, H.} (1983).
\newblock Smoothing counting process intensities by means of kernel functions.
\newblock \textit{Ann. Statist.} \textbf{11}, pp. 453--466.

\bibitem[{Randriamanamihaga et~al.(2014)Randriamanamihaga, C{\^o}me, Oukhellou
  \& Govaert}]{velib}
\textsc{Randriamanamihaga, A.~N.}, \textsc{C{\^o}me, E.}, \textsc{Oukhellou,
  L.} \& \textsc{Govaert, G.} (2014).
\newblock Clustering the {V}{\'e}lib' dynamic origin/destination flows using a
  family of {P}oisson mixture models.
\newblock \textit{Neurocomputing} \textbf{141}, 124 -- 138.

\bibitem[{Rastelli et~al.(2017)Rastelli, Latouche \& Friel}]{rastelli}
\textsc{Rastelli, R.}, \textsc{Latouche, P.} \& \textsc{Friel, N.} (2017).
\newblock Choosing the number of groups in a latent stochastic block model for
  dynamic networks.
\newblock Available at arXiv:1702.01418.

\bibitem[{Reynaud-Bouret(2006)}]{patricia}
\textsc{Reynaud-Bouret, P.} (2006).
\newblock Penalized projection estimators of the {A}alen multiplicative
  intensity.
\newblock \textit{Bernoulli} \textbf{12}, 633--661.

\bibitem[{Robin et~al.(2007)Robin, Bar-Hen, Daudin \& Pierre}]{Robin_kerfdr}
\textsc{Robin, S.}, \textsc{Bar-Hen, A.}, \textsc{Daudin, J.-J.} \&
  \textsc{Pierre, L.} (2007).
\newblock A semi-parametric approach for mixture models: Application to local
  false discovery rate estimation.
\newblock \textit{Comput. Stat. Data Anal.} \textbf{51}, 5483 -- 5493.

\bibitem[{Snijders \& van Duijn(1997)}]{Snijders_vDuijn}
\textsc{Snijders, T.} \& \textsc{van Duijn, M.} (1997).
\newblock Simulation for statistical inference in dynamic network models.
\newblock In \textit{Simulating Social Phenomena}, R.~Conte, R.~Hegselmann \&
  P.~Terna, eds., vol. 456 of \textit{Lecture Notes in Economics and
  Mathematical Systems}. Springer Berlin Heidelberg, pp. 493--512.

\bibitem[{Snijders et~al.(2010)Snijders, Koskinen \&
  Schweinberger}]{snijders2010}
\textsc{Snijders, T. A.~B.}, \textsc{Koskinen, J.} \& \textsc{Schweinberger,
  M.} (2010).
\newblock Maximum likelihood estimation for social network dynamics.
\newblock \textit{Ann. Appl. Stat.} \textbf{4}, 567--588.

\bibitem[{Stehl{\'e} et~al.(2011)Stehl{\'e}, Voirin, Barrat, Cattuto, Isella,
  Pinton \& et~al.}]{Stehle}
\textsc{Stehl{\'e}, J.}, \textsc{Voirin, N.}, \textsc{Barrat, A.},
  \textsc{Cattuto, C.}, \textsc{Isella, L.}, \textsc{Pinton, J.-F.} \&
  \textsc{et~al.} (2011).
\newblock High-resolution measurements of face-to-face contact patterns in a
  primary school.
\newblock \textit{PLoS ONE} \textbf{6}, e23176.

\bibitem[{{Transport for London}(2016)}]{bikes}
\textsc{{Transport for London}} (2016).
\newblock Cycle hire usage data 2012 - 2015.
\newblock http://cycling.data.tfl.gov.uk/.

\bibitem[{Vu et~al.(2011)Vu, Hunter, Smyth \& Asuncion}]{Vu_etal11}
\textsc{Vu, D.~Q.}, \textsc{Hunter, D.}, \textsc{Smyth, P.} \&
  \textsc{Asuncion, A.~U.} (2011).
\newblock Continuous-time regression models for longitudinal networks.
\newblock In \textit{Adv Neural Inf Process Syst 24}, J.~Shawe-Taylor,
  R.~Zemel, P.~Bartlett, F.~Pereira \& K.~Weinberger, eds. Curran Associates,
  Inc., pp. 2492--2500.

\bibitem[{Wasserman(1980{\natexlab{a}})}]{Wasserman_80b}
\textsc{Wasserman, S.} (1980{\natexlab{a}}).
\newblock Analyzing social networks as stochastic processes.
\newblock \textit{J Am Stat Assoc} \textbf{75}, 280--294.

\bibitem[{Wasserman(1980{\natexlab{b}})}]{Wasserman_80a}
\textsc{Wasserman, S.} (1980{\natexlab{b}}).
\newblock A stochastic model for directed graphs with transition rates
  determined by reciprocity.
\newblock \textit{Sociol. Methodol.} \textbf{11}, pp. 392--412.

\bibitem[{Xu \& Hero(2014)}]{Xu_Hero_IEEE}
\textsc{Xu, K.} \& \textsc{Hero, A.} (2014).
\newblock Dynamic stochastic blockmodels for time-evolving social networks.
\newblock \textit{IEEE J. Sel. Topics Signal Process.} \textbf{8}, 552--562.

\bibitem[{Yang et~al.(2011)Yang, Chi, Zhu, Gong \& Jin}]{Yang_etal_ML11}
\textsc{Yang, T.}, \textsc{Chi, Y.}, \textsc{Zhu, S.}, \textsc{Gong, Y.} \&
  \textsc{Jin, R.} (2011).
\newblock Detecting communities and their evolutions in dynamic social
  networks—a {B}ayesian approach.
\newblock \textit{Mach. Learn.} \textbf{82}, 157--189.

\end{thebibliography}

\begin{thebibliography}{4}
\providecommand{\natexlab}[1]{#1}
\providecommand{\url}[1]{\texttt{#1}}
\expandafter\ifx\csname urlstyle\endcsname\relax
  \providecommand{\doi}[1]{doi: #1}\else
  \providecommand{\doi}{doi: \begingroup \urlstyle{rm}\Url}\fi

\bibitem[Allman et~al.(2011)Allman, Matias, and Rhodes]{AMR_JSPI}
E.~Allman, C.~Matias, and J.~Rhodes.
\newblock Parameters identifiability in a class of random graph mixture models.
\newblock \emph{J. Stat. Plan. Inference}, 141:\penalty0 1719--1736, 2011.

\bibitem[Daley and Vere-Jones(2003)]{Daley_Vere}
D.~J. Daley and D.~Vere-Jones.
\newblock \emph{An introduction to the theory of point processes. {V}ol. {I}}.
\newblock Probability and its Applications (New York). Springer-Verlag, New
  York, second edition, 2003.
\newblock Elementary theory and methods.

\bibitem[SocioPatterns(2015)]{sociopatterns}
SocioPatterns.
\newblock http://www.sociopatterns.org/, 2015.

\bibitem[Stehl{\'e} et~al.(2011)Stehl{\'e}, Voirin, Barrat, Cattuto, Isella,
  Pinton, and et~al.]{Stehle}
J.~Stehl{\'e}, N.~Voirin, A.~Barrat, C.~Cattuto, L.~Isella, J.-F. Pinton, and
  et~al.
\newblock High-resolution measurements of face-to-face contact patterns in a
  primary school.
\newblock \emph{PLoS ONE}, 6\penalty0 (8):\penalty0 e23176, 2011.

\end{thebibliography}
\end{document}